%% file: paper.tex
\documentclass[11pt]{article}
\usepackage{natbib}
\usepackage[linesnumbered,ruled,vlined]{algorithm2e}
\usepackage{verbatim}
\usepackage{graphicx}

\usepackage{amsmath}
\usepackage{tikz}
\usetikzlibrary{positioning}
\usepackage{amsmath,amsfonts,amsthm}
\usepackage{thmtools}
\usepackage{xspace}
\usepackage{xcolor}
\usepackage{url}
\usepackage{caption}
\usepackage{subcaption}
\usepackage{tikz}
\usepackage{paralist}
\usepackage{libertine}
\usepackage{dsfont}
\usepackage[
bookmarksnumbered,
unicode,
pagebackref,
colorlinks=true,
urlcolor=cyan,
linkcolor=magenta,
citecolor=blue]
{hyperref}
\usepackage[nameinlink]{cleveref}
\usepackage{enumitem}
\usepackage{todonotes}

\usepackage{thm-restate}

\usepackage[margin=1in]{geometry}
\usepackage{txfonts}
\usepackage[T1]{fontenc}
\usepackage{newtxtext,newtxmath}

\usepackage{silence}
\theoremstyle{definition}

\newtheorem{theorem}{Theorem}[section]
\newtheorem{lemma}[theorem]{Lemma}
\newtheorem{proposition}[theorem]{Proposition}
\newtheorem{corollary}[theorem]{Corollary}

\newcommand{\pfold}{\text{fold}}
\newcommand{\redeps}{$\text{red}_{\text{EPS}}(C)$}
\newcommand{\redcon}{$\text{red}_{\text{concise}}(C)$}
\newcommand{\redps}{$\text{red}_{\text{PS}}(C)$}
\newcommand{\redbgt}{$\text{red}_{\text{BGT}}(A)$}
\newcommand{\redrs}{$\text{red}_{\text{RS}}(A)$}

\newcommand{\trep}{\text{rep}}
\newcommand{\tsma}{\text{small}}
\newcommand{\tmed}{\text{medium}}
\newcommand{\tbig}{\text{big}}
\newcommand{\tsum}{\textsc{clause\_sum}}
\newcommand{\tind}{\textsc{ind}}
\newcommand{\tlev}{\textsc{level}}
\newcommand{\tall}{\textsc{all\_jobs}}
\newcommand{\tjob}{\textsc{jobs}}
\newcommand{\trem}{\textsc{to\_remove}}
\newcommand{\tfac}{\textsc{factor}}
\newcommand{\tnex}{\textsc{next}}
\newcommand{\tper}{\textsc{period}}
\newcommand{\tpers}{\textsc{periods}}

\SetKw{Break}{break}

\title{NP-Hardness and a PTAS for the Pinwheel Problem}
\author{
    Robert Kleinberg\thanks{Cornell University. Emails: {\tt rdk@cs.cornell.edu, abm247@cornell.edu}} \and
    Ahan Mishra\footnotemark[1]
}
\date{}

\begin{document}
\pagenumbering{gobble}
\begin{titlepage}
\maketitle

\input{abstract}

\end{titlepage}

\clearpage

\pagenumbering{arabic}
\clearpage

\input{introduction}

\input{preliminaries}
\input{ptas}
\input{hardness}
\input{variants}

\appendix

\input{ptas-proof}
\input{hardness-proof}

\bibliographystyle{acm}
\bibliography{ref}

\end{document}

%% file: abstract.tex
\begin{abstract}

In the pinwheel problem, one is given an $m$-tuple of positive integers $(a_1, \ldots, a_m)$ and asked whether the integers can be partitioned into $m$ color classes $C_1,\ldots,C_m$ such that every interval of length $a_i$ has non-empty intersection with $C_i$, for $i = 1, 2, \ldots, m$. It was a long-standing open question whether the pinwheel problem is NP-hard. We affirm a prediction of \citet{holte1989pinwheel} by demonstrating, for the first time, NP-hardness of the pinwheel problem. This enables us to prove NP-hardness for a host of other problems considered in the literature: pinwheel covering, bamboo garden trimming, windows scheduling, recurrent scheduling, and the constant gap problem. On the positive side, we develop a PTAS for an approximate version of the pinwheel problem. Previously, the best approximation factor known to be achievable in polynomial time was $\frac{9}{7}$. 

\end{abstract}

%% file: introduction.tex
\section{Introduction}
\label{sec:intro}

In the pinwheel problem, one is given an $m$-tuple of positive numbers called ``periods.'' For a given tuple $(a_1,\ldots,a_m)$, the problem is to decide whether the integers can be partitioned into $m$ color classes $C_1,\ldots,C_m$ such that every interval of length $a_i$ has non-empty intersection with $C_i$, for $i=1,2,\ldots,m$.
This problem models scheduling $m$ repeating tasks each having an upper bound on the idle time between consecutive occurrences: assigning the integer $t$ to color class $C_i$ corresponds to scheduling task $i$ at time $t$, and the periods are interpreted as upper bounds on the amount of idle time allowed between consecutive occurrences of the respective tasks. Owing to this analogy, pinwheel instances having a solution that satisfies the designated constraints are called ``schedulable.''

Studied since the late 1980's, pinwheel problems and their variants have found applications in areas ranging from packet scheduling in communication networks~\citep{holte1989pinwheel}, 
to machine maintenance~\citep{gkasieniec2017bamboo,gkasieniec2024perpetual,wei1983periodic},
to security games~\citep{kempe2018quasi}, 
to spaced repetition of review material in education~\citep{novikoff2012education}.

The computational complexity of solving pinwheel problems is surprisingly ill-understood, given their importance. As stated above, the problem involves searching over an infinite set of colorings of the integers, but it is easily seen to belong to \textsc{PSPACE} by reducing
to the problem of testing for a directed cycle in a
graph (of size exponential in the input description) 
whose adjacency relation is computable in small space \citep{holte1989pinwheel}. The pinwheel problem is not known to belong to any complexity class below \textsc{PSPACE}.

It was a long-standing open question whether the pinwheel problem is NP-hard.  
\citet{korst1996scheduling} showed 
that the pinwheel problem with \emph{exact
periods} --- i.e., the variant that 
requires consecutive occurrences of task $i$ 
to be spaced exactly $a_i$ apart --- is 
NP-hard. This hardness result was shown 
by \citet{bar2007windows} 
to imply the NP-hardness of a compactly 
encoded pinwheel problem in which periods
have multiplicities, and the multiplicities 
may be encoded in binary. A variant of the pinwheel problem 
in which jobs are to be scheduled only finitely many times 
is also known to be NP-hard \citep{finite}. 

For the pinwheel problem itself, an unpublished 
manuscript by \citet{jacobs2014new}
presents a proof that the problem has no
pseudo-polynomial time algorithm unless
SAT is solvable in quasi-polynomial time
by a randomized algorithm. A work by \cite{kobayashihardness} shows that the pinwheel problem has no polynomial-time algorithm unless SAT is solvable in quasi-polynomial-time by a deterministic algorithm. However, before this paper, it was not known whether a polynomial-time algorithm for the pinwheel problem could be eliminated based only on the assumption that $P \neq NP$.

In light of their hardness, one may wonder 
about the approximability of pinwheel problems. 
To make the question meaningful, one can use a reformulation of the pinwheel problem, due to \citet{kawamura}, which allows for fractional periods. The (potentially fractional) period vector $(a_1,\ldots,a_m)$ is considered schedulable if there exists a partition of the integers into color classes $C_1,C_2,\ldots,C_m$ such that in each finite subinterval of length $L$, the number of elements of color $C_i$ is at least $\lfloor L/a_i \rfloor$, for $i=1,2,\ldots,m$. (Note that when $(a_1,\ldots,a_m) \in \mathbb{N}^m$, the two definitions of schedulability coincide.) For a given approximation factor $\alpha > 1$, one may now consider the $\alpha$-approximate pinwheel problem, 
i.e.~the problem that requires deciding whether an integer $m$-tuple $(a_1,\ldots,a_m)$ 
is not schedulable or the tuple $(\alpha \cdot a_1, \ldots, \alpha \cdot a_m)$ is schedulable. 

A similar problem has been studied in the context of the Bamboo Garden Trimming (BGT) problem. In particular, in the BGT problem, one is given an $m$-tuple of positive numbers called ``growth rates''. For a given tuple $(h_1, \dots, h_m)$, the problem is to produce a partition of the integers into $m$ color classes $C_1, \dots, C_m$ which minimizes $\max_{i \in [1, m]} h_i D_i$ where $D_i$ is the maximum distance between any two consecutive occurrences of task $C_i$. Several prior works on state-of-the-art BGT approximation guarantees rely on producing an efficient algorithm that decides whether a pinwheel instance $(a_1, \dots, a_m)$ is unschedulable or $(\lfloor \alpha \cdot a_1 \rfloor, \dots, \lfloor \alpha \cdot a_m \rfloor)$ is schedulable. Specifically, solving this problem for any particular $\alpha$ implies an $\alpha$-approximation for BGT. Prior works have achieved approximation factors of $2$ \citep{gkasieniec2017bamboo}, $\frac{32000}{16947}$ \citep{bgt188}, $\frac{12}{7}$ \citep{bgt127}, $1.6 + \epsilon$ \citep{gkasieniec2024perpetual}, $\frac{10}{7}$ \citep{bgt107}, $\frac{4}{3}$ \citep{kawamura}, and $\frac{9}{7}$ \citep{mishra}. Due to the monotonicity property of pinwheel scheduling \citep{kawamura} with $\lfloor \alpha \cdot a_i \rfloor \leq \alpha \cdot a_i$, an efficient algorithm for the version of the problem with floors implies one for our problem. Prior to our work,
no polynomial-time approximation algorithm was 
known for any value of $\alpha$ smaller than~$\frac{9}{7}$. 

\subsection{Our results and techniques}

In this paper, we take two steps toward resolving the computational complexity 
of the pinwheel problem. First, we prove that it admits a PTAS: for any constant
$\epsilon > 0$, the $(1+\epsilon)$-approximate pinwheel problem can be solved in 
polynomial time. In fact, the algorithm we present is an EPTAS, meaning that its
running time on inputs of size $m$ is $O_{\epsilon}(m^c)$ where 
$c$ is a universal constant independent of $\epsilon$, while the
$O_{\epsilon}(\cdot)$ masks a constant that depends on $\epsilon$.
Second, we prove that the pinwheel problem is NP-Hard.
Our hardness result is based on a reduction that always
produces pinwheel problem instances whose  ``density'' 
$D(A) = \sum_{i=1}^m \frac{1}{a_i}$ equals 1. 
As a consequence of this property, the same reduction
can also be used to show the NP-hardness of many related
problems: pinwheel covering \citep{covering}, 
bamboo garden trimming \citep{gkasieniec2017bamboo},
windows scheduling \citep{windowsoriginal}, 
recurrent scheduling \citep{recharging}, and the 
constant gap problem \citep{kubiak2004fair}. 

\paragraph{Technical overview of PTAS:}
Our PTAS is based on partitioning the tasks into three types --- big, medium, 
and small --- based on their periods. Tasks with shorter periods need to be 
scheduled more frequently and are accordingly classified as big or medium. The
cutoffs are chosen so that the total number of big and medium tasks is bounded 
by a constant depending only on the approximation parameter $\epsilon$. 
The algorithm uses a brute-force iteration over all $p$-periodic schedules
of big tasks and gaps (i.e., idle time slots), using a simple density-based 
criterion to decide whether the schedule contains enough gaps to accommodate the
medium and small tasks. 

A similar partial-enumeration idea was used by 
\citet{recharging} in their work on the Recharging Bandits problem, but 
in that problem, the schedule is unconstrained: the cost of violating the
scheduling constraint for a task is merely a reduction in the value of the
objective function. The PTAS of \citeauthor{recharging}
is thus able to omit the medium tasks from the schedule 
altogether, enabling a ``separation of scales'' that facilitates
the analysis. The difference between scheduling a small task near
the beginning of one cycle of the $p$-periodic schedule and scheduling
it near the end of that cycle is negligible compared to the spacing 
between consecutive occurrences of the task. This means that one 
only needs to focus on including enough gaps per cycle, not on 
how the gaps are distributed within the cycle. 

Designing a PTAS for the pinwheel problem is more difficult because 
the (relaxed) scheduling constraints for each task are treated as hard 
constraints, so the medium tasks cannot be omitted from the schedule 
and the separation of scales cannot be achieved. Instead, our PTAS succeeds
in finding a valid schedule for the relaxed problem by enforcing an
upper bound on the spacing between consecutive gaps in the $p$-periodic 
schedule and scheduling the medium tasks into these gaps 
using the ``fold'' operation of \citet{kawamura}. Finally, to show
that the small tasks can be scheduled into the remaining gaps, we 
use our algorithm's density-based criterion in combination 
with a theorem of \citet{covering} regarding the schedulability of 
pinwheel problem instances whose periods are all large. 

\paragraph{Technical overview of NP-hardness:}
To prove NP-hardness of pinwheel scheduling, we enhance the 
existing reduction from 3-SAT to \emph{exact pinwheel scheduling} (EPS)
due to \citet{exact}. A key observation is that a pinwheel 
scheduling instance $A$ with density $D(A)=1$ is schedulable
if and only if it remains schedulable when regarded as an 
instance of EPS. Thus, our strategy is to modify the reduction
of \citet{exact} by ``padding'' its output with additional tasks
that increase the density to 1, while preserving the reduction's 
completeness and soundness. Carrying out this strategy is delicate
because preserving completeness is surprisingly subtle. 

A useful metaphor is to visualize the reduction of \citet{exact} as 
yielding a set of jigsaw puzzle pieces that are guaranteed 
to fit disjointly inside a rectangular frame when the reduction
is applied to a satisfiable 3-SAT instance. 
(The analogy is only metaphorical, since the EPS problem involves
packing arithmetic progressions into the set of integers, not 
packing two-dimensional geometric shapes into a rectangle.)
The original reduction yields a pinwheel scheduling instance 
whose density is strictly less than 1, akin to leaving gaps 
between the puzzle pieces when they are arranged inside the frame. 
The challenge of ``padding'' the  reduction to yield density 1 is 
then akin to manufacturing a set of additional puzzle
pieces that are guaranteed to fill the remaining gaps, no matter
how the original pieces are positioned within the frame. 
An easy strategy would be to manufacture a large quantity of 
identical grains of sand tiny enough to fill in the gaps regardless
of the gaps' sizes and shapes. Analogously, one strategy for
padding the EPS reduction to yield density 1 is to use 
exponentially many identical tasks, each with an exponentially 
large period. Indeed, this strategy succeeds in proving  
the NP-hardness of a compact encoding of pinwheel scheduling
in which the multiplicities of identical tasks are encoded in binary;
see \Cref{lem:concise} and \cite{bar2007windows}.
 
To prove NP-hardness of pinwheel scheduling, without the
compact encoding, we need to find a method of padding the 
reduction using only polynomially many additional tasks. 
Our approach begins with specializing the source problem of
the reduction from 3-SAT to 3,4-SAT --- the special case of 3-SAT
in which every variable appears in at most four clauses, which
is still NP-complete. The bounded-multiplicity structure of
the 3,4-SAT instance enables us to find a polynomially-bounded
set of ``allowed job periods'' and to show that a collection of
tasks with these periods can always be arranged to fill in the
gaps in the EPS schedule arising when the reduction is applied
to a satisfiable 3,4-SAT instance.

\label{sec:results}

%% file: preliminaries.tex
\section{Preliminaries} \label{sec:prelim}

An \emph{instance} of the pinwheel scheduling problem is specified by a set of tasks, $M$, 
and a period length, $a_i \ge 1$, for each $i \in M$. Unless otherwise specified,
we standardize on the task set $M = [m]$, in which case a pinwheel instance is 
represented by a vector of periods, $A = (a_1,\ldots,a_m)$.

A \emph{schedule} for an instance with task set $M$ is a function 
$\sigma : \mathbb{Z} \to M \cup \{ \bot \}$, where $\sigma(t)=i$ is interpreted 
to mean that task $i$ is scheduled at time $t$, and $\sigma(t) = \bot$ is
interpreted to mean that no task is scheduled at time $t$; in the latter
case time step $t$ is called a \emph{holiday} in the schedule. 
For a positive integer $p$, a schedule $\sigma$ is called $p$-periodic if it 
satisfies $\sigma(t) = \sigma(t+p)$ for all $t \in \mathbb{Z}$. 
Schedule $\sigma$
is called \emph{valid} for pinwheel scheduling instance $A$ if, for each 
interval $I$ of $L$ consecutive integers and each task $i \in M$, the set 
$\sigma^{-1}(\{i\}) \cap I$ has at least $\lfloor L / a_i \rfloor$ elements.
The instance is called \emph{schedulable} if it has a valid schedule,
and otherwise it is called \emph{unschedulable}. 

A key property of pinwheel scheduling instances is their \emph{density},
$D(A) = \sum_{i=1}^m \frac{1}{a_i}$. When this density exceeds 1, the
instance is unschedulable. To see why, consider any interval $I$ made up 
of $L$ consecutive integers and observe that any valid schedule $\sigma$ must satisfy
\[
  \sigma^{-1}([m]) \cap I \geq
  \sum_{i=1}^m \left\lfloor \frac{L}{a_i} \right\rfloor >
    \sum_{i=1}^m \left( \frac{L}{a_i} - 1 \right) = L \left( D(A) - \frac{m}{L} \right) .
\]
If $D(A) > 1$ and $L > \frac{m}{D(A)-1}$ then the right side exceeds $L$, 
whereas the left side cannot exceed $L$ since $I$ has only $L$ elements.

On the positive side, \cite{kawamura} proved that every pinwheel instance $A$ with $D(A) \geq \frac{5}{6}$ is schedulable. A key technique in proof of \cite{kawamura} is the following fold operation, related to the folding idea from \cite{towards}.

\begin{algorithm}[ht]
  \DontPrintSemicolon
  \caption{Fold Operation, $\pfold_{\theta}(A)$.}\label{algo:fold}
  \KwIn{Pinwheel instance $A = (a_1, a_2, \dots, a_m)$}
  
  \While{$\max(A) > \theta$}{
    Let $a$ and $b$ be the largest and second largest values in $A$, allowing for repetition.
    
    \If{$b > \theta$}{    
    $A \gets A \ominus (a, b)$

    $A \gets A \sqcup (b/2)$
    }
    \Else{
    $A \gets A \ominus (a)$

    $A \gets A \sqcup (\theta)$
    }
    \Return $A$
}
\end{algorithm}

The notation $A \sqcup (a)$ represents adding a job of period $a$ to instance $A$ and $A \ominus (a)$ represents removing a job of period $a$ from instance $A$. We will also use the notation $[n]$ to denote the set $\{ 1, \dots, n \}$. 

\begin{lemma}[\citealp{kawamura}, Lemma 4] \label{lem:fold}
  For any pinwheel scheduling instance $A$ and any $\theta > 0$,
  \begin{enumerate}
      \item  If $\pfold_{\theta}(A)$ is schedulable, so is $A$.
      \item  Any task in $\pfold_{\theta}(A)$ with period 
      $\leq \frac{\theta}{2}$ is already in $A$.
      \item $D(\pfold_{\theta}(A)) < D(A) + \frac{1}{\theta}$.
  \end{enumerate}
\end{lemma}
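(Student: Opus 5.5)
We interpret the loop in \Cref{algo:fold} as running until $\max(A) \le \theta$, with the \textbf{return} executed after the loop. We prove the three claims by induction over iterations. Each iteration is one of two kinds: a \emph{merge}, which replaces two tasks $a \ge b > \theta$ by one task of period $b/2$; or a \emph{replacement}, which replaces the unique task $a > \theta$ by one of period $\theta$.

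For (1), it suffices to show that one iteration preserves the property ``if the new instance is schedulable, so is the old one.'' Take a valid schedule for the new instance.
\begin{itemize}
\item For a replacement, run task $a$ in the slots of the $\theta$-task. Since $\theta < a$, we have $\lfloor L/\theta\rfloor \ge \lfloor L/a\rfloor$ for every $L$.
\item For a merge, give the occurrences of the $b/2$-task alternately to $a$ and $b$. Any interval of length $L$ contains $k \ge \lfloor 2L/b\rfloor \ge 2\lfloor L/b\rfloor$ occurrences of the merged task. These are consecutive, so under the alternation each of $a,b$ receives at least $\lfloor k/2\rfloor \ge \lfloor L/b\rfloor \ge \lfloor L/a\rfloor$ of them.
\end{itemize}
Hence the old instance is schedulable. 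Induction over iterations gives (1).

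For (2), note that every period created by the algorithm is either $b/2$ with $b>\theta$, hence $>\theta/2$, or exactly $\theta$. So any period $\le \theta/2$ in the output was present in $A$ from the start.

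For (3), which I expect to be the main point, a naive per-step bound on density increases does not telescope. Instead I will use the potential
\[
\Phi(A) = D(A) - \frac{1}{\mu(A)}, \qquad \mu(A) = \max(A) \text{ if } \max(A) > \theta, \quad \mu(A) = \theta \text{ otherwise.}
\]
I will check that no iteration increases $\Phi$.
\begin{itemize}
\item \textbf{Merge leaving a task $>\theta$.} The new $\mu'$ is at most $b$, because every remaining task is $\le b$ and the new task is $b/2 < b$. The change is $(1/b - 1/a) + 1/a - 1/\mu' = 1/b - 1/\mu' \le 0$.
\item \textbf{Merge leaving no task $>\theta$.} The change is $1/b - 1/\theta < 0$.
\item \textbf{Replacement.} This happens only when $a$ is the unique task $>\theta$. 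The change is $(1/\theta - 1/a) + 1/a - 1/\theta = 0$.
\end{itemize}
At termination $\mu = \theta$, so
\[
D(\pfold_\theta(A)) = \Phi_{\text{final}} + \frac{1}{\theta} \le \Phi(A) + \frac{1}{\theta}.
\]
Since $\Phi(A) < D(A)$, because $1/\mu(A) > 0$, this gives $D(\pfold_\theta(A)) < D(A) + 1/\theta$, which is (3).
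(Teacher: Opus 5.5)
Your proof is correct in all three parts. The paper gives no proof of this lemma; it is imported from \citet{kawamura}. So there is no in-paper argument to compare with, and your write-up is a self-contained substitute along the natural lines.

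\begin{itemize}
\item Part (1) is monotonicity plus a two-way split of the merged task.
\item Part (2) is the observation that every created period exceeds $\theta/2$.
\item Part (3) is the telescoping of the exact per-merge increases $1/b-1/a$, using that the next maximum is at most the current $b$. Your potential $\Phi = D - 1/\mu$ records this telescoping in closed form. So the per-step increases do telescope once you use their exact values rather than a cruder per-step bound.
\end{itemize}

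Two small points. First, your direct inequality $\lfloor 2L/b\rfloor \ge 2\lfloor L/b\rfloor$ is the right way to handle the merge step. The paper's \Cref{lem:partitioning} is stated only for $a\in\mathbb{N}$, while fold produces non-integer periods such as $b/2$.

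Second, add one line on termination. Each merge reduces the number of tasks by one, and a replacement leaves no period above $\theta$, so there are at most $|A|$ iterations. Both the induction in (1) and your final step ``at termination $\mu=\theta$'' presuppose this. Your reading of the misplaced \textbf{return} in \Cref{algo:fold} as occurring after the loop is the intended one.
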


To any pinwheel scheduling instance $A = (a_1,a_2,\ldots,a_m)$ with integer
periods, one may associate a \emph{state graph} $G(A)$, which is a 
directed graph with vertex set $[a_1] \times [a_2] \times \cdots \times [a_m].$
The state graph has a directed edge from $(j_1,j_2,\ldots,j_m)$ 
to $(j'_1, j'_2, \ldots, j'_m)$ if $j'_i \in \{1,j_i+1\}$ for all $i$,
and at most one of $j'_1,\ldots,j'_m$ is equal to 1. 
Valid schedules $\sigma$ are in bijection
with doubly-infinite walks in this state graph. In one 
direction, the bijection associates to any valid 
schedule $\sigma$ the walk with vertex sequence 
$\{j(t) = (j_1(t),\ldots,j_m(t))\}_{t \in \mathbb{Z}}$
where $j_i(t)$ is the minimum $j \geq 1$ such that $\sigma(t-j) = i$.
The inverse bijection maps a walk $\{j(t)\}$ to the schedule
$\sigma$ defined by setting $\sigma(t)$ equal to the unique 
index $i$ such that $j_i(t)=1$, or $\sigma(t) = \bot$ if there
is no such $i$. Since a directed graph has a doubly-infinite 
directed walk if and only if it contains a directed cycle, we 
obtain the following lemma of \citet{holte1989pinwheel}.

\begin{lemma}[\citealp{holte1989pinwheel}, Theorem 2.1]
    \label{lem:periodic}
    A pinwheel scheduling instance $A = (a_1,\ldots,a_m)$ with 
    integer periods has a valid schedule if and only if it 
    has a $p$-periodic valid schedule for some $p \leq \prod_{i=1}^m a_i$.
\end{lemma}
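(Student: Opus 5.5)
The plan is to prove both directions through the state graph $G(A)$ and the bijection between valid schedules and doubly-infinite walks described just before the statement. The reverse direction is immediate: a $p$-periodic valid schedule is in particular a valid schedule. So the work is in the forward direction. There I will turn an arbitrary valid schedule $\sigma$ into a periodic one by locating a directed cycle in $G(A)$ and bounding its length by the number of vertices, $N = \prod_{i=1}^m a_i$.

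For the forward direction, let $\sigma$ be valid and let $\{j(t)\}_{t \in \mathbb{Z}}$ be its associated walk. One point needs care: the walk must actually stay inside $[a_1]\times\cdots\times[a_m]$. The quantity $j_i(t)$ is the distance back to the most recent occurrence of task $i$. Validity applied to the interval $\{t-a_i,\ldots,t-1\}$ of length $a_i$ gives at least $\lfloor a_i/a_i\rfloor = 1$ occurrence of $i$ there. Hence $j_i(t) \le a_i$ for every $t$ and $i$.

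I will also check that consecutive states are joined by an edge. At time $t$, the task $\sigma(t)$ (if any) has its coordinate reset to $1$, and every other coordinate increments. So at most one coordinate becomes $1$.

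Next I apply pigeonhole. Consider the $N+1$ states $j(0), j(1), \ldots, j(N)$. Since there are only $N$ vertices, two of them coincide, say $j(s) = j(s+p)$ with $1 \le p \le N$. The segment of the walk from $s$ to $s+p$ is then a closed walk of length $p$. I traverse it repeatedly in both directions to get a doubly-infinite walk $j'$ with $j'(t+p) = j'(t)$.

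By the inverse bijection, $j'$ corresponds to a schedule $\sigma'$ with $\sigma'(t) = \sigma'(t+p)$, so $\sigma'$ is $p$-periodic with $p \le \prod_i a_i$.

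The main obstacle is confirming that $\sigma'$ is valid, i.e. that the inverse bijection sends every doubly-infinite walk in $G(A)$ to a valid schedule. Two facts drive this.

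First, the edge rule makes coordinate $i$ count the elapsed time since the last reset. Along any walk, each coordinate either increments or resets to $1$, and it resets exactly when task $i$ is scheduled. A coordinate cannot stay unreset forever, since that would make it exceed $a_i$. So $j'_i(t)$ equals the gap back to the last occurrence of $i$, and this gap is at most $a_i$. Therefore every window of $a_i$ consecutive steps contains task $i$.

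Second, for a general interval of length $L$, I split it into $\lfloor L/a_i\rfloor$ disjoint subintervals of length $a_i$. Each subinterval contains an occurrence of $i$, which yields the required $\lfloor L/a_i\rfloor$ occurrences.

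Finally, I will check that the schedule is well-defined: when some coordinate equals $1$, that coordinate is unique by the edge rule.
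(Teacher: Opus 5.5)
Your proposal is correct and takes the same route as the paper: both use the state graph $G(A)$ and the bijection between valid schedules and doubly-infinite walks, find a closed walk of length at most $\prod_{i=1}^m a_i$, and unroll it into a periodic schedule. Your explicit pigeonhole over $j(0),\ldots,j(N)$ and your check that every doubly-infinite walk gives a valid schedule fill in details the paper leaves implicit when it says a graph has a doubly-infinite walk if and only if it has a directed cycle.
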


%% file: ptas.tex
\section{PTAS for Pinwheel Scheduling}\label{sec:ptas}

For an approximation factor $\alpha>1$, we define the following 
\emph{$\alpha$-approximate decision problem} for pinwheel scheduling.
Given an integer $m$-tuple $A = (a_1,\ldots,a_m)$, the decision procedure must do one of the following.
\begin{enumerate}
    \item Correctly declare that the instance is unschedulable, or
    \item Correctly declare that the instance $\alpha \cdot A = (\alpha \cdot a_1,\ldots,\alpha \cdot a_m)$ 
    is schedulable. 
\end{enumerate}
Note that if both options 1 and 2 are possible, the decision procedure may choose either option. 

This section is devoted to presenting a polynomial-time approximation scheme for pinwheel scheduling. In
other words, we will show that for each constant $\epsilon > 0$, there is a polynomial-time 
algorithm for the $(1+\epsilon)$-approximate pinwheel scheduling problem. In fact, 
\Cref{algo:pptas} below is  
an \emph{efficient polynomial-time approximation scheme} (EPTAS), satisfying a running time
bound of $O \left( 2 \uparrow \uparrow O(\frac{1}{\epsilon})\, \cdot \, m \right), $
where we have used Knuth's up-arrow notation \citep{knuth} for the tetration, or ``tower'' function. Furthermore,
in the satisfiable case (case 2 above, when it declares that $(1+\epsilon) \cdot A$ is schedulable)
the algorithm can be enhanced to 
output an efficient representation of a valid schedule for $(1+\epsilon) \cdot A$.

\begin{algorithm}[t]
\caption{Polynomial-time approximation scheme for pinwheel scheduling}
\label{algo:pptas}
\KwIn{Pinwheel scheduling instance $A = (a_1, \dots, a_m)$ and parameter $\epsilon \in (0, 2/7)$}

$n \gets \left\lceil \frac{1}{\epsilon} \right\rceil$

\If{$D(A) > 1$}{
\Return{``$A$ \textnormal{is unschedulable}''}
}

$\ell \gets n$

$u \gets 16n^2 \ell^{\ell}$

\While{\textnormal{True}}{

Let $A_{\ell < j < u}$ be defined as the list of job periods in $A$ that are between $\ell$ and $u$.

\If{$D(A_{\ell < j < u}) \leq \frac{1}{2(n + 1)}$}{
\Break
}

$\ell \gets u$

$u \gets 16n^2 \ell^{\ell}$
}

$A_{\tbig} \gets A_{j \leq \ell}$, the list of job periods in $A$ that are at most $\ell$

$p \gets 1$

$h_{\max} \leftarrow 0$

\While{$p \leq \ell^{\ell}$}{
    \For{\textnormal{all valid} $p$\textnormal{-periodic schedules} $s$ \textnormal{of} $A_{\textnormal{big}}$}{
        Let $h$ be the fraction of holidays in one period of $s$
    
        $h_{\max} \gets \max(h_{\max}, h)$
    }

    $p \gets p + 1$
}

$A_{\text{not big}} \gets A_{j > \ell}$, the list of job periods in $A$ that are greater than $\ell$

\If{$h_{\textnormal{max}} < D(A_{\textnormal{not big}})$}{
\Return{``$A$ \textnormal{is unschedulable}''}
}
\Else{
\Return{``$A(1 + \epsilon)$ \textnormal{is schedulable}''}
}
\end{algorithm}

We sketch the analysis of \Cref{algo:pptas} in this 
section and defer the full details of the proof to 
\Cref{sec:ptas-proof} due to space constraints.

\begin{restatable}{lemma}{lemmaruntime}
\label{lem:runtime}
\Cref{algo:pptas} has a running time of $O(m)$, where the big-$O$ hides multiplicative and additive dependencies involving $\lceil 1/\epsilon \rceil$. 
\end{restatable}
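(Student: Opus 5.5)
The plan is to bound the three phases of \Cref{algo:pptas} separately: the initial density check, the first while loop that determines $\ell$, and the enumeration of periodic schedules of $A_{\tbig}$. The density checks $D(A) > 1$ and $h_{\max} < D(A_{\text{not big}})$, and the construction of the sublists $A_{\ell<j<u}$, $A_{\tbig}$, $A_{\text{not big}}$, each take $O(m)$ arithmetic operations per evaluation. So it suffices to show two things: the first loop runs a number of iterations bounded by a function of $n$ alone, and the enumeration costs a function of $n$ alone.

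\textbf{Bounding the first loop.} Let $\ell_0 = n < u_0 = \ell_1 < u_1 = \ell_2 < \cdots$ be the successive values of $\ell$ and $u$. The open intervals $(\ell_k, u_k)$ are pairwise disjoint, so the sets of jobs they capture are disjoint. Whenever the loop does not break, $D(A_{\ell_k<j<u_k}) > \frac{1}{2(n+1)}$. Since the algorithm reached this loop, $D(A) \le 1$. Summing over the non-breaking iterations, there can be fewer than $2(n+1)$ of them. Hence the loop terminates after at most $2n+3$ iterations, each costing $O(m)$, for a total of $O(n m)$. Moreover, the final value of $\ell$ is at most $L(n)$, where $L$ is obtained by iterating the map $x \mapsto 16n^2 x^x$ at most $2n+2$ times starting from $n$. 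This $L(n)$ depends only on $\epsilon$; it is a tower of height $O(n)$, i.e. $2\uparrow\uparrow O(1/\epsilon)$.

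\textbf{Bounding the enumeration.} Each job in $A_{\tbig}$ has period at most $\ell$. If two jobs of $A_{\tbig}$ shared a period $a$, they would contribute density $2/a$; more to the point, $D(A) \le 1$ and every period in $A_{\tbig}$ is at most $\ell$, so $|A_{\tbig}| \le \ell$. A $p$-periodic schedule of $A_{\tbig}$ is determined by a word of length $p$ over the alphabet $A_{\tbig} \cup \{\bot\}$. There are therefore at most $(\ell+1)^p$ such schedules for each $p \le \ell^\ell$. Checking validity of one candidate means checking, for each task, the cyclic gaps within $p$ steps; by periodicity this takes $\mathrm{poly}(p,\ell)$ time. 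Computing its holiday fraction takes $O(p)$. The whole double loop thus costs at most $\ell^\ell \cdot (\ell+1)^{\ell^\ell} \cdot \mathrm{poly}(\ell^\ell)$ operations. With $\ell \le L(n)$, this is a function of $\epsilon$ alone and independent of $m$.

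\textbf{Combining.} The total running time is $O(nm) + O(m) + f(n)$ with $f$ a tower-type function. This is $O(m)$ with multiplicative and additive constants depending only on $\lceil 1/\epsilon\rceil$, as claimed. The main obstacle is making the termination argument for the first loop airtight, in particular the disjointness of the successive windows and the use of $D(A)\le 1$. Everything else is routine counting.

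One subtlety should be addressed: each enumerated schedule has $p \le \ell^\ell$ steps, so its description never involves $m$. Arithmetic on periods is taken as unit cost in the same word-RAM sense implicit in the statement.
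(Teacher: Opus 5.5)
Your proof is correct and follows essentially the same route as the paper's. Disjointness of the windows $(\ell,u)$ together with $D(A)\le 1$ caps the first loop at $O(n)$ iterations of $O(m)$ work each, so the final $\ell$ is a tower-type constant in $n$; then $|A_{\text{big}}|\le \ell$ bounds the enumeration by $\sum_{p\le \ell^\ell}(\ell+1)^p\cdot\mathrm{poly}(p,\ell)$ operations, independent of $m$ (your explicit unit-cost arithmetic remark just makes visible an assumption the paper leaves implicit).
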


For the while loop on Line 6 of \Cref{algo:pptas}, the key idea is that it will run for at most $2(n + 1)$ iterations due to the relevant $A_{\ell < j < u}$ job subsets being disjoint with a total density of at most 1. Then, the $\ell$ and $u$ values can be upper bounded by a (very large) constant (with respect to $m$), and this makes the while loop on Line 15 of \Cref{algo:pptas} essentially constant time. 

\begin{restatable}{lemma}{lemmaunschedulable}
\label{lem:unschedulable}
If \Cref{algo:pptas} outputs ``$A$ is unschedulable'' then $A$ is unschedulable.
\end{restatable}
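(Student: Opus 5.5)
The algorithm declares ``$A$ is unschedulable'' in only two places, and the plan is to handle each separately. The first is line 2, when $D(A) > 1$. This case is immediate from the density argument in \Cref{sec:prelim}: for any interval of length $L > \frac{m}{D(A)-1}$, the required number of occurrences exceeds $L$, so no valid schedule exists.

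The substantive case is the final test, when $h_{\max} < D(A_{\text{not big}})$. I will argue by contraposition. Suppose $A$ has a valid schedule. The goal is to exhibit a valid $p$-periodic schedule of $A_{\text{big}}$ with $p \le \ell^{\ell}$ whose holiday fraction is at least $D(A_{\text{not big}})$. Such a schedule is enumerated by the loop on line 15, so $h_{\max} \ge D(A_{\text{not big}})$ and the algorithm would not have returned ``unschedulable''.

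To build this schedule I restrict a valid schedule $\sigma$ of $A$ to the big tasks. Every slot occupied by a task of $A_{\text{not big}}$ becomes a holiday. The result $\sigma'$ is valid for $A_{\text{big}}$, since the constraints on big tasks are unchanged. It remains to control both the period and the holiday fraction.

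1. \emph{Long-run holiday density.} In any window of length $L$ of $\sigma$, each non-big task $i$ occupies at least $\lfloor L/a_i\rfloor \ge L/a_i - 1$ slots. Hence the holidays of $\sigma'$ in that window number at least $L\,D(A_{\text{not big}}) - |A_{\text{not big}}|$, so the long-run fraction is at least $D(A_{\text{not big}})$. The difficulty is that $\sigma'$ need not be periodic, and the additive loss must be removed exactly, not just asymptotically.

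2. \emph{Periodicity via the state graph.} Apply the state-graph argument behind \Cref{lem:periodic} to $A_{\text{big}}$ alone. This graph has at most $\prod_{a_i \le \ell} a_i$ vertices. I should check whether this is $\le \ell^{\ell}$. The number of big tasks is at most $\ell$, because each has density at least $1/\ell$ and the total density is $\le 1$. So the vertex count is at most $\ell^{\ell}$.

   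The walk of $\sigma'$ over a long window $[0,N)$ visits some vertex repeatedly. Decompose the walk on this window into simple cycles plus a simple-path remainder of length at most $\ell^{\ell}$. Each simple cycle has length $\le \ell^{\ell}$ and yields a valid periodic schedule of $A_{\text{big}}$. Cycle-to-schedule validity for real, possibly integer-bounded, constraints follows from the bijection: in a cycle, every gap between consecutive occurrences of big task $i$ is at most $a_i$.

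3. \emph{Averaging over cycles.} The total holiday count over $[0,N)$ is at least $N\,D(A_{\text{not big}}) - m$. By averaging, some cycle has holiday fraction at least
\[
\frac{N\,D(A_{\text{not big}}) - m - \ell^{\ell}}{N - \ell^{\ell}}.
\]
This tends to $D(A_{\text{not big}})$ but may fall short of it. This is the main obstacle: converting an asymptotic bound into an exact inequality with a cycle of length $\le \ell^{\ell}$.

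My first attempt is a finiteness argument. There are finitely many cycles of length $\le \ell^{\ell}$, so their holiday fractions form a finite set. If every one of them were strictly below $D(A_{\text{not big}})$, their maximum $h^*$ would satisfy $h^* < D(A_{\text{not big}})$ with a fixed gap. Taking $N$ large enough that the averaged bound above exceeds $h^*$ then gives a contradiction. Since this only needs $N \to \infty$, the argument works, and the existence of the required cycle completes the proof.
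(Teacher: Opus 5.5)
Your proof is correct and is essentially the paper's argument. Both proofs restrict a schedule of $A$ to the big tasks and use that the state graph of $A_{\text{big}}$ has at most $\ell^{\ell}$ vertices; you justify this bound explicitly, whereas the paper's pigeonhole step leaves it implicit. Both then use that the holiday fraction of a walk is a convex combination of the holiday fractions of the short cycles it splits into. The one difference is that the paper starts from a periodic schedule of $A$ via \Cref{lem:periodic}, so one period is a closed walk whose holiday fraction is exactly at least $D(A_{\text{not big}})$ (each task $i$ occupies at least $p/a_i$ slots per period). That removes the additive loss and makes your limiting/finiteness step unnecessary, although your step is also valid.
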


This lemma would follow if we could show that in every period of a periodic schedule of $A_{\tbig}$, the fraction of holidays is at most $h_{\max}$. Then, with $D(A_{\text{not big}})$ strictly greater than $h_{\max}$ there cannot be space for the jobs that are not big in any periodic schedule (which is guaranteed to exist if $A$ is schedulable by \Cref{lem:periodic} \citep{holte1989pinwheel}) even if they are scheduled perfectly evenly. 

To do this, we show that the holiday fraction for a period of length at most $\ell^{\ell}$ which forms a periodic schedule is an upper bound for the holiday fraction of a period of any periodic schedule. In particular, we show that a period of any periodic schedule with period greater than $\ell^{\ell}$ can be decomposed into two separate periods of smaller corresponding periodic schedules so that the holiday fraction of the original larger period is a convex combination of that of the smaller ones.  

\begin{restatable}{lemma}{lemmaschedulable}
\label{lem:schedulable}
If \Cref{algo:pptas} outputs ``$A(1 + \epsilon)$ is schedulable'' then $A(1 + \epsilon)$ is schedulable.
\end{restatable}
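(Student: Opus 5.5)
We will construct a valid schedule for $(1+\epsilon)A$ directly from the data gathered by \Cref{algo:pptas}. Since the algorithm returned ``schedulable'', there is some $p \le \ell^{\ell}$ and a valid $p$-periodic schedule $s$ of $A_{\tbig}$ whose holiday fraction $h = h_{\max}$ satisfies $h \ge D(A_{\text{not big}})$. We split the non-big jobs into \emph{medium} jobs, with periods in $(\ell,u)$, and \emph{small} jobs, with periods $\ge u$. By the break condition of the first loop, the medium jobs have density at most $\frac{1}{2(n+1)}$. The plan has three stages: place the medium jobs into the holidays of $s$ via a fold, then place the small jobs into what remains via a density argument, with the $(1+\epsilon)$ slack absorbing all losses.

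\textbf{Medium jobs.} The holidays of $s$ occur periodically with period $p \le \ell^{\ell}$. However, consecutive holidays may be far apart, up to about $p$, and medium periods can be as small as roughly $\ell$. So we cannot simply spread the medium jobs among the holidays. The overview hints that this is the main obstacle: we must enforce an upper bound on the spacing between consecutive used holidays.

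We first apply $\pfold_{\theta}$ to the medium jobs with $\theta$ a multiple of $p$, for instance $\theta \approx \ell^{\ell}\cdot n$. This leaves jobs with periods $\le \theta$. The big-period ones become $\theta$, and the density increase is $<1/\theta$ (\Cref{lem:fold}). To handle periods between $\ell$ and $\theta$, we use the $(1+\epsilon)$ slack. Round each medium period $(1+\epsilon)a_i \ge a_i + \epsilon\ell \ge a_i + \ell\cdot n^{-1}$... The slack $\epsilon a_i$ is at least $\ell/n$, and at least $p$ only when $a_i \ge np$. So we instead reserve, in each window of $p$ steps, a fixed set of holiday slots. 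In each period of $s$ we dedicate a $\frac{1}{2(n+1)}+$small fraction of the $hp$ holidays as ``medium slots''. We then treat the medium jobs as a pinwheel instance on the subsequence of these slots. A medium job of period $(1+\epsilon)a_i$ in real time corresponds to period roughly $(1+\epsilon)a_i\cdot k/p - O(1)$ in slot-time, where $k$ is the number of medium slots per period. The additive error from uneven spacing is at most $p$ real steps, which is where $u = 16n^2\ell^{\ell} \gg p$ would be needed. For medium periods near $\ell$, we instead lower $\ell$'s role: since jobs with period $\le \ell$ are big, medium periods exceed $\ell \ge p^{1/\ell}$. This is the step I am least sure of, and I would first try choosing the medium slots as evenly spaced as possible among holidays and bounding the spacing gap by $p/(hp)$ using that $h \ge D(A_{\text{not big}})$ is bounded below whenever there are non-big jobs.

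\textbf{Small jobs.} After the medium jobs are placed, the remaining holiday fraction is at least $h - D(\text{medium})(1+\epsilon)^{-1} - o(\epsilon)$. This is at least $D(\text{small})/(1+\epsilon) \cdot (1+\Omega(\epsilon))$ by the $(1+\epsilon)$ scaling. The remaining holidays are periodic with period $p$ and are nonempty. We reindex them as a new time line: a small job of real period $(1+\epsilon)a_i \ge (1+\epsilon)u$ becomes a slot-period of at least $(1+\epsilon)a_i r/p - 1$, where $r$ is the number of free slots per period. Since $u \ge 16n^2 p$, the additive $p$ discretization error costs only a $1+O(1/n^2)$ factor. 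The resulting instance has all periods large and density at most $1 - \Omega(\epsilon)$. We then invoke the theorem of \citet{covering}, as the overview indicates, that instances with all periods large and density bounded by $1-\delta$ are schedulable. Finally, we combine the three layers with \Cref{lem:fold}(1) to lift the folded schedule back, giving a valid schedule for $(1+\epsilon)A$.
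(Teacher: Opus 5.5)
\textbf{The gap is in the medium-job stage, which you flag yourself.} You try to place the medium jobs into the holidays of $s$, but nothing controls how those holidays are distributed within a period. The schedule $s$ is whatever maximizer of the holiday fraction the enumeration happened to find. Nothing prevents a stretch of far more than $\ell$ consecutive steps (up to order $p\le \ell^{\ell}$) from being holiday-free. Meanwhile, a medium job may have period barely above $\ell$, and its slack $\epsilon a_i\approx\epsilon\ell$ is negligible compared with $p$.

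Choosing the medium slots ``evenly among the holidays'' does not make them evenly spaced in real time. The quantity $p/(hp)=1/h$ bounds only the \emph{average} gap between holidays, not the maximum gap. Folding with $\theta\approx n\ell^{\ell}$ does not help either, since it leaves periods much shorter than those gaps.

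There is a secondary problem in your small-job accounting. The medium jobs eat into $h$ with an \emph{additive} loss (fold overhead, unused reserved slots). The margin you need is \emph{proportional} to $D_{\tsma}$. So the claim that the remaining fraction is at least $(1+\Omega(\epsilon))D_{\tsma}/(1+\epsilon)$ fails when $D_{\tsma}$ and $D_{\tmed}$ are small relative to that loss.

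\textbf{The missing idea} is to spend the $(1+\epsilon)$ slack on the \emph{big} jobs to manufacture perfectly regular holidays, and to leave the holidays of $s$ entirely to the small jobs.
\begin{itemize}
\item With $\epsilon=1/n$, concatenate $n$ copies of one period of $s$ and insert a fresh holiday after every $n$ steps. A floor computation shows this is valid for $(1+\epsilon)A_{\tbig}$, and the new holidays occur exactly every $n+1$ steps.
\item Because $D_{\tmed}\le\frac{1}{2(n+1)}$, fold the medium jobs with the \emph{small} threshold $\theta=2(n+1)$, or $\theta=4(n+1)$ when $D_{\tmed}\le\frac{1}{4(n+1)}$. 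By \Cref{lem:fold}, the result has density below $\frac{1}{n+1}$ (resp.\ $\frac{1}{2(n+1)}$) and all periods at most $\theta$. Hence it is a single job of period exceeding $n+1$ (resp.\ $2(n+1)$).
\item That single job occupies every (resp.\ every other) inserted holiday, and \Cref{lem:fold}(1) lifts this to a schedule of the medium jobs.
\item The original holidays now form an $h_{\max}/(1+\epsilon)$ fraction, untouched. A case split gives a purely multiplicative bound on small density over remaining holiday fraction: when all inserted holidays are consumed, the condition $D_{\tmed}>\frac{1}{4(n+1)}$ pays for them.
\item The remaining holiday pattern is periodic with period $2(n+1)p$.
\end{itemize}
From there, your reindexing of the small jobs (rounding their stretched periods to multiples of the pattern length) and your appeal to \citet{covering} go through essentially as you sketched.
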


We divide the jobs that are not big into two categories: the medium jobs (which have periods strictly between the final values of $\ell$ and $u$) and the small jobs (which have periods of at least $u$).

Suppose $S$ is the schedule for the big jobs corresponding to $h_{\max}$. If there were no medium jobs, then it would be possible to adapt the techniques of \cite{recharging} to schedule the big jobs within the gaps in $S$. The main technical challenge we overcome is providing a mechanism to schedule the medium jobs. 

In particular, we show that holidays can be inserted every $1/\epsilon$ steps within $S$, and, due to the upper bound on the density of the medium jobs guaranteed by Line 8 of \Cref{algo:pptas}, these inserted holidays are enough to fully schedule the medium jobs. An important component of the proof is the use of the fold operation, and as a subsidiary result, we show that every real pinwheel instance with density at most $1/2$ is schedulable, taking a first step towards a conjecture of \cite{realps}. With some casework, a careful balancing of parameters, and a result from \cite{covering}, we are able to schedule the small jobs as well. The proof is constructive in that it provides a polynomial-time method to produce an efficient representation \citep{holte1989pinwheel} of the final schedule of $A$. 

Considering the proof of efficiency with respect to $m$ (\Cref{lem:runtime}) and the proof of correctness (\Cref{lem:unschedulable} and \Cref{lem:schedulable}), we have justified the following theorem.  

\begin{theorem}
\Cref{algo:pptas} is a PTAS for pinwheel scheduling. 
\end{theorem}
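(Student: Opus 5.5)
The theorem follows from the three preceding lemmas, so the plan is to check that together they give exactly what a PTAS requires. Fix a constant $\epsilon\in(0,2/7)$; a larger $\epsilon$ can be replaced by $\min(\epsilon, 1/4)$, since schedulability of $(1+\epsilon')A$ implies that of $(1+\epsilon)A$ for $\epsilon'\le\epsilon$ by monotonicity. We must show three things: on every input \Cref{algo:pptas} terminates in time polynomial in $m$ for fixed $\epsilon$, its output is always one of the two allowed answers, and whichever answer it gives is correct.

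For termination and running time, I invoke \Cref{lem:runtime}. Its argument is as follows. Each iteration of the loop on Line 6 that does not break examines a window $(\ell,u)$ of density greater than $\frac{1}{2(n+1)}$. These windows are pairwise disjoint, and $D(A)\le 1$ after Line 2, so there are fewer than $2(n+1)$ such iterations. Hence $\ell$ and $u$ are bounded by a tower of height $O(n)$, which depends only on $\epsilon$. Since $A_{\tbig}$ contains only periods at most $\ell$, the enumeration on Line 15 over $p\le \ell^{\ell}$ and over the periodic schedules of $A_{\tbig}$ is bounded by a function of $\epsilon$ alone; up to relabelling, the schedule depends only on the multiset of big periods, which has bounded size because $D(A)\le1$ forces at most $\ell$ big jobs. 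All remaining steps are linear scans, so the total time is $O_\epsilon(m)$, which is polynomial.

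For correctness, every return statement yields either ``$A$ is unschedulable'' or ``$A(1+\epsilon)$ is schedulable''. The return on Line 2 is correct by the density argument in \Cref{sec:prelim}. The return on the final comparison is correct by \Cref{lem:unschedulable}. The positive answer is correct by \Cref{lem:schedulable}. So the algorithm solves the $(1+\epsilon)$-approximate decision problem.

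The main obstacle lies in \Cref{lem:schedulable}: one must actually fit the medium and small jobs into the holidays of the best big-job schedule after the $(1+\epsilon)$ stretching. The plan there is to use the slack from stretching to insert holidays roughly every $1/\epsilon$ steps. The medium jobs, which have density at most $\frac{1}{2(n+1)}$, are then placed into these inserted holidays via the fold operation (\Cref{lem:fold}) together with the fact that every instance of density at most $1/2$ is schedulable. The small jobs are placed into the remaining holidays using the density criterion $h_{\max}\ge D(A_{\text{not big}})$ and the large-period result of \citet{covering}. For the present theorem, however, we take these lemmas as given, and the argument above is just this assembly.
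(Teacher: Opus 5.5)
Your proposal is correct and takes the same approach as the paper. The paper proves the theorem by combining the running-time bound of \Cref{lem:runtime} with the two correctness lemmas \Cref{lem:unschedulable} and \Cref{lem:schedulable}, exactly as you do; your sketches of the lemma arguments and your remark about shrinking a large $\epsilon$ via monotonicity are consistent with, and additional to, the paper's one-line justification.
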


%% file: hardness.tex
\section{NP-Hardness of Pinwheel Scheduling}\label{sec:hardness}

In this section, we present a reduction establishing that the pinwheel scheduling problem is NP-Hard. 
We sketch the analysis of the reduction in this 
section and defer the full details of all proofs to 
\Cref{sec:hardness-proof} due to space constraints.

\citet{exact} proved the corresponding hardness claim for the \emph{exact pinwheel scheduling} (EPS)
problem, in which a schedule is only considered valid if the spacing between every two
consecutive occurrences of task $i$ is \emph{exactly} $a_i$. 
Our reduction carefully adds jobs to the existing reduction to reach density 1 in a way that preserves completeness of the reduction. Soundness follows easily from the fact that jobs are only added, not removed. We start the proof with a simple result about EPS schedules. 

\begin{lemma}[\citealp{exact}, Lemma 2]
\label{lem:epsperiodic}
Let $S$ be an EPS schedule for $A = (a_1, \dots, a_m)$. Then, $S$ is periodic with period $L = \text{LCM}(a_1, \dots, a_m)$. 
\end{lemma}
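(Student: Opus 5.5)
The plan is to show directly from the definition of an EPS schedule that shifting time by $L$ preserves the schedule. In an EPS schedule $S$, each task $i$ occurs at times forming a subset of $\mathbb{Z}$ in which consecutive occurrences are exactly $a_i$ apart. First I would argue that each task occurs at all. For a valid schedule, every interval of length $a_i$ must contain an occurrence of $i$. Consequently, the occurrence set of $i$ is nonempty and unbounded in both directions.

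Next I would identify the occurrence set. Since it is unbounded both ways and consecutive gaps all equal $a_i$, it is exactly a full residue class $\{t : t \equiv r_i \pmod{a_i}\}$ for some offset $r_i$. To see this, start from any occurrence $t_0$. By induction, the successor of an occurrence $t$ is $t+a_i$ and its predecessor is $t-a_i$, so the occurrence set is precisely $t_0 + a_i\mathbb{Z}$.

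Then periodicity follows. Since $a_i \mid L$ for every $i$, each class $r_i + a_i\mathbb{Z}$ is invariant under translation by $L$. Hence $S(t)=i$ if and only if $S(t+L)=i$. Holidays are the complement of the union of these classes, so they are invariant under the same translation. Therefore $S(t)=S(t+L)$ for all $t$, that is, $S$ is $L$-periodic.

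The only mildly delicate point is the first step: ruling out a task that occurs only finitely often, or never. That situation would satisfy ``consecutive spacing exactly $a_i$'' vacuously. It is excluded because EPS validity also entails the pinwheel requirement that every window of length $a_i$ contains task $i$, or equivalently because the EPS definition stipulates that the task recurs. I would state this convention explicitly, and the rest is routine.
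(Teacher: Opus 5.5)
Your proof is correct and follows the same route as the paper: each task occupies an arithmetic progression with common difference $a_i$, and since $a_i \mid L$ that progression is invariant under translation by $L$. Your version is somewhat more careful than the paper's. The paper only propagates occurrences forward (to $p + a_i k$ for $k \in \mathbb{N}$) and leaves implicit three things you state explicitly: that each task actually occurs, that its occurrence set is a full two-sided residue class, and that holidays are preserved by the shift.
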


\begin{proof}
For any $i \in [m]$, suppose job $i$ is scheduled at position $p$. Then, by the definition of EPS schedule, it is scheduled at positions $p + a_i k$ for all $k \in \mathbb{N}$. By the definition of LCM, $k_0 = L/a_i$ is an integer. Then, job $i$ is scheduled at every position of the form $p + a_i k_0 k = p + Lk$ for all $k \in \mathbb{N}$, as desired. 
\end{proof}

We continue with a relevant lemma that appears not to have been stated in the literature before. 

\begin{restatable}{lemma}{lemmaepsfill}
\label{lem:epsfill}
Suppose there is an EPS schedule for an instance $A = (a_1, \dots, a_m)$. Let $L = \text{LCM}(a_1, \dots, a_m)$, and suppose we have $B = (b_1, \dots, b_n)$ where $L|b_i$ for all $1 \leq i \leq n$ and $b_i | b_j$ for all $1 \leq i \leq j \leq n$. Let $C = A \sqcup B$ and suppose that $D(C) \leq 1$. Then, there is an EPS schedule for $C$. 
\end{restatable}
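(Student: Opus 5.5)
By \Cref{lem:epsperiodic}, the EPS schedule $S$ for $A$ is $L$-periodic. The plan is to add the jobs of $B$ one at a time, in the order $b_1, b_2, \dots, b_n$. We maintain the invariant that after inserting $b_1,\dots,b_k$, the current schedule $S_k$ is an EPS schedule for $A \sqcup (b_1,\dots,b_k)$ and is $b_k$-periodic. For $k=0$ we use period $L$. At each stage we also track the set of holidays in one period.

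The key counting fact concerns $S_{k-1}$, which is $b_{k-1}$-periodic (or $L$-periodic when $k=1$). Since $b_{k-1} \mid b_k$, it is also $b_k$-periodic. In a window of $b_k$ consecutive time steps, each job $a_i$ occupies exactly $b_k/a_i$ slots, because $a_i \mid L \mid b_k$. Each previously added $b_j$ with $j<k$ occupies exactly $b_k/b_j$ slots, because $b_j \mid b_k$. So the number of holidays in the window is
\[
b_k\Bigl(1 - D(A) - \sum_{j<k} \tfrac{1}{b_j}\Bigr).
\]
Since $D(C)\le 1$, this is at least $b_k \cdot \sum_{j\ge k} 1/b_j \ge 1$. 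Hence the window contains at least one holiday $t$.

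We then schedule job $b_k$ at the times $t + b_k\mathbb{Z}$. These times are all holidays because $S_{k-1}$ is $b_k$-periodic, so there are no collisions, and the spacing is exactly $b_k$. The new schedule $S_k$ is $b_k$-periodic, because the old jobs are (their periods divide $b_k$) and so is the new one. This completes the induction. After $n$ steps we have an EPS schedule for $C$.

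The main point to get right is that the holiday count is an exact integer identity, which requires every period to divide $b_k$; this is where the chain condition $L \mid b_1 \mid b_2 \mid \cdots$ is used. Positivity of the count only needs $D(C) \le 1$ together with $1/b_k > 0$; no subtlety about how the holidays are distributed arises, because each new job needs just one residue class modulo $b_k$. I expect no real obstacle beyond stating the periodicity invariant carefully, in particular that $S_{k-1}$ is $b_k$-periodic and not merely $b_{k-1}$-periodic.
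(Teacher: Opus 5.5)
Your proof is correct, but it takes a different route from the paper's.

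\textbf{The paper's route.} The paper takes one length-$L$ period of the $A$-schedule, with holiday fraction $h$, and rescales $B$ to $B'=(hb_1,\dots,hb_n)$, checking that these are integers. It pads $B'$ with $\text{LCM}(B')\,(1-D(B'))$ filler jobs of period $\text{LCM}(B')$ to reach density exactly $1$. It then invokes Holte et al.'s divisibility theorem (\Cref{lem:divschedulable}) to get a schedule, and uses the fact that a density-$1$ schedule must be exact. Finally it deletes the fillers and threads the resulting sequence consecutively through the holidays of the $A$-schedule. Jobs that are $hb_i$ holidays apart then land exactly $b_i$ days apart, because the holidays are $L$-periodic with $hL$ per period.

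\textbf{Your route.} You insert $b_1,b_2,\dots$ directly on the time axis. You use the exact count $b_k\bigl(1-D(A)-\sum_{j<k}1/b_j\bigr)\ge 1$ of holidays in every window of length $b_k$, together with the invariant that the current schedule is $b_k$-periodic. In effect you reprove the exact-schedule version of Holte's greedy argument, with the $A$-schedule acting as a block of jobs whose periods all divide $b_1$. The exact count relies on each job of an EPS schedule on $\mathbb{Z}$ occupying a full residue class. This holds because every job must appear in each interval of its period length.

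\textbf{What each buys.} Your version is self-contained and explicitly constructive. It needs no external theorem, no ``dense implies exact'' fact, no integrality check on $hb_i$, and no filler jobs. The paper's version is more modular, since it treats the divisible case as a black box. Its device of rescaling by the holiday fraction and filling the holidays in order is also reused later: for the small jobs in the PTAS, and for the subschedule replacement in the proof of \Cref{lem:psredsat}.
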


The idea behind the proof is that the jobs in $B$ can be scheduled within the gaps left within an EPS schedule for $A$, and we show that scheduling within these gaps can be considered as a separate instance with rescaled periods that is schedulable due to a theorem of \cite{holte1989pinwheel}. 

The key property of the EPS scheduling problem that permits constructing
``gadgets'' encoding combinatorial constraints is that any two distinct tasks 
$i$ and $j$ must be slotted into distinct congruence classes modulo
$\gcd(a_i,a_j)$. 

\begin{lemma} \label{lem:gcd}
  In any valid schedule for an EPS instance $A = (a_1,\ldots,a_m)$,
  for any two distinct tasks $i$ and $j$, the time steps allocated
  to $i$ and the time steps allocated to $j$ must belong to distinct
  congruence classes modulo $\gcd(a_i,a_j)$.
\end{lemma}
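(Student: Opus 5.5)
We argue by contradiction using the arithmetic-progression structure of EPS schedules combined with B\'ezout's identity. In a valid EPS schedule, the time steps allocated to task $i$ form a doubly infinite arithmetic progression $\{x + k a_i : k \in \mathbb{Z}\}$ for some offset $x$. The same holds for task $j$, with $\{y + k' a_j : k' \in \mathbb{Z}\}$. (If the schedule is given only on a half-line, we first extend it periodically using \Cref{lem:epsperiodic}.) Every time step of task $i$ is congruent to $x$ modulo $g = \gcd(a_i,a_j)$, since $g \mid a_i$, and every time step of task $j$ is congruent to $y$ modulo $g$. So each task occupies a single congruence class modulo $g$, and it suffices to show that $x \not\equiv y \pmod g$.

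Suppose instead that $x \equiv y \pmod g$, say $y - x = cg$ for some integer $c$. By B\'ezout's identity there are integers $u,v$ with $u a_i - v a_j = g$. Then
\[
  x + (cu) a_i \;=\; x + cg + (cv) a_j \;=\; y + (cv) a_j .
\]
The left side is a time step allocated to task $i$ and the right side is a time step allocated to task $j$, so one time step is assigned to two distinct tasks. This contradicts the definition of a schedule as a function $\sigma:\mathbb{Z}\to M\cup\{\bot\}$. Hence $x \not\equiv y \pmod g$, and the two tasks lie in distinct congruence classes modulo $\gcd(a_i,a_j)$.

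The main obstacle is not the number theory, which is just B\'ezout, but making sure the progressions are genuinely two-sided. B\'ezout's coefficients $cu$ and $cv$ may be negative, so the collision point may lie earlier than the offsets $x$ and $y$. If the schedule is defined on all of $\mathbb{Z}$, as in the paper's preliminaries, there is nothing to check. Otherwise, we shift the collision point by a common multiple of $a_i$ and $a_j$, for instance a large multiple of $\mathrm{lcm}(a_i,a_j)$, so that it lies inside the range where both progressions are defined.
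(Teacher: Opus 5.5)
Your proof is correct and is essentially the paper's argument: the paper writes the offsets as $y_i d + z_i$ and $y_j d + z_j$ and applies the Chinese Remainder Theorem to the coprime moduli $a_i/d$ and $a_j/d$ to produce a shared time step when $z_i = z_j$, which is exactly the collision your B\'ezout identity $x + cu\,a_i = y + cv\,a_j$ gives directly. Your remark about two-sidedness (shifting the collision point by a multiple of $\mathrm{lcm}(a_i,a_j)$) is a reasonable extra check that the paper leaves implicit.
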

\begin{proof}
  Let $d = \gcd(a_i,a_j)$ and let $q = \frac{a_i}{d}, \, r = \frac{a_j}{d}$. 
  Suppose $x_i$ is the congruence class modulo $a_i$ used for scheduling
  task $i$ and $x_j$ is the congruence class modulo $a_j$ used for scheduling
  task $j$. Write $x_i = y_i d + z_i$, where $0 \leq z_i < d$, and
  write $x_j = y_j d + z_j$, where $0 \leq z_j < d$. Note that all
  occurrences of task $i$ (resp., $j$) are in congruence class 
  $z_i$ (resp., $z_j$) modulo $d$, so the lemma's conclusion can
  be restated as asserting that $z_i \neq z_j$. To prove this
  inequation,  note that $q$ and $r$ are relatively prime, so 
  by the Chinese Remainder Theorem, there is a natural number $y$ such
  that $y \equiv y_i \pmod{q}$ and $y \equiv y_j \pmod{r}$. 
  Then, time step $y d + z_i$ is allocated to task $i$ 
  while $y d + z_j$ is allocated to task $j$, so $z_i \neq z_j$
  as claimed.
\end{proof}

To establish a base for our reduction, we  present an exposition 
of the theorem of \citet{exact} on NP-hardness of exact pinwheel scheduling.

\begin{proposition}[\citealp{exact}, Theorem 1]
Exact Pinwheel Scheduling (EPS) is NP-hard. 
\end{proposition}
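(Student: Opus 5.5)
I would reduce from 3-SAT, using \Cref{lem:gcd} as the only combinatorial primitive: two tasks with periods $a_i,a_j$ must occupy distinct residues modulo $\gcd(a_i,a_j)$. The design is to choose distinct primes and build periods as products of these primes, so that each gcd between two tasks is exactly the modulus at which we want them to conflict.

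\textbf{Variables.} Given a formula with variables $x_1,\dots,x_n$ and clauses $c_1,\dots,c_k$, assign to each variable $x_v$ a distinct odd prime $q_v$. Use a ``parity'' modulus $2$ so that residues mod $2q_v$ split into ``true'' and ``false'' classes. A truth assignment is then encoded by a residue $r_v \bmod q_v$, together with the choice of which residues encode true and which encode false.

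\textbf{Clauses.} For each clause $c$ on variables $x_u,x_v,x_w$, add a task with period $P_c = M\, q_u q_v q_w$, where $M$ is a common modulus. By the Chinese Remainder Theorem, fixing a residue of $P_c$ amounts to choosing residues modulo each of $q_u,q_v,q_w$. Add ``forbidding'' tasks with period $M q_u$ (and similarly for $q_v,q_w$), placed so that the clause task is blocked exactly in the residue combinations that make all three literals false. Concretely, the clause task can then be scheduled if and only if at least one literal evaluates true under the residues fixed by the variable tasks.

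\textbf{Scheduling the variable tasks.} The intended solution puts each variable's tasks in residues determined by its assignment. One then checks, via \Cref{lem:gcd} in the converse direction (CRT shows that distinct residues modulo the gcd suffice for disjointness), that all tasks fit. Exact disjointness of two arithmetic progressions with periods $a,b$ and offsets $x,y$ holds iff $x\not\equiv y \pmod{\gcd(a,b)}$. Both completeness and soundness therefore reduce to residue bookkeeping.

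\textbf{Soundness.} Given an EPS schedule, read off each variable's residue class. \Cref{lem:epsperiodic} guarantees this reading is consistent. \Cref{lem:gcd} then forces each clause task into a residue corresponding to a true literal, so the read-off assignment satisfies every clause.

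\textbf{Polynomial size.} The primes are $O(n\log n)$ in size, so each period is a product of $O(1)$ small primes times $M$ and has polynomially many bits.

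\textbf{Main obstacle.} The hardest step is designing the clause gadget so that (a) every satisfying pattern leaves a free residue for the clause task, while (b) tasks from different clauses sharing a variable never collide unintentionally. Distinct clauses share at most gcd $M\cdot(\text{shared primes})$, so I would try allotting each clause its own residue class modulo an extra prime factor of $M$ to separate clauses completely. I would then verify that the collision analysis localizes to a single variable prime at a time.
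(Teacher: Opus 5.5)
There is a genuine gap. The proposal does not pin down a reduction: the variable tasks, the modulus $M$, and the number and role of the forbidding tasks are left open. The one component that carries the whole proof is a set of periods that forces every EPS schedule to encode one consistent truth value per variable and to block a clause task exactly when its three literals are false. You state this only as a requirement (``placed so that the clause task is blocked exactly in the residue combinations that make all three literals false''), and you then flag it yourself as the unsolved main obstacle.

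The ingredients you do name cannot supply it:
\begin{itemize}
\item \textbf{No forcing.} The reduction chooses only periods; the scheduler chooses offsets. EPS schedules are invariant under a global time shift, so an absolute residue $r_v \bmod q_v$ or a parity class cannot mean ``true.'' Only relative positions of tasks carry information, and you give no argument that forces them.
\item \textbf{Conjunction, not disjunction.} By \Cref{lem:gcd}, a task of period $Mq_u$ excludes just one residue class modulo $Mq_u$ for the clause task, wherever the scheduler puts it. Moreover, with $q_u,q_v,q_w$ distinct primes coprime to $M$, fixing the clause task's class $t$ modulo $M$ makes its residues modulo $Mq_u$, $Mq_v$, $Mq_w$ independent by CRT. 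So the blocking constraints combine conjunctively within each class $t$. A disjunction could only come from the choice of $t$, which your sketch does not engineer.
\item \textbf{No consistency.} \Cref{lem:epsperiodic} gives periodicity, not agreement between the values of $x_u$ read off from different clauses' gadgets. Your suggested fix for obstacle (b), giving each clause its own residue class modulo an extra prime of $M$, would decouple clauses that share a variable and destroy exactly that agreement.
\end{itemize}

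The missing idea in the paper's proof is pigeonhole plus density saturation, not pairwise exclusion alone.
\begin{itemize}
\item \textbf{Literal classes.} Each literal $\ell$ gets its own prime $p_\ell=\trep_1(\ell)$, and all periods are multiples of $2n$, so each job lives in one class modulo $2n$. Each literal has $p_\ell^2-p_\ell$ filler jobs of period $2np_\ell^2$. Fillers of distinct literals have gcd $2n$, so with pigeonhole over the $2n$ classes, literals correspond bijectively to classes.
\item \textbf{Truth values.} The single job $f_i$ of period $2np_{x_i}p_{\bar{x}_i}$ must sit in the class of $x_i$ or of $\bar{x}_i$. That choice \emph{is} the truth value, so consistency is automatic.
\item \textbf{Blocking.} In the class of the literal $\ell$ holding $f_i$, every other possible occupant has gcd exactly $2np_\ell$ with $f_i$. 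So $f_i$ monopolizes one of the $p_\ell$ residue classes modulo $2np_\ell$. The fillers have density exactly $\frac{1}{2n}(1-\frac{1}{p_\ell})$, which saturates the remaining $p_\ell-1$ classes. A clause job must, by \Cref{lem:gcd}, sit in the class of one of its own literals, so it is forced into a class without $f$, i.e.\ a true literal's class.
\item \textbf{Completeness.} Use the EPS analog of monotonicity: schedule the harder instance where $f_i$ has period $2np_\ell$ for its false literal $\ell$, and each clause job has period $2np_\ell^2$ for a satisfying literal $\ell$. Then apply \Cref{lem:epsfill} inside each class.
\end{itemize}
Without some such capacity mechanism, the ``residue bookkeeping'' in your soundness and completeness steps has nothing definite to check.
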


\begin{proof}
The reduction due to \citet{exact} has the following form.

\begin{algorithm}[H]
\caption{Polynomial-time reduction from 3-SAT to EPS, \redeps}\label{algo:epsreduction}

\KwIn{3-SAT instance $C = \bigwedge_{j=1}^m C_j,\quad C_j = \bigvee_{k=1}^{3} c_{jk}$, with variables $x_1, \dots, x_n$}

\For{$i = 1, \dots, 2n$}{
$p_{i} \gets$ the $i^{\text{th}}$ prime greater than $\max(m, n)$
}

$J \gets ()$

\For{$i = 1, \dots n$}{
$\trep_1(x_i) \gets p_{2i - 1}$

$\trep_1(\bar{x}_i) \gets p_{2i}$
}

\For{$i = 1, \dots, n$}{
$\trep_2(x_i) \gets 2n[\trep_1(x_i)]^2$

$b_{x_i} \gets [\trep_1(x_i)]^2 - \trep_1(x_i)$

\For{$j = 1, \dots, b_{x_i}$}{
$J \gets J \sqcup (\trep_2(x_i))$
}
}

\For{$i = 1, \dots, n$}{
$\trep_2(\bar{x}_i) \gets 2n[\trep_1(\bar{x}_i)]^2$

$b_{x_i} \gets [\trep_1(\bar{x}_i)]^2 - \trep_1(\bar{x}_i)$

\For{$j = 1, \dots, b_{\bar{x}_i}$}{
$J \gets J \sqcup (\trep_2(\bar{x}_i))$
}
}

\For{$i = 1, \dots, n$}{
$f_i \gets 2n \, \trep_1(x_i) \, \trep_1(\bar{x}_i)$

$J \gets J \sqcup (f_i)$
}

\For{$i = 1, \dots, j$}{
$\trep_2(C_j) \gets 2n[\trep_1(c_{j 1}) \, \trep_1(c_{j 2}) \, \trep_1(c_{j 3})]^2$

$J \gets J \sqcup (\trep_2(C_j))$
}

\Return{$J$}

\end{algorithm}
The proof that the reduction of \Cref{algo:epsreduction} is sound and complete was said in \cite{exact} to be within an in-preparation full version of the paper. However, that version appears not to have been published, so we provide a proof of correctness of this reduction, for the sake of completeness. Note that we will assume, as \cite{exact} do, that each literal is distinct, and a clause cannot contain both a variable and its negation. Clearly, these assumptions do not affect the NP-hardness of 3-SAT. 

\begin{restatable}{lemma}{lemepsredsat}
\label{lem:epsredsat}
If \redeps \text{} has an EPS schedule, then $C$ is satisfiable.
\end{restatable}

We use the same main ideas to prove that $C$ is satisfiable as those mentioned by \cite{exact}, though we present them in greater detail. A relevant technique in the proof is the notion of partitioning an EPS schedule $S$ into subschedules which are essentially congruence classes modulo $2n$. Because each job period is a multiple of $2n$, a job must occupy a single subschedule, and it is possible to associate each literal with its own subschedule, because the $\trep_1(x_i)$ values are relatively prime. The $f_i$ jobs should be interpreted as taking up all the remaining space (after scheduling the $\trep_2(x_i)$ jobs) within each subschedule they are in and preventing any clause jobs from being placed within that subschedule. This is associated with a false literal in the 3-SAT formula. 

\begin{restatable}{lemma}{lemepssatred}
\label{lem:epssatred}
If $C$ is satisfiable, then \redeps \text{} has an EPS schedule. 
\end{restatable}

We use the same notion of a subschedule from the proof of \Cref{lem:epsredsat}. We show that there is an analog of the monotonicity property of pinwheel scheduling \citep{kawamura} for exact pinwheel scheduling. Using this to make all periods within a subschedule the same (or making one period a factor of the rest of the equal periods), we are able to leverage \Cref{lem:epsfill} to prove the existence of an EPS schedule for \redeps.

It can be seen that \Cref{algo:epsreduction} runs in polynomial time, so since 3-SAT is NP-hard \citep{karp}, we have proven NP-hardness for EPS. 
\end{proof}

As noted in \cite{jacobs2014new}, there appears to be a bug in the proof for the NP-hardness of pinwheel scheduling in concise form that is given in \cite{bar2007windows}. We provide an alternative proof, which also serves as a warm-up to our desired result. 

\begin{proposition} \label{lem:concise}
Pinwheel scheduling is NP-hard for a concise representation in which the number of jobs of a given period is encoded in binary. 
\end{proposition}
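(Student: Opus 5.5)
We reduce from 3-SAT via \Cref{algo:epsreduction}, padding its output so that the density becomes exactly $1$; the padding jobs have a single period, and only their multiplicity is large. Given $C$, compute $A = \text{\redeps}$ and $L = \text{LCM}(A)$. Then form $C' = A \sqcup B$, where $B$ consists of $k$ copies of the period $L$ and $k = L(1 - D(A))$. Every period in $A$ divides $L$, so $L \cdot D(A)$ is an integer. Hence $k$ is a nonnegative integer provided $D(A) \le 1$, and $D(C') = 1$ exactly.

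We must check two facts about $D(A)$. First, $D(A) \le 1$. A direct computation gives a bound of $\sum_{\text{literals}} \frac{1}{2n}\left(1-\frac{1}{\trep_1}\right) + \sum_i \frac{1}{f_i} + \sum_j \frac{1}{\trep_2(C_j)}$. The first sum equals $1 - \frac{1}{2n}\sum_{\text{literals}} \frac{1}{\trep_1}$, and the $f_i$ and clause terms are smaller than the subtracted quantity because the primes exceed $\max(m,n)$. Second, the encoding is polynomial. The number $L$ is a product of polynomially many primes of polynomial size, so $\log L$, and hence $\log k$, are polynomial. The concise instance lists each distinct period of $A$ with a binary multiplicity, together with the single period $L$ with multiplicity $k$, so it has polynomial size and is computable in polynomial time.

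The key observation is that an instance of density exactly $1$ is schedulable as a pinwheel instance if and only if it is EPS-schedulable. One direction is immediate, since an EPS schedule is a valid pinwheel schedule. For the other direction, take by \Cref{lem:periodic} a valid $p$-periodic pinwheel schedule and choose $p$ to be a multiple of every $a_i$. Each window of length $a_i$ contains task $i$, so task $i$ occurs at least $p/a_i$ times per period. Summing over tasks gives at least $p \cdot D = p$ occurrences, which fill all $p$ slots. Hence every inequality is tight: each task occurs exactly $p/a_i$ times with every gap at most $a_i$, which forces every gap to equal $a_i$. So the schedule is exact.

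Soundness then follows. If $C'$ is pinwheel-schedulable, it has an EPS schedule, and deleting the $B$ jobs leaves an EPS schedule for $A$. By \Cref{lem:epsredsat}, $C$ is satisfiable. Completeness also follows. If $C$ is satisfiable, \Cref{lem:epssatred} gives an EPS schedule for $A$. Each $b_i = L$ satisfies $L \mid b_i$ and $b_i \mid b_j$, and $D(C') = 1$, so \Cref{lem:epsfill} extends it to an EPS schedule for $C'$, which is in particular a pinwheel schedule. The main obstacle is the density-one equivalence argument. The other delicate point is the verification that $D(A) \le 1$ from the explicit periods, where the prime-size choice in \Cref{algo:epsreduction} must be used carefully.
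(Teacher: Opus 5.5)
Your proposal is correct and follows the paper's proof essentially step for step. Both pad with $L(1-D(A))$ copies of period $L=\mathrm{LCM}(A)$, get soundness from the density-one/EPS equivalence plus \Cref{lem:epsredsat}, and get completeness from \Cref{lem:epssatred} and \Cref{lem:epsfill}. The differences are additions on your side. You explicitly verify $D(A)\le 1$, so the added multiplicity is a nonnegative integer, which the paper leaves implicit. You check that the binary encoding has polynomial size. You also prove the density-one equivalence directly from \Cref{lem:periodic} instead of citing it.
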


\begin{proof}
Consider the following reduction.

\begin{algorithm}[H]
\caption{Polynomial-time reduction from 3-SAT to Concise PS, \redcon}\label{algo:concise}

\KwIn{3-SAT instance $C = \bigwedge_{j=1}^m C_j,\quad C_j = \bigvee_{k=1}^{3} c_{jk}$, with variables $x_1, \dots, x_n$}

$J \gets$ \redeps

$L \gets \text{LCM}(J)$, the least common multiple of all job periods in $J$  

$J_c \gets$ a concise representation of $J$ which has keys corresponding to job periods and values corresponding to multiplicity. 

\If{\textnormal{key $L$ is not in $J_c$}}{
$J_c[L] \gets 0$
}

$m \gets L \cdot (1 - D(J))$

$J_c[L] \gets J_c[L] + m$

\Return{$J_c$}

\end{algorithm}

We claim that \Cref{algo:concise} is a complete and sound reduction. 

\begin{lemma}\label{lem:redconschedulable}
If \redcon \text{} is schedulable, then $C$ is satisfiable.
\end{lemma}

\begin{proof}
Since \redcon \text{} is schedulable, there is some corresponding schedule, $S$. Now, note that 
$$D(\text{red}_{\text{concise}}(C)) = D(\text{red}_{\text{EPS}}(C)) + \frac{L \cdot (1 - D(\text{red}_{\text{EPS}}(C)))}{L} = D(\text{red}_{\text{EPS}}(C)) + 1 - D(\text{red}_{\text{EPS}}(C)) = 1.$$
In other words, \redcon \text{} is a dense instance, meaning that $S$ is an EPS schedule for \redcon \text{} \citep{covering}. Now, if we delete the jobs that were added with period $L$ and their associated schedulings, we are left with an EPS schedule for \redeps, since those jobs are unaffected. Then, by \Cref{lem:epsredsat}, $C$ is satisfiable. 
\end{proof}

\begin{lemma}\label{lem:redconsatisfiable}
If $C$ is satisfiable, then \redcon \text{} is schedulable.
\end{lemma}

\begin{proof}
Let us define $A = $ \redeps, and let $B$ be the pinwheel instance corresponding to the added jobs (so $L \cdot (1 - D(\text{red}_{\text{EPS}}(C)))$ repetitions of a job with period $L$). Note that since all periods in $B$ are the same, each period is divisible by the last. We also know that \redcon \text{} is dense (see proof of \Cref{lem:redconschedulable}), so $D(A \sqcup B) \leq 1$. By \Cref{lem:epssatred}, there is an EPS schedule for $A$. Additionally, recalling the definition of $L$ as the LCM of the periods in $A$, \Cref{lem:epsfill} applies. In particular, there is an EPS schedule (and therefore a pinwheel schedule) for \redcon. 
\end{proof}

Observe that \Cref{algo:concise} runs in polynomial time, which completes the proof of NP-hardness. 
\end{proof}

Note that the polynomial running time of \Cref{algo:concise} is contingent on the concise representation because for \redeps, the least common multiple $L$ is exponentially large (despite having polynomially many bits) and $1 - D(J)$ is lower bounded by $1/n^c$ for some $c$, meaning that $m = L \cdot (1 - D(J))$ is exponentially large. In standard representation, adding this many jobs of period $L$ would take exponential time, which would break the reduction. 

In order to improve this result to show NP-hardness of pinwheel scheduling in the standard representation, one natural direction is to apply a similar idea as \Cref{algo:concise} of filling up the instance with the LCM, but instead filling up each subschedule (one for each literal) separately. Recall that a subschedule is a congruence class of a schedule modulo $2n$. One immediate hurdle is that even within a subschedule, the LCM of the jobs that may need to be scheduled may be exponentially large, due to each variable possibly being in polynomially many clauses. The following lemma shows that this is not the inherent difficulty of the problem. 

\begin{lemma}[\citealp{tovey}, Theorem 2.3]
3,4-SAT is NP-hard. 
\end{lemma}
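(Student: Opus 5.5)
The plan is a standard occurrence-splitting reduction from 3-SAT: first split variables to get the occurrence bound with 2- and 3-clauses, then pad the 2-clauses back to exactly three literals with a forcing gadget. The padding gadget is the only delicate part.

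\textbf{Splitting variables.} Given a 3-SAT instance $C$ with clauses of exactly three distinct literals, suppose variable $x$ occurs $k$ times. Replace its occurrences by fresh variables $x^{(1)},\dots,x^{(k)}$, one per occurrence, keeping the sign of the original literal. Then add the implication cycle
\[
(\bar{x}^{(1)} \vee x^{(2)}),\ (\bar{x}^{(2)} \vee x^{(3)}),\ \dots,\ (\bar{x}^{(k)} \vee x^{(1)}).
\]
Any satisfying assignment must give all copies the same value, so satisfiability is preserved in both directions. Each copy now occurs exactly three times: once in its original clause and twice in the cycle. If $k=1$ no cycle is needed.

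\textbf{Padding the 2-clauses.} The cycle clauses have only two literals, while the reduction \redeps{} uses clauses $C_j=\bigvee_{k=1}^3 c_{jk}$ of exactly three literals. For each 2-clause $(a\vee b)$ I introduce a fresh variable $z$ that is forced false by a private gadget, and replace the clause by $(a\vee b\vee z)$. This leaves the occurrence counts of $a$ and $b$ at three, and uses $z$ once outside its gadget.

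\textbf{The forcing gadget (main obstacle).} I need a formula $G(z,\dots)$ with the following properties:
\begin{enumerate}[label=(\roman*)]
\item every clause has exactly three distinct literals;
\item every variable occurs at most four times, and $z$ at most three times;
\item $G$ is satisfiable, and every satisfying assignment sets $z$ false.
\end{enumerate}

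The simple local attempts fail. Forcing $z$ false with the four clauses $(\bar z\vee \pm u\vee \pm v)$ uses $z$ too often. Resolving through chains of implications just regenerates 2-clauses. Note also that any exact-3 formula with at most three occurrences per variable is satisfiable (Tovey), so a gadget forcing a value genuinely needs occurrence number four.

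What I would try first is the following. Take an \emph{unsatisfiable} CNF $F$ with exactly three distinct literals per clause and every variable occurring at most four times; Tovey exhibits such a formula, and one can also search for a small one by computer. Shrink $F$ to a minimal unsatisfiable subformula, which keeps both properties. Pick any clause $K\in F$ and a literal $\ell\in K$ on a variable $z$. Then $F\setminus\{K\}$ is satisfiable by minimality, and every satisfying assignment of it must falsify $K$, since otherwise $F$ would be satisfied. In particular every satisfying assignment sets $\ell$ false. In $F\setminus\{K\}$ the variable $z$ occurs at most three times, leaving room for the one extra occurrence.

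So the gadget is $G=F\setminus\{K\}$, with the padded clause $(a\vee b\vee \ell)$ in place of $(a\vee b\vee z)$. Since $\ell$ is forced false, the padded clause is equivalent to $(a\vee b)$ under $G$. Each 2-clause gets its own fresh copy of $G$, which has constant size.

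\textbf{Conclusion.} The resulting formula has exactly three literals per clause and every variable in at most four clauses. It is satisfiable if and only if $C$ is, and it has polynomial size. Since 3-SAT is NP-hard \citep{karp}, so is 3,4-SAT. The one step whose details I would need to verify concretely is the existence of the base unsatisfiable (3,4) formula $F$; this is the content of Tovey's construction.
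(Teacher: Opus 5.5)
Your proof is correct and follows the same cyclic-replacement idea the paper sketches before deferring to Tovey. Your padding of the resulting 2-clauses, using a literal that is forced false by deleting one clause from a minimal unsatisfiable $(3,4)$-formula, is a sound way to fill in the detail the paper leaves to the citation. The one ingredient you defer is easy to exhibit explicitly: the eight clauses $(r\vee u\vee a_1)$, $(u\vee\bar{a}_1\vee a_2)$, $(\bar{a}_1\vee\bar{a}_2\vee a_3)$, $(\bar{a}_1\vee\bar{a}_2\vee\bar{a}_3)$ together with their copies under $u\mapsto\bar{u}$, $a_i\mapsto b_i$ are satisfiable, use every variable at most four times and $r$ only twice, and force $r$ true (with $r$ false, either value of $u$ propagates to a clash on $a_3$ or on $b_3$), so $\bar{r}$ can serve directly as your padding literal, and this gadget plus a copy on fresh variables with $r\mapsto\bar{r}$ is an explicit unsatisfiable $(3,4)$-formula $F$.
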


The main idea behind this lemma is to replace a variable that appears $k$ times with $k$ separate variables embedded in a cyclic structure to ensure consistent truth values. We refer to the work of \cite{tovey} for the details. 

For the remainder of this section, when we use the notation $\trep_1(x_i)$, $\trep_2(x_i)$, $f_i$, or $\trep_2(C_j)$, it refers to the corresponding values set in \Cref{algo:epsreduction}. 

Since 3,4-SAT is a special case of 3-SAT, the correctness of \redeps \text{} still holds, and it can be used as a foundation for other reductions that extend \redeps \text{} by adding further jobs (as in \Cref{algo:concise}). Let us define $C(x_i)$ as the set of clauses that contain variable $x_i$. The LCM of the jobs that may possibly go into a subschedule of variable $x_i$ is a factor of 
\begin{equation}
\label{eq:lcm}
2n \, \trep_1(x_i)^2 \, \trep_i(\bar{x}_i)^2 \prod_{C_j \in C(x_i)} \prod_{k = 1}^3 \, (\trep_1(c_{jk}))^2.
\end{equation}
For 3,4-SAT this number is bounded by a polynomial function of $n$ because $x_i$ is involved in at most 4 clauses which each have 3 literals, so ignoring $2n$, there are at most $2 + 2 + 2 \cdot 3 \cdot 4 = 28$ primes we must multiply together, and the following lemma shows that each prime is polynomial in the input size (which is at least $\max(m, n)$). 

\begin{restatable}{lemma}{lemprimesize}
\label{lem:primesize}
Let $v = \max(m, n)$. For all $i \in [n]$, $v < \trep_1(x_i) \leq v^3$ and $v < \trep_1(\bar{x}_i) \leq v^3$. 
\end{restatable}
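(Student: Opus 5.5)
The primes $\trep_1(x_i)$ and $\trep_1(\bar{x}_i)$ are, by \Cref{algo:epsreduction}, the primes $p_1,\dots,p_{2n}$: the first $2n$ primes greater than $\max(m,n) = v$. The lower bound $v < \trep_1(\cdot)$ is then immediate from the definition. The plan is to bound $p_{2n}$, the largest of them, by showing that the interval $(v, v^3]$ contains at least $2n$ primes. Since $2n \le 2v$, it suffices to show that $(v, v^3]$ contains at least $2v$ primes.

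For this I would use Bertrand's postulate iteratively instead of a quantitative prime number theorem. Bertrand's postulate gives a prime in every interval $(x, 2x]$ for $x \ge 1$. The intervals $(v, 2v], (2v, 4v], \dots, (2^{k-1}v, 2^k v]$ are disjoint and each contains a prime, so $(v, 2^k v]$ contains at least $k$ primes. Taking $k = 2v$, we need $2^{2v} v \le v^3$, i.e.\ $4^v \le v^2$, which is false. So iterated doubling is far too weak, and this step is the main obstacle: we need roughly $v$ primes in $(v, v^3]$, which requires the density of primes rather than just existence in dyadic intervals.

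To get density I would use a Chebyshev-type bound, for instance the explicit estimate $\pi(x) \ge x/\ln x$ for $x \ge 17$ (Rosser--Schoenfeld), together with the trivial bound $\pi(v) \le v$. Then
\[
\pi(v^3) - \pi(v) \ \ge\ \frac{v^3}{3\ln v} - v ,
\]
and this is at least $2v$ as soon as $v^2 \ge 9\ln v$, which holds for every $v \ge 3$ but only where the estimate applies. The explicit bound needs $v^3 \ge 17$, so this settles all $v \ge 3$.

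The cases $v \in \{1,2\}$ I would check by hand. For $v=1$ we have $n = 1$, so we need two primes in $(1,1]$. This is impossible, which shows the statement needs $v \ge 2$ or a harmless convention. I would note that we may assume the 3-SAT instance is nontrivial, e.g.\ $n \ge 2$ (a 3-clause needs 3 distinct variables, so in fact $n \ge 3$). For $v = 2$, again $n \le 2$ and we need at most four primes in $(2, 8]$; there are only three ($3,5,7$), but $n \ge 3$ rules this case out anyway. So we assume $v \ge n \ge 3$, and the estimate above then completes the proof that $p_{2n} \le v^3$.
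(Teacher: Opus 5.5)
Your argument is essentially the paper's. The lower bound holds by construction. For the upper bound, you apply the Rosser--Schoenfeld estimate $\pi(x) > x/\ln x$ at $x = v^3$, subtract the trivial bound $\pi(v) \le v$, and get at least $2v \ge 2n$ primes in $(v, v^3]$. The paper does the same, but with the cruder bound $v/3 \ge 3$, so it proves the analytic case only for $v \ge 9$ and checks $v \in \{3,\dots,8\}$ by computer.

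Your attempt to cover every $v \ge 3$ analytically overreaches by one case. The inequality $v^2 \ge 9\ln v$ fails at $v = 3$, since $9\ln 3 \approx 9.89 > 9$. There your displayed bound gives only $\pi(27) - \pi(3) \ge 27/(3\ln 3) - 3 \approx 5.19$, short of $2v = 6$.

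The repair is immediate. By integrality, $\pi(27) > 27/\ln 27 \approx 8.19$ forces $\pi(27) \ge 9$, hence at least $6$ primes lie in $(3, 27]$. Alternatively, just list $5, 7, 11, 13, 17, 19, 23$. For all $v \ge 4$ your inequality does hold.
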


The left side of the inequalities follows immediately from the process on Line 2 of \Cref{algo:epsreduction}. The right sides follow from a fact about the distribution of prime numbers, which we prove using a result of \cite{counting}.

The next difficulty for the reduction is harder to surpass. In particular, we do not know which literals will be set to true while constructing the pinwheel reduction, so ``filling up'' a subschedule naively will either leave no space for clause jobs which might have been needed, or will leave space for clause jobs which might not have been needed, destroying the completeness of the reduction in either case. 

To solve this, we utilize some more subtle techniques which effectively allow jobs of the appropriate period length to be filled in respective subschedules \textit{after} the truth values of the literals are known. 

We start by introducing the notion of a \textit{variable subschedule}. In particular, it is the union of the subschedules of a literal and its negation, so the total density of a variable subschedule is $\frac{1}{2n} + \frac{1}{2n} = \frac{1}{n}$. We denote the subschedule of variable $x_i$ as $S_i$ (as opposed to $S_{x_i}$ for the subschedule of literal $x_i$). 

We will add jobs corresponding to each variable subschedule so that the remaining density within that subschedule is exactly the sum of densities for all $m$ clause jobs, which we denote as 
$$\tsum := \sum_{j = 1}^m \frac{1}{\trep_2(C_j)}.$$
This can be intuitively thought of as allowing space for all the $m$ clause jobs, though having sufficient density and being able to schedule the jobs are not equivalent. One nuance in this step is that because we are reducing from 3,4-SAT, we could leave space for only the respective clause jobs --- which number at most 4 --- but we do not do this because it wouldn't work with the second step of the reduction. 

Due to leaving space for all $m$ clause jobs, it no longer works to simply add jobs of period length equal to the LCM (the product of the 28 relevant primes), since the density fractions of the other clause jobs have denominators relatively prime to our LCM so it would not be possible to fill up variable subschedules exactly (with total density $\frac{1}{n}$). We develop the following list of allowed job periods to enable an exact density match while also preventing exponentially large numbers of jobs. 

\begin{algorithm}[H]
\caption{Allowed Job Periods, $\text{allowed\_periods}(\tind, C)$}\label{algo:allowed}

\KwIn{Variable index $\tind \in [n]$, 3,4-SAT instance $C = \bigwedge_{j=1}^m C_j,\quad C_j = \bigvee_{k=1}^{3} c_{jk}$, with variables $x_1, \dots, x_n$}

$\tpers \gets []$

$\tper \gets 2n$

\For{$i = 0, 1, \dots, n - 1$}{
$\textsc{clauses} \gets 0$

$\tnex \gets (\tind + i - 1 \bmod{n}) + 1$

\For{\textnormal{each clause $C_j$ that contains the variable $x_{\tnex}$}}{
\For{$k = 1, 2, 3$}{
$\tper \gets \tper \cdot \trep_1(c_{jk})^2$
}

$\textsc{clauses} \gets \textsc{clauses} + 1$
}

$\tper \gets \tper \cdot \trep_1(x_{\tnex})^{2} \cdot \trep_1(\bar{x}_{\tnex})^{2} \cdot \trep_1(x_{\tnex})^{(4 - \textsc{clauses}) \cdot 6}$

Append $\tper$ to $\tpers$
}

\Return{$\textnormal{\tpers}$}

\end{algorithm}

Let us define $\tpers_i := \text{allowed\_periods}(i, C)$ for all $i \in [n]$.  We will interpret $\tpers_i$ as being 0-indexed. Note that $\tpers_i[0]$ should be interpreted as being a multiple of the LCM from \Cref{eq:lcm}. We now state relevant lemmas about \Cref{algo:allowed}. 

\begin{lemma}
\label{lem:shared}
For any $i, j \in [n]$, $\tpers_i[n - 1] = \tpers_j[n - 1]$. 
\end{lemma}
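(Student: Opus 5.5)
We will prove the lemma by writing $\tpers_i[n-1]$ in closed form as a product that does not depend on $\tind$. Since \tper{} only changes by multiplication in \Cref{algo:allowed}, an easy induction on the loop counter shows that after iteration $i$ (for $i = 0, \dots, n-1$),
\[
  \tpers_{\tind}[i] \;=\; 2n \prod_{s=0}^{i} F(\tnex_s),
\]
where $\tnex_s = ((\tind + s - 1) \bmod n) + 1$ is the variable index visited in iteration $s$. Here $F(v)$ is the product of all factors multiplied into \tper{} during an iteration with $\tnex = v$:
\[
  F(v) = \Bigl(\prod_{C_j \ni x_v} \prod_{k=1}^3 \trep_1(c_{jk})^2\Bigr)\cdot \trep_1(x_v)^2\,\trep_1(\bar x_v)^2\,\trep_1(x_v)^{6(4-|C(x_v)|)}.
\]
The key observation is that $F(v)$ depends only on $v$ and the fixed formula $C$, not on $\tind$ or on how many iterations have already been performed. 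The counter \textsc{clauses} is reset to $0$ at the start of each outer iteration, so it equals $|C(x_v)|$ when the final multiplication is made.

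It then remains to check that, as $s$ ranges over $0,\dots,n-1$, the index $\tnex_s$ ranges over all of $[n]$ exactly once. The map $s \mapsto \tind + s - 1 \bmod n$ is a bijection from $\{0,\dots,n-1\}$ to $\{0,\dots,n-1\}$, since it is a translation in $\mathbb{Z}/n\mathbb{Z}$. Adding $1$ makes it a bijection onto $[n]$. Because multiplication is commutative, we get
\[
  \tpers_{\tind}[n-1] = 2n \prod_{v=1}^n F(v)
\]
for every $\tind \in [n]$. This expression is independent of $\tind$, so $\tpers_i[n-1] = \tpers_j[n-1]$.

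There is no real obstacle here. The only points needing care are that the per-iteration factor depends only on \tnex{} (and not on accumulated state), and that the cyclic indexing is a permutation of $[n]$. Both follow directly from reading the pseudocode.
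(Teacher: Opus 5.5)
Your proof is correct and follows the same approach as the paper: since the loop variable \tnex{} visits each index of $[n]$ exactly once regardless of $\tind$, and each iteration multiplies \tper{} by a factor that depends only on the visited variable, commutativity makes $\tpers_i[n-1]$ independent of $i$. Your write-up is slightly more careful than the paper's, which leaves implicit that each iteration's factor depends only on \tnex{} and gives the range of \tnex{} as $\{0,\dots,n-1\}$ rather than $[n]$.
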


\begin{proof}
For any $i \in [n]$, the set of values attained by $\tnex$ (from Line 5 of \Cref{algo:allowed}) is exactly $\{ 0, 1, \dots, n - 1 \}$ with no repetition. This is because \Cref{algo:allowed} wraps around to all possible values. Multiplication is commutative, so the final product at the end is the same for any $i \in [n]$. 
\end{proof}

Let us define $P$ as the shared value of $\tpers[n - 1]$ guaranteed by \Cref{lem:shared}. 

\begin{lemma}
\label{lem:allowedcyclic}
For any $i \in [n]$ and $j \in [n - 1]$, 
$$2n \, \tpers_i[j] = \tpers_i[0] \, \tpers_{(i \text{ mod } n) + 1}[j - 1].$$
\end{lemma}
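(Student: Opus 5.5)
The plan is to write $\tpers_i[j]$ explicitly as a product of per-iteration factors from \Cref{algo:allowed}, and then compare both sides of the identity factor by factor. For the loop over $k = 0,1,\dots,n-1$ (renaming the loop index to avoid a clash with the statement's $i$), let $\nu_i(k) = ((i + k - 1) \bmod n) + 1$ be the value of $\tnex$. Define the factor contributed by variable $x_v$ as
\[
F(v) = \trep_1(x_v)^{2}\,\trep_1(\bar{x}_v)^{2}\,\trep_1(x_v)^{(4 - |C(x_v)|)\cdot 6} \prod_{C_{j'} \in C(x_v)} \prod_{k'=1}^{3} \trep_1(c_{j'k'})^2 .
\]
This factor depends only on $v$, not on $\tind$, since the inner loop multiplies the clause primes and then applies the Line 12 update. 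Because $\tper$ starts at $2n$ and is multiplied cumulatively, induction on $j$ gives
\[
\tpers_i[j] = 2n \prod_{k=0}^{j} F(\nu_i(k)).
\]

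With this closed form the identity is bookkeeping. First, $\tpers_i[0] = 2n\,F(\nu_i(0)) = 2n\,F(i)$, since $\nu_i(0) = ((i-1)\bmod n)+1 = i$ for $i \in [n]$. Next, set $i' = (i \bmod n)+1$. Then $\nu_{i'}(k) = \nu_i(k+1)$ for every $k$: both equal $((i+k) \bmod n) + 1$, which needs a small check of the wrap-around case $i = n$, where $i' = 1$. Hence
\[
\tpers_{i'}[j-1] = 2n\prod_{k=0}^{j-1}F(\nu_i(k+1)) = 2n\prod_{k=1}^{j}F(\nu_i(k)).
\]
Multiplying this by $\tpers_i[0] = 2n\,F(\nu_i(0))$ gives $(2n)^2\prod_{k=0}^{j}F(\nu_i(k)) = 2n\,\tpers_i[j]$, which is the claim. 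The identity is only meaningful for $1 \le j \le n-1$, matching the index range of the 0-indexed list and $j\in[n-1]$.

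The main obstacle is simply establishing the closed form carefully. The per-variable factor must be independent of the starting index, which holds because the $\textsc{clauses}$ counter is reset at each iteration and the clause loop ranges over $C(x_{\tnex})$ alone. The modular index shift must also be verified, including at the boundary $i = n$. Neither step needs any earlier lemma beyond the definitions, although the argument mirrors the commutativity reasoning in \Cref{lem:shared}.
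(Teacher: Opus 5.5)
Your proposal is correct and follows the paper's own argument. The paper writes $\tpers_i[j]/2n$ as the product of the prime factors from the first $j+1$ loop iterations, splits off the first iteration as $\tpers_i[0]/2n$, and uses the cyclic behavior of $\tnex$ to identify the remaining $j$ factors with $\tpers_{(i \bmod n)+1}[j-1]/2n$. Your closed form via $F(v)$ and your explicit check of the index shift at the wrap-around $i=n$ just make that bookkeeping more precise.
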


\begin{proof}
The result is equivalent to proving that
$$\frac{\tpers_i[j]}{2n} = \frac{\tpers_i[0]}{2n} \cdot \frac{\tpers_{(i \text{ mod } n) + 1}[j - 1]}{2n}$$

The quantity $\tpers_i[j]/2n$ can be produced by multiplying all the primes from $j + 1$ iterations of the loop on Line 3 of \Cref{algo:allowed}. Note that the $2n$ in the denominator comes from the initial setting of $\tper = 2n$ from Line 2 of \Cref{algo:allowed}. 

We separate these $j + 1$ iterations into the first iteration and the remaining $j$ iterations. The first iteration produces primes which multiply together to $\frac{\tpers_i[0]}{2n}$, by definition of $\tpers_i[0]$. The remaining $j$ iterations, due to the cyclic nature of the $\tnex$ variable on Line 5 of \Cref{algo:allowed}, multiply together to $\tpers_{(i \text{ mod } n) + 1}[j - 1]/2n$. Then, 
$$\frac{\tpers_i[j]}{2n} = \frac{\tpers_i[0]}{2n} \cdot \frac{\tpers_{(i \text{ mod } n) + 1}[j - 1]}{2n},$$
as desired. 
\end{proof}

\begin{restatable}{lemma}{lemperiodsize}
\label{lem:periodsize}
For any $i \in [n]$, we have $2n v^{28} \leq \tpers_i[0] \leq 2n v^{84}$. For any $i \in [n]$ and $j \in [n - 1]$,  
$$\text{} v^{28} \leq \frac{\tpers_i[j]}{\tpers_i[j - 1]} \leq v^{84}$$
\end{restatable}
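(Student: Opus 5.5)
The bound follows from counting prime factors in a single iteration of the loop on Line 3 of \Cref{algo:allowed}. Fix $i$ and one iteration with variable $x_{\tnex}$. Let $c \le 4$ be the number of clauses containing $x_{\tnex}$, which is at most 4 because $C$ is a 3,4-SAT instance. I will show that this iteration multiplies $\tper$ by a product of exactly $28$ primes, each drawn from $\{\trep_1(\ell)\}$ and counted with multiplicity.

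The count is as follows. Each clause contributes $3$ squared primes, giving $6c$ prime factors in total. The factors $\trep_1(x_{\tnex})^2\trep_1(\bar x_{\tnex})^2$ contribute $4$ more. The padding factor $\trep_1(x_{\tnex})^{6(4-c)}$ contributes $24-6c$. The total is $6c+4+24-6c=28$, independently of $c$. The padding exponent in the algorithm was presumably chosen precisely so that this count is constant.

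By \Cref{lem:primesize}, every prime $\trep_1(\ell)$ satisfies $v<\trep_1(\ell)\le v^3$. A product of $28$ such primes therefore lies in $[v^{28},v^{84}]$; the lower bound uses only $\trep_1(\ell)> v$, which gives a strict inequality, so the stated non-strict inequality certainly holds. The second claim follows by taking the $j$-th iteration ($j\in[n-1]$): there $\tpers_i[j]/\tpers_i[j-1]$ is exactly the multiplier contributed by that iteration, since $\tper$ is only ever multiplied and the values are appended in order. For the first claim, $\tpers_i[0]=2n\cdot(\text{multiplier of iteration }0)$, because $\tper$ is initialized to $2n$ on Line 2. This gives $2nv^{28}\le\tpers_i[0]\le 2nv^{84}$.

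The main obstacle is the exact factor count: I need to confirm that $c\le 4$, so the padding exponent is nonnegative. This is guaranteed by the 3,4-SAT restriction, under which each variable appears in at most four clauses. I also need to check that a clause containing both $x_{\tnex}$ and a duplicate literal does not change the count. It does not: the count is by multiplicity, not by distinct primes, and each clause always contributes exactly three squared factors. Nothing else is needed beyond \Cref{lem:primesize}.
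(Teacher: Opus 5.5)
Your proposal is correct and follows the paper's proof: both reduce the claim to showing that each iteration of the loop in \Cref{algo:allowed} multiplies $\tper$ by exactly $4 + 6c + 6(4-c) = 28$ primes, counted with multiplicity. Both then bound each of these primes between $v$ and $v^3$ using \Cref{lem:primesize}. Your remarks that $c \le 4$ keeps the padding exponent nonnegative, and that $\tpers_i[0]$ carries the extra initial factor $2n$, supply details the paper's proof leaves implicit.
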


We show that in each iteration of the for loop on Line 3 of \Cref{algo:allowed}, $\tper$ is multiplied by exactly 28 primes, as in \Cref{eq:lcm}, and we use \Cref{lem:primesize} to get upper and lower bounds on the size of each prime. 

Note that while \Cref{algo:allowed} produces periods with exponential value (which also means that our proof establishes weak NP-hardness rather than strong NP-hardness), the entire algorithm runs in polynomial time. 

As discussed earlier, it will be easy to show that our reduction is sound. For completeness, the idea is, for all $i \in [n]$, to assign the variable subschedule $S_i$, a list of jobs that have periods contained within $\tpers_i$. Then, the added jobs can be viewed as the $B$ instance within \Cref{lem:epsfill} (since $\tpers_i[j]$ clearly divides $\tpers_i[k]$ for any $0 \leq j \leq k \leq n - 1$) and the existing jobs (e.g., $\trep_1(x_i)$ and $f_i$ jobs) can be viewed as the $A$ instance, so \Cref{lem:epsfill} can be applied to show schedulability. The first step of the reduction explicitly assigns each subschedule a certain density of jobs, which all have periods within $\tpers_i$, using the following algorithm. 

\begin{algorithm}[H]
\caption{Greedy Job Addition Algorithm, $\text{greedy}(\tind, d, C)$}\label{algo:greedy}

\KwIn{Variable index $\tind \in [n]$, $d \in \left[ 0, \frac{1}{n} \right]$ such that $dP \in \mathbb{Z}$, 3,4-SAT instance $C = \bigwedge_{j=1}^m C_j,\quad C_j = \bigvee_{k=1}^{3} c_{jk}$, with variables $x_1, \dots, x_n$}

$\tpers \gets \textsc{allowed\_periods}(\tind, C)$

$\tall \gets ()$

\For{$i = 1, \dots, n$}{
$\tper \gets \tpers[i - 1]$

$\tjob \gets \lfloor \tper \cdot d \rfloor$

$\begin{aligned}
d \gets d - \frac{\tjob}{\tper}
\end{aligned}$

\For{$l = 1, \dots, \textnormal{\tjob}$}{
Add $\tper$ to $\tall$
}
}

\Return{$\textnormal{\tall}$}

\end{algorithm}

\begin{figure}[htbp]
    \centering
    \begin{minipage}{0.4\textwidth}
        \centering
        \includegraphics[width=\linewidth]{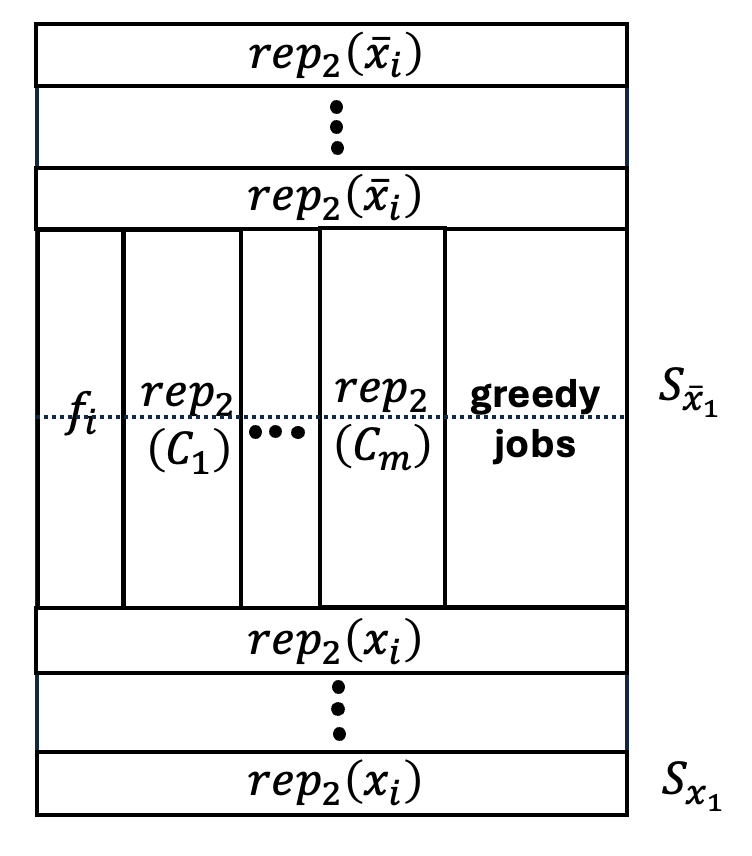}
        \caption{Greedy job additions fill up variable subschedules.}
        \label{fig:greedy}
    \end{minipage}
    \hfill
    \begin{minipage}{0.5\textwidth}
        \centering
        \includegraphics[width=\linewidth]{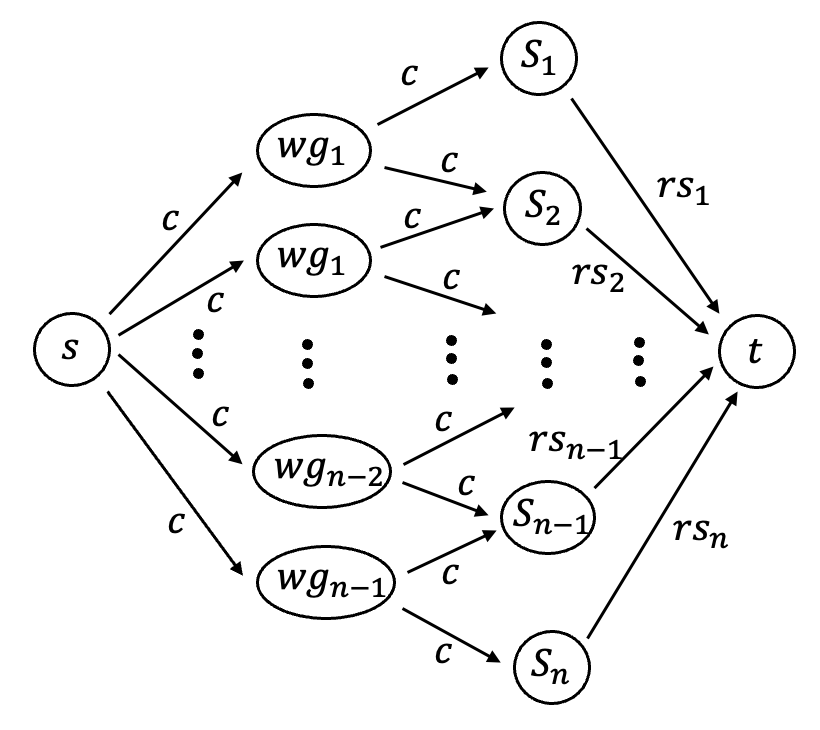}
        \caption{Density Flow Graph. Note that $c = \tsum$ and $rs_i$ is the remaining space in the subschedule of variable $x_i$}
        \label{fig:flow}
    \end{minipage}
\end{figure}

Note that this algorithm will be used for each variable to leave density within each subschedule that is exactly enough for all the clauses to be scheduled within it. \Cref{fig:greedy} provides a visual of the idea behind the first step of the reduction. In particular, the greedy jobs have exactly enough density to fill up the variable subschedule $S_i$ if all of the clause jobs are also placed within $S_i$. Note that crossing the lines between $S_{x_1}$ and $S_{\bar{x}_1}$ does not indicate that the corresponding job can simultaneously occupy both literal subschedules, but rather than it could occupy either, and in total both subschedules would be filled if all displayed jobs from \Cref{fig:greedy} were scheduled within $S_{x_1}$ or $S_{\bar{x}_1}$. In addition, \Cref{fig:greedy} is not to scale. We now state relevant lemmas about \Cref{algo:greedy}.

\begin{restatable}{lemma}{lemgreedypol}
\label{lem:greedypol}
\Cref{algo:greedy} runs in polynomial time. 
\end{restatable}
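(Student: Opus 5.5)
Two things need bounding: the arithmetic cost of the steps in \Cref{algo:greedy}, and the total number of jobs it adds, since every added job is written out explicitly. The second is the substantive point, because it is exactly what prevented \Cref{algo:concise} from being a polynomial-time reduction in standard representation.

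\textbf{Arithmetic cost.} The call to \Cref{algo:allowed} performs $n$ outer iterations. Each iteration visits at most $4$ clauses and does $O(1)$ multiplications. Each multiplication is by a prime which, by \Cref{lem:primesize}, has $O(\log v)$ bits. By \Cref{lem:periodsize}, every entry of $\tpers$ is at most $2n v^{84n}$, so it has $O(n\log v)$ bits. Hence \Cref{algo:allowed} runs in polynomial time.

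In the main loop of \Cref{algo:greedy} we keep $d$ as an exact rational. Initially $dP\in\mathbb{Z}$. Every $\tper$ divides $P$, since each $\tpers[j]$ divides $\tpers[n-1]=P$ by \Cref{lem:shared}. So after each update $d$ remains a multiple of $1/P$. We can therefore store $d$ as the integer $dP\le P/n$, which has polynomially many bits. Computing $\lfloor \tper\cdot d\rfloor=\lfloor \tper\cdot (dP)/P\rfloor$ then takes one multiplication and one division of polynomial-size integers, and so does the update of $d$.

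\textbf{Number of jobs.} This is the main obstacle. We bound $\tjob$ in each iteration.
\begin{itemize}
\item For $i=1$: $\tjob\le \tpers[0]\cdot\frac1n\le 2v^{84}$, using \Cref{lem:periodsize}.
\item For $i\ge2$: after iteration $i-1$, the remaining $d$ equals the fractional part of $\tpers[i-2]\cdot d_{\text{old}}$ divided by $\tpers[i-2]$. So $d<1/\tpers[i-2]$. Then
\[
\tjob=\lfloor \tpers[i-1]\, d\rfloor < \frac{\tpers[i-1]}{\tpers[i-2]}\le v^{84},
\]
again by \Cref{lem:periodsize}.
\end{itemize}
The total number of jobs added is therefore $O(n v^{84})$, which is polynomial in the input size. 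Writing them out costs polynomially many periods of polynomial bit length.

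Combining the two parts, \Cref{algo:greedy} runs in polynomial time.
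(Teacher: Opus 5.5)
Your proof is correct and follows the paper's argument. Like the paper, you bound $\tjob$ in the first iteration by $\tpers[0]/n \le 2v^{84}$, and in each later iteration by $\tpers[i-1]/\tpers[i-2] \le v^{84}$, using $d < 1/\tpers[i-2]$ together with \Cref{lem:periodsize}; your explicit bit-length accounting (storing $dP$ as an integer, periods of $O(n\log v)$ bits) fills in the arithmetic-cost step that the paper leaves implicit when it says that bounding $\tjob$ suffices.
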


We show that on each iteration of Line 3 of \Cref{algo:greedy}, the value of $\tjob$ is polynomial in the input size by leveraging \Cref{lem:primesize} and the structure of \Cref{algo:allowed}. 

\begin{restatable}{lemma}{lemgreedyd}
\label{lem:greedyd}
$D(\text{greedy}(\tind, d, C)) = d$. 
\end{restatable}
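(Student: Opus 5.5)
We track the residual density through the loop of \Cref{algo:greedy}. Write $d_0 = d$ for the input value and $d_i$ for the value of $d$ after the $i$-th iteration. Also write $q_i = \tpers[i-1]$ for the period used in iteration $i$ and $t_i = \lfloor q_i d_{i-1} \rfloor$ for the corresponding value of $\tjob$.

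Each iteration adds $t_i$ jobs of period $q_i$, so it contributes exactly $t_i/q_i$ to the density of $\tall$. It also sets $d_i = d_{i-1} - t_i/q_i$. Summing over the iterations telescopes:
\[
D(\text{greedy}(\tind, d, C)) = \sum_{i=1}^n \frac{t_i}{q_i} = d_0 - d_n .
\]
So the lemma is equivalent to showing that $d_n = 0$.

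To prove $d_n = 0$, we use the fact that the last period is the common value of \Cref{lem:shared}: $q_n = \tpers[n-1] = P$.

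First, $d_{n-1} \cdot P$ is an integer. By hypothesis $d_0 P \in \mathbb{Z}$. Moreover, each $q_i$ divides $P$, because the periods in $\tpers$ are built by successive multiplication in \Cref{algo:allowed}, so $\tpers[j]$ divides $\tpers[n-1]$. Hence each subtracted term $t_i/q_i$ is an integer multiple of $1/P$, and $d_{n-1}$ is an integer multiple of $1/P$ as well.

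Second, each $d_i$ is nonnegative. Indeed $t_i \le q_i d_{i-1}$, so $d_i \ge 0$, and by induction all $d_i \ge 0$.

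In the final iteration the period is $q_n = P$ and $P d_{n-1}$ is a nonnegative integer, so the floor is exact: $t_n = P d_{n-1}$. Therefore $d_n = d_{n-1} - t_n/P = 0$, and the lemma follows.

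The only step requiring care is the divisibility claim $\tpers[j] \mid P$ for all $j$. It holds because each iteration of \Cref{algo:allowed} multiplies $\tper$ by a positive integer and appends the result, so earlier entries divide later ones, and the last entry is $P$ by \Cref{lem:shared}.
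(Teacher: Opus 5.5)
Your proof is correct and takes the same approach as the paper. Both use the invariant $d = d_{\text{init}} - D(\tall)$ and the fact that every entry of $\tpers$ divides $P = \tpers[n-1]$, so $P\,d$ is an integer before the last iteration and the final floor is exact. Your explicit check that each $d_i \ge 0$ (so each $\tjob$ is nonnegative and the inner loop really adds $t_i$ jobs) is a small detail the paper leaves implicit.
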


We show that \Cref{algo:greedy} preserves the invariant that the value of $d$ in each iteration of Line 3 is an integer multiple of $\frac{1}{P}$, so $dP$ is an integer on the last iteration. 

We now move on to the second step of the reduction. 

\Cref{algo:warm} will be called for all $\tind \in [n - 1]$. Then, as shown in \Cref{fig:warm}, \Cref{algo:warm} will add, for all $i \in [n - 1]$, jobs that can be either scheduled within subschedule $S_i$ or subschedule $S_{i + 1}$. In particular, recall that the goal is to assign each subschedule $S_j$, jobs with periods in $\tpers_j$. It is clear how to do this for $j = i$ since \Cref{algo:warm} produces jobs with periods in $\tpers_j$ when called with $\tind = j$. 

For $j = i + 1$, although the jobs generated do not have periods within $\tpers_{j}$, we will leverage the cyclic guarantee of \Cref{lem:allowedcyclic} to show that the jobs constructed by \Cref{algo:warm} can be ``converted'' to jobs which have periods that are in $\tpers_j$ (see the proof of \Cref{lem:psredsat} for details). The warm start (Lines 3 to 7) and skipping $i = 1$ in Line 8 of \Cref{algo:warm} enable the conversion of elements from $\tpers_i$ to elements of $\tpers_{i + 1}$. 

\begin{algorithm}[H]
\caption{Warm Greedy Job Addition Algorithm, $\text{warm\_greedy}(\tind, d, C)$}\label{algo:warm}

\KwIn{Variable index $\tind \in [n]$, $\begin{aligned}d \in \left[ \frac{\tpers[0]}{n \, \tpers[1]}, \frac{1}{n} \right] \end{aligned}$ (see Line 1) such that $dP \in \mathbb{Z}$, 3,4-SAT instance $C = \bigwedge_{j=1}^m C_j,\quad C_j = \bigvee_{k=1}^{3} c_{jk}$, with variables $x_1, \dots, x_n$}

$\tpers \gets \text{allowed\_periods}(\tind, C)$

$\tall \gets ()$

\For{$i = 3, 4, \dots, n$}{
$\begin{aligned}
\tjob \gets \frac{\tpers[0] \, \tpers[i - 1]}{2n \, \tpers[i - 2]}
\end{aligned}$

\vspace{5pt}

$\begin{aligned}
d \gets d - \frac{\tjob}{\tpers[i - 1]}
\end{aligned}$

\For{$l = 1, \dots, \textnormal{\tjob}$}{
Add $\tpers[i - 1]$ to $\tall$
}
}

\For{$i = 2, 3, \dots, n$}{
$\tjob \gets \left\lfloor \tpers[i - 1] \cdot d \right\rfloor$

$\begin{aligned}
d \gets d - \frac{\tjob}{\tpers[i - 1]}
\end{aligned}$

\For{$l = 1, \dots, \textnormal{\tjob}$}{
Add $\tpers[i - 1]$ to $\tall$
}
}

\Return{$\textnormal{\tall}$}

\end{algorithm}

\begin{figure}[htbp]
    \centering
    \includegraphics[width=0.8\textwidth]{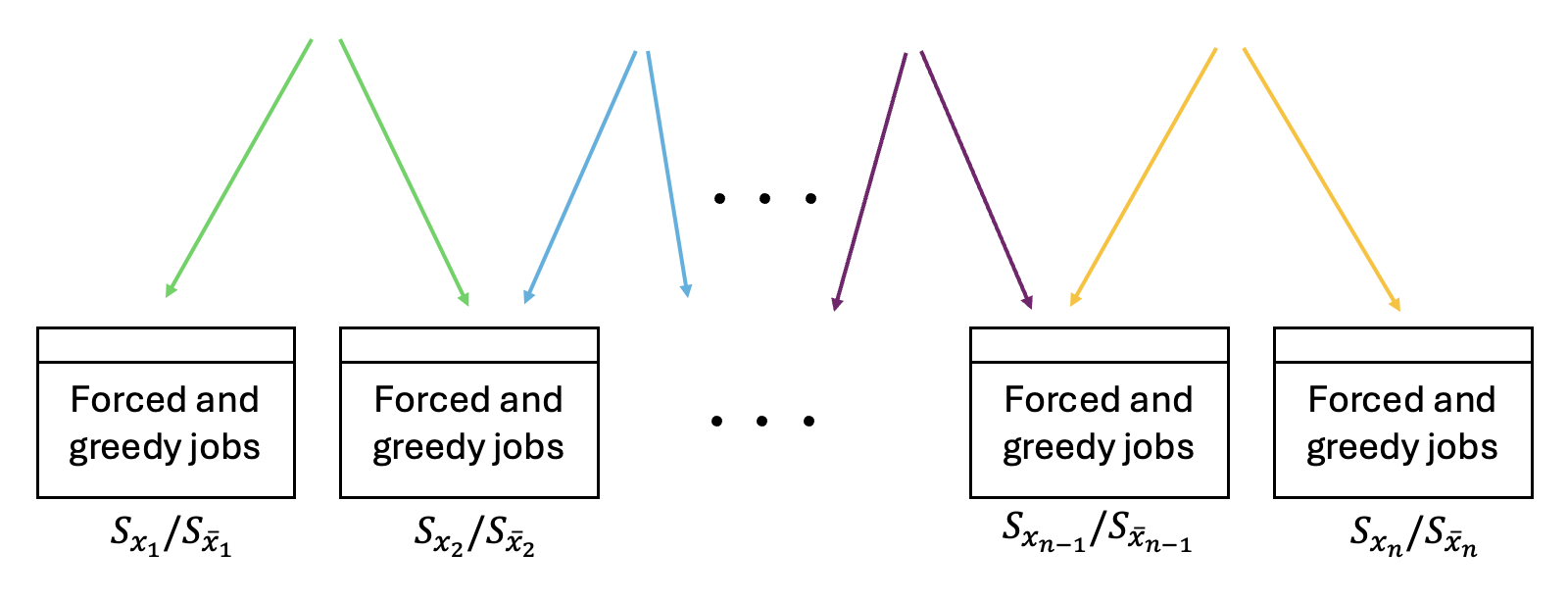}
    \caption{Warm Greedy Job Addition. An amount of density equal to $\tsum$ is left within each variable subschedule after the forced jobs and greedy jobs, and the goal of the warm greedy jobs is to fill any remaining space that is not needed by clause jobs.}
    \label{fig:warm}
\end{figure}

We now state the analogs of \Cref{lem:greedypol} and \Cref{lem:greedyd} for \Cref{algo:warm}.

\begin{restatable}{lemma}{lemwarmpol}
\label{lem:warmpol}
\Cref{algo:warm} runs in polynomial time.
\end{restatable}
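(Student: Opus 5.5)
Proving that \Cref{algo:warm} runs in polynomial time amounts to bounding both the number of loop iterations and the number of jobs added, since every arithmetic step is on numbers of polynomial bit-length.

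Both outer loops (Lines 3 and 8) run at most $n$ times. The call to \Cref{algo:allowed} on Line 1 is polynomial: it has $n$ iterations, each multiplying by $28$ primes. Each prime is at most $v^3$ by \Cref{lem:primesize}, so every entry of $\tpers$ is at most $2n v^{84n}$ and has $O(n\log v)$ bits. Hence all arithmetic on periods and on the rational $d$ is on polynomial-size numbers. For $d$, the denominator always divides $P$, since every entry of $\tpers$ divides $P=\tpers[n-1]$.

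The real issue is the inner loops, which add $\tjob$ jobs one at a time. So I will show that each $\tjob$ is polynomially bounded.

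\textbf{Line 4.} Here $\tjob = \frac{\tpers[0]\,\tpers[i-1]}{2n\,\tpers[i-2]}$.
\begin{itemize}
\item It is an integer: $\tpers[i-1]/\tpers[i-2]$ is an integer by the construction in \Cref{algo:allowed}, and $2n$ divides $\tpers[0]$.
\item It is polynomially bounded: \Cref{lem:periodsize} gives $\tpers[0]/(2n)\le v^{84}$ and $\tpers[i-1]/\tpers[i-2]\le v^{84}$, so $\tjob\le v^{168}$.
\end{itemize}

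\textbf{Line 9.} Here $\tjob=\lfloor \tpers[i-1]\cdot d\rfloor$, with the current value of $d$. I will argue, as in \Cref{lem:greedypol}, that after processing index $i-1$ the remaining $d$ is less than $1/\tpers[i-2]$. Indeed, each step takes $\lfloor \tpers[i-2]\, d\rfloor$ copies, leaving $d - \lfloor \tpers[i-2] d\rfloor/\tpers[i-2] < 1/\tpers[i-2]$. Then $\tjob < \tpers[i-1]/\tpers[i-2]\le v^{84}$.

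For the first iteration of this loop ($i=2$), I need the $d$ left after the warm start. It is at most $1/n$, provided the warm start keeps $d$ nonnegative. The hypothesis $d\ge \tpers[0]/(n\,\tpers[1])$ on the input is exactly what makes this work. The warm-start densities telescope against this lower bound: each Line-4 step subtracts $\tpers[0]/(2n\,\tpers[i-2])$, and summing these is dominated by the geometric decay from \Cref{lem:periodsize}. Even crudely, $\tjob\le \tpers[1]\cdot \frac1n \le 2v^{168}$ at $i=2$ by \Cref{lem:periodsize}.

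\textbf{Total.} Summing over at most $2n$ iterations gives $O(n v^{168})$ jobs, each of polynomial bit-length. The total running time is therefore polynomial in the input size, which is at least $v$.

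The main obstacle I expect is the bookkeeping for $d$ across the transition from the warm start to the greedy phase, in particular confirming that $d$ never goes negative. The running-time bound itself only needs $d\le 1/n$, which holds trivially since $d$ only decreases. So I would first state the bound using $d\le 1/n$ and the ratio bounds, and defer nonnegativity of $d$ to the density lemma analogous to \Cref{lem:greedyd}.
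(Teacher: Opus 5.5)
Your proof is correct and follows essentially the same route as the paper's proof. In the warm-start loop you bound each $\tjob$ by $\frac{\tpers[0]}{2n}$ times a consecutive ratio using \Cref{lem:periodsize}, in the greedy loop you reuse the \Cref{lem:greedypol} argument, and you handle the first greedy iteration ($i=2$) with the bound $\tpers[1]/n \le 2v^{168}$. Your additional remarks on integrality, on the bit-length of $d$, and on nonnegativity of $d$ not being needed for the running-time bound fill in details the paper leaves implicit.
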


The proof is similar to that of \Cref{lem:greedypol}. 

\begin{restatable}{lemma}{lemwarmd}
\label{lem:warmd}
$D(\text{warm}(\tind, d, C)) = d$. 
\end{restatable}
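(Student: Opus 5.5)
The plan is to follow the same invariant-based argument as for \Cref{lem:greedyd}, with the extra work being to control the warm start on Lines 3--7 of \Cref{algo:warm}. Write $d_0$ for the input value of $d$, and let $d$ denote the running value in the algorithm. Each added job of period $\tpers[i-1]$ contributes exactly the amount $\frac{1}{\tpers[i-1]}$ that is subtracted from $d$. Hence the density of the output equals $d_0$ minus the final value of $d$. It therefore suffices to show two things: every $\tjob$ is a nonnegative integer (so the output is a genuine instance), and the final value of $d$ is $0$.

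\textbf{Invariant.} I would first establish that throughout the algorithm $d \ge 0$ and $dP \in \mathbb{Z}$. Since each $\tpers[k]$ divides $\tpers[n-1] = P$, every subtracted quantity $\frac{\tjob}{\tpers[i-1]}$ with $\tjob$ an integer is an integer multiple of $\frac{1}{P}$. Together with the hypothesis $d_0 P \in \mathbb{Z}$, this keeps $dP$ integral.

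\textbf{Warm start (Lines 3--7).} For integrality, $\tjob = \frac{\tpers[0]}{2n}\cdot\frac{\tpers[i-1]}{\tpers[i-2]}$ is a product of two integers. This holds because $2n \mid \tpers[0]$ and $\tpers[i-2] \mid \tpers[i-1]$, as \Cref{algo:allowed} only multiplies by further primes. Each iteration subtracts $\frac{\tpers[0]}{2n\,\tpers[i-2]}$ from $d$. So the total subtracted in the warm start is
\[
\frac{\tpers[0]}{2n}\sum_{k=1}^{n-2}\frac{1}{\tpers[k]}.
\]
By \Cref{lem:periodsize}, consecutive ratios satisfy $\frac{\tpers[k]}{\tpers[k-1]} \ge v^{28} \ge 2$. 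The sum is therefore at most $\frac{1}{\tpers[1]}\sum_{t\ge 0}2^{-t} = \frac{2}{\tpers[1]}$. Hence the total subtracted is at most $\frac{\tpers[0]}{n\,\tpers[1]} \le d_0$, using the stated lower bound on the input $d$. Consequently $d$ stays nonnegative after the warm start. This geometric-series bound is the one step that genuinely uses the input restriction on $d$, and I expect it to be the main (though mild) obstacle. In particular, the constant has to be checked, which is where $v^{28}\ge 2$ is needed.

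\textbf{Greedy phase (Lines 8--12).} Here $\tjob = \lfloor \tpers[i-1]\, d\rfloor$ is a nonnegative integer because $d \ge 0$. Moreover, $d - \frac{\tjob}{\tpers[i-1]} \ge 0$ by the definition of the floor, so the invariant is preserved. In the last iteration $i = n$, we have $\tpers[n-1] = P$ and $dP \in \mathbb{Z}$. The floor is then exact, $\tjob = dP$, and $d$ becomes exactly $0$. Combining this with the accounting identity above gives $D(\text{warm}(\tind, d, C)) = d_0$, as claimed.
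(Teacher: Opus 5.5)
Your proposal is correct and follows essentially the same route as the paper: the accounting invariant $d + D(\tall) = d_{\text{init}}$, the fact that $d$ stays an integer multiple of $1/P$, the geometric-series bound (using $\tpers[k] \ge 2\,\tpers[k-1]$) showing the warm start removes at most $\frac{\tpers[0]}{n\,\tpers[1]} \le d_{\text{init}}$, and exactness of the floor at the final period $P$. You also explicitly check that the warm-start $\tjob$ values are integers and that $d \ge 0$ throughout the greedy phase; the paper leaves these details implicit by deferring to the proof of \Cref{lem:greedyd}.
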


The proof is similar to that of \Cref{lem:greedyd}, with the main complication being that the for loop on Line 3 of \Cref{algo:warm} could add enough jobs that $d$ becomes negative, and we show that this cannot happen as long as $d \geq \frac{\tpers[0]}{n \, \tpers[1]}$, which is the precondition. 

We are now ready to present the full reduction from 3,4-SAT to pinwheel scheduling. 

\begin{algorithm}[H]
\caption{Polynomial-time reduction from 3,4-SAT to pinwheel scheduling, \redps}\label{algo:reduction}

\KwIn{3,4-SAT instance $C = \bigwedge_{j=1}^m C_j,\quad C_j = \bigvee_{k=1}^{3} c_{jk}$, with variables $x_1, \dots, x_n$}

$J \gets$ \redeps

$\begin{aligned}
\tsum \gets \sum_{j = 1}^m \frac{1}{\trep_2(C_j)}
\end{aligned}$

\For{$i = 1, \dots, n$}{
$\begin{aligned}
d_{\text{current}} \gets \frac{b_{x_i}}{\trep_2(x_i)} + \frac{b_{\overline{x}_i}}{\trep_2(\overline{x}_i)} + \frac{1}{f_i} + \tsum
\end{aligned}$

\vspace{3pt}

$\begin{aligned}
d_{\text{remaining}} \gets \frac{1}{n} - d_{\text{current}}
\end{aligned}$

$\tjob \gets \text{greedy}(i, d_{\text{remaining}}, C)$

$J \gets J \sqcup \tjob$
}

\For{$i = 1, \dots, n$}{
$\tjob \gets \text{warm\_greedy}(i, \tsum, C)$

$J \gets J \sqcup \tjob$
}

\Return{$J$}

\end{algorithm}

We justify that \Cref{algo:reduction} demonstrates NP-hardness of the pinwheel problem. 

\begin{lemma}\label{lem:redpol}
\Cref{algo:reduction} runs in polynomial time and satisfies the preconditions of \Cref{algo:greedy} and \Cref{algo:warm}.
\end{lemma}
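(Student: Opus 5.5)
The proof has three parts: polynomial running time, and the two preconditions of \Cref{algo:greedy} and \Cref{algo:warm}.

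\textbf{Running time.} \Cref{algo:epsreduction} is polynomial. Computing $\tsum$ is a sum of $m$ rationals whose denominators have polynomially many bits, since each $\trep_2(C_j)\le 2n v^{18}$ by \Cref{lem:primesize}. Computing $d_{\text{current}}$ is similarly cheap. The loops make $2n$ calls to \Cref{algo:greedy} and \Cref{algo:warm}, each polynomial by \Cref{lem:greedypol} and \Cref{lem:warmpol}.

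\textbf{Integrality preconditions.} Both algorithms require $dP\in\mathbb{Z}$, where $P=\tpers_i[n-1]$ is the common product from \Cref{lem:shared}. For $d=\tsum$, each clause period $2n\,\trep_1(c_{j1})^2\trep_1(c_{j2})^2\trep_1(c_{j3})^2$ divides $P$. Indeed, $P$ contains the factor $2n$, and for each clause $C_j$ it contains $\prod_k \trep_1(c_{jk})^2$. This factor is multiplied in during the iteration for a variable appearing in $C_j$, possibly several times, once for each of its variables, which only helps. Hence $\tsum\cdot P\in\mathbb{Z}$.

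For $d_{\text{remaining}}$, the quantity $\frac1n$ equals $2/(2n)$, and $2n\mid P$. The term $b_{x_i}/\trep_2(x_i)$ has denominator dividing $2n\,\trep_1(x_i)^2$, which divides $P$ because of the factor $\trep_1(x_i)^2\trep_1(\bar x_i)^2$ included in each iteration. The same holds for $\bar x_i$ and for $f_i=2n\,\trep_1(x_i)\trep_1(\bar x_i)$. So $d_{\text{remaining}}P\in\mathbb{Z}$.

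\textbf{Range constraints.} For \Cref{algo:greedy} we need $0\le d_{\text{remaining}}\le\frac1n$. The upper bound holds because $d_{\text{current}}\ge0$. For the lower bound, compute
\[
\frac{b_{x_i}}{\trep_2(x_i)}=\frac{p^2-p}{2np^2}=\frac1{2n}-\frac1{2np}
\]
with $p=\trep_1(x_i)$, and similarly for $\bar x_i$ with $q=\trep_1(\bar x_i)$. Together with $1/f_i=1/(2npq)$, this gives
\[
d_{\text{remaining}}=\frac1{2n}\left(\frac1p+\frac1q-\frac1{pq}\right)-\tsum .
\]
Each clause term satisfies $\frac{1}{\trep_2(C_j)}\le\frac{1}{2n v^6}$, so $\tsum\le\frac{m}{2nv^6}\le\frac{1}{2nv^5}$. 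Meanwhile $\frac1p+\frac1q-\frac1{pq}\ge\frac1p\ge v^{-3}$ by \Cref{lem:primesize}. Hence $d_{\text{remaining}}>0$.

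For \Cref{algo:warm} we need
\[
\frac{\tpers[0]}{n\,\tpers[1]}\le\tsum\le\frac1n .
\]
The upper bound is immediate from the estimate on $\tsum$ above. The lower bound is the step I expect to be the main obstacle. By \Cref{lem:periodsize}, $\tpers[1]/\tpers[0]\ge v^{28}$, so the left side is at most $\frac{1}{nv^{28}}$. On the other side, $\tsum\ge\frac{1}{\trep_2(C_1)}\ge\frac{1}{2nv^{18}}$, since $m\ge1$ and each of the six prime factors is at most $v^3$. Since $v\ge2$, we have $2v^{18}\le v^{28}$, which gives the required inequality. The delicate point is tracking the exponents in \Cref{lem:primesize} and \Cref{lem:periodsize} carefully, including the edge case $n$ small, where \Cref{algo:warm}'s index $1$ must exist; this requires $n\ge2$, which can be assumed without loss of generality.
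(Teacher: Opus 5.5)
Your proof is correct and follows essentially the same route as the paper: integrality holds because every denominator divides $P$, and the warm-greedy lower bound comes from comparing $\tsum \ge 1/(2nv^{18})$ against a $v^{28}$ bound from \Cref{lem:periodsize}. The only difference there is that you use the ratio bound $\tpers[1]/\tpers[0]\ge v^{28}$ directly, where the paper first rewrites the threshold as $2/\tpers_{(i \bmod n)+1}[0]$ via \Cref{lem:allowedcyclic}. You also verify the interval conditions $0\le d_{\text{remaining}}\le \frac1n$ and $\tsum\le\frac1n$, which the paper's proof never checks, so your version is slightly more complete.
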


\begin{proof}
Note that $\tsum$ is composed of a sum of fractions, each of which is an integer multiple of $1/P = 1/\tpers[n - 1]$ (due to Line 8 of \Cref{algo:allowed}), so $\tsum$ itself is an integer multiple of $1/P$. Similarly, $d_{\text{remaining}}$ is composed of a sum and difference of fractions, each of which is an integer multiple of $1/P$, so $d_{\text{remaining}}$ is an integer multiple of $1/P$ as well, satisfying the precondition of \Cref{algo:greedy}. 

We prove that the \Cref{algo:warm} requirement of $d \geq \frac{\tpers[0]}{n \, \tpers[1]}$ holds by proving that $\tsum \geq \frac{2}{\tpers_i[0]}$ for any $i \in [n]$. This is sufficient because by \Cref{lem:allowedcyclic}, 
$$\frac{\tpers_i[0]}{n \, \tpers_i[1]} = \frac{2}{\tpers_{(i \text{ mod } n) + 1}[0]}.$$
Now, for any variable $x_i$, there must be some clause containing $x_i$ (either the positive or negative literal), and by \Cref{algo:epsreduction}, it has period 
$$2n \, (\trep_1(x_i) \, \trep_1(x_a) \, \trep_1(x_b))^2,$$
where $x_a$ and $x_b$ are other literals. By \Cref{lem:primesize}, 
$$2n \, (\trep_1(x_i) \, \trep_1(x_a) \, \trep_1(x_b))^2 \leq 2n \, (\max(m, n)^3 \cdot \max(m, n)^3 \cdot \max(m, n)^3)^2 \leq 2 \max(m, n)^{19}.$$
Since every element in $\tsum$ is positive, we have $\tsum \geq 1/(2 \max(m, n)^{19})$. By \Cref{lem:periodsize}, $\tpers_i[0] \geq \max(m, n)^{28}$, and since $\max(m, n) \geq 3$ (in 3,4-SAT), we have that
$$\frac{1}{2 \max(m, n)^{19}} = \frac{\max(m, n)^9}{4} \cdot \frac{2}{\max(m, n)^{28}} \geq \frac{2}{\max(m, n)^{28}},$$
so $\tsum \geq 2/\tpers_i[0]$, as desired. 

Since \Cref{algo:reduction} satisfies the preconditions of its subroutines, \Cref{lem:greedypol} and \Cref{lem:warmpol} show that each line of \Cref{algo:reduction} can be executed in polynomial time. In addition, the for loops on Lines 3 and 8 of \Cref{algo:reduction} run a polynomial number of times, so the entirety of \Cref{algo:reduction} runs in polynomial time. 
\end{proof}

\begin{lemma}\label{lem:density}
$D(\text{red}_{\text{PS}}(C)) = 1$. 
\end{lemma}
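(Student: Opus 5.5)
We compute $D(\text{red}_{\text{PS}}(C))$ directly from the structure of \Cref{algo:reduction}, splitting the job list $J$ into three pieces: the jobs of \redeps, the jobs added by the greedy loop (Lines 3--7), and the jobs added by the warm greedy loop (Lines 8--10). Density is additive under $\sqcup$, so $D(J)$ is the sum of the densities of these pieces.

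\textbf{Step 1: density of \redeps.} The jobs of \redeps{} are the $\trep_2$ literal jobs, the $f_i$ jobs, and the clause jobs. Grouping by variable gives
\[
D(\text{red}_{\text{EPS}}(C)) = \sum_{i=1}^n \left( \frac{b_{x_i}}{\trep_2(x_i)} + \frac{b_{\bar{x}_i}}{\trep_2(\bar{x}_i)} + \frac{1}{f_i} \right) + \tsum .
\]
Each clause job appears exactly once, so $\tsum$ appears once here, not $n$ times.

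\textbf{Step 2: greedy and warm greedy contributions.} By \Cref{lem:redpol}, the preconditions of \Cref{algo:greedy} hold. Hence \Cref{lem:greedyd} gives that iteration $i$ of the first loop contributes exactly
\[
d_{\text{remaining}}^{(i)} = \frac{1}{n} - \frac{b_{x_i}}{\trep_2(x_i)} - \frac{b_{\bar{x}_i}}{\trep_2(\bar{x}_i)} - \frac{1}{f_i} - \tsum .
\]
Here we need $d_{\text{remaining}}^{(i)} \in [0, 1/n]$. This follows from the precondition check, and nonnegativity should follow from the literal jobs having density $\frac{1}{2n}(1 - 1/\trep_1)$ each, plus small $f_i$ and $\tsum$ terms. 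Similarly, by \Cref{lem:redpol} and \Cref{lem:warmd}, each of the $n$ calls $\text{warm\_greedy}(i, \tsum, C)$ contributes exactly $\tsum$.

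\textbf{Step 3: summing.} Adding the three pieces, every per-variable term $\frac{b_{x_i}}{\trep_2(x_i)} + \frac{b_{\bar{x}_i}}{\trep_2(\bar{x}_i)} + \frac{1}{f_i}$ cancels against its negative in $d_{\text{remaining}}^{(i)}$. For the clause terms, the greedy loop contributes $-n\cdot\tsum$, the warm greedy loop contributes $+n\cdot\tsum$, and \redeps{} contributes $+\tsum$. What remains is $\sum_{i=1}^n \frac{1}{n} = 1$, apart from that extra $\tsum$.

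\textbf{Main obstacle.} The delicate point is the bookkeeping of $\tsum$. As written, $d_{\text{current}}$ subtracts $\tsum$ once per variable while the warm greedy loop adds it back once per variable, so the single copy of the clause jobs in \redeps{} appears unaccounted for. I would resolve this by rechecking the intended semantics against \Cref{fig:warm}. Each variable subschedule should reserve room for all clause jobs, and the warm greedy jobs should fill the unused reservations in all but the subschedule actually hosting each clause job. Since the clause jobs themselves sit in a single subschedule, exactly one reserved copy of $\tsum$ is consumed by the clause jobs, which suggests the warm greedy total should really be $(n-1)\tsum$. I would first verify the count of warm greedy calls (for instance, whether they run only for $\tind \in [n-1]$, as the text says before \Cref{algo:warm}) and adjust the tally accordingly. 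With $n-1$ calls, the sum becomes $\tsum - n\tsum + (n-1)\tsum + 1 = 1$, which is the claimed density.
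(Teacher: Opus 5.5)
Your proposal is correct and follows the paper's proof: split $J$ into the $\text{red}_{\text{EPS}}(C)$ jobs, the greedy jobs and the warm-greedy jobs, apply \Cref{lem:greedyd} and \Cref{lem:warmd}, and cancel, counting the warm-greedy contribution as $\sum_{i=1}^{n-1}\tsum$ exactly as the paper does. This confirms that the loop bound $n$ on Line 8 of \Cref{algo:reduction} is a typo, consistent with the flow argument's identity $\sum_i d_{i,3} = (n-1)\tsum$; your check that $d_{\text{remaining}} \geq \frac{1}{2n\,\trep_1(x_i)} - \tsum > 0$ is a precondition that \Cref{lem:redpol} asserts but never explicitly verifies, so it is a worthwhile addition.
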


\begin{proof}
The total density after Line 1 of \Cref{algo:reduction} is 
$$\tsum + \sum_{i = 1}^n \left( \frac{b_{x_i}}{\trep_2(x_i)} + \frac{b_{\overline{x}_i}}{\trep_2(\overline{x}_i)} + \frac{1}{f_i} \right).$$
From \Cref{lem:greedyd}, we know that the density added collectively by all iterations of the while loop on Line 3 of \Cref{algo:reduction} is 
$$\sum_{i = 1}^n \left[ \frac{1}{n} - \left( \frac{b_{x_i}}{\trep_2(x_i)} + \frac{b_{\overline{x}_i}}{\trep_2(\overline{x}_i)} + \frac{1}{f_i} + \tsum \right) \right].$$ 
By \Cref{lem:warmd}, we know that the density added collectively by all iterations of the while loop on Line 8 of \Cref{algo:reduction} is $\sum_{i = 1}^{n - 1} \tsum$. Adding everything together, the total density at the end is
$$\tsum + \sum_{i = 1}^n \left( \frac{b_{x_i}}{\trep_2(x_i)} + \frac{b_{\overline{x}_i}}{\trep_2(\overline{x}_i)} + \frac{1}{f_i} \right) + \sum_{i = 1}^n \left[ \frac{1}{n} - \left( \frac{b_{x_i}}{\trep_2(x_i)} + \frac{b_{\overline{x}_i}}{\trep_2(\overline{x}_i)} + \frac{1}{f_i} + \tsum \right) \right]$$
$$+ \sum_{i = 1}^{n - 1} \tsum = \sum_{i = 1}^n \frac{1}{n} + \sum_{i = 1}^n \tsum - \sum_{i = 1}^n \tsum + \sum_{i = 1}^n \left( \frac{b_{x_i}}{\trep_2(x_i)} + \frac{b_{\overline{x}_i}}{\trep_2(\overline{x}_i)} + \frac{1}{f_i} \right)$$
$$- \sum_{i = 1}^n \left( \frac{b_{x_i}}{\trep_2(x_i)} + \frac{b_{\overline{x}_i}}{\trep_2(\overline{x}_i)} + \frac{1}{f_i} \right) = \frac{n}{n} + 0 + 0 = 1,$$

as desired. 
\end{proof}

We now claim that \Cref{algo:reduction} is a sound and complete reduction. 

\begin{restatable}{lemma}{lempsredsched}
\label{lem:psredsched}
If \redps \text{} is schedulable, then $C$ is satisfiable. 
\end{restatable}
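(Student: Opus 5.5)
We follow the soundness argument for \Cref{algo:concise}: use density 1 to turn an ordinary pinwheel schedule into an EPS schedule, then discard the padding jobs.

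\textbf{Step 1: density 1 forces exactness.} Let $S$ be a valid pinwheel schedule for \redps. By \Cref{lem:density}, $D(\text{red}_{\text{PS}}(C)) = 1$. As in the proof of \Cref{lem:redconschedulable}, we invoke the fact from \citet{covering} that a valid schedule for a dense instance is an EPS schedule. If one wants to see this directly, it goes as follows. Suppose some task $i$ has two consecutive occurrences at distance strictly less than $a_i$. Then $S$ is still valid, and we may assume it is periodic by \Cref{lem:periodic}. Counting occurrences over many periods, task $i$ then occupies strictly more than a $1/a_i$ fraction of slots. Every other task $k$ occupies at least a $1/a_k$ fraction. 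The total fraction of slots used would then exceed $D = 1$, which is impossible. Hence every task recurs exactly every $a_i$ steps, and $S$ is an EPS schedule.

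\textbf{Step 2: restrict to the EPS jobs.} By construction, $\text{red}_{\text{PS}}(C) = \text{red}_{\text{EPS}}(C) \sqcup G \sqcup W$. Here $G$ is the set of jobs produced by the greedy calls on Line 3 of \Cref{algo:reduction}, and $W$ is the set produced by the warm-greedy calls on Line 8. Delete every occurrence of a job in $G \sqcup W$ from $S$, replacing those time slots with holidays. The jobs of \redeps{} keep their positions, so each of them is still scheduled exactly every $a_i$ steps. The EPS constraints impose nothing beyond exact spacing of each job, so deleting jobs cannot violate them. The result is an EPS schedule for \redeps.

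\textbf{Step 3: conclude.} The input is a 3,4-SAT instance, which is a special case of 3-SAT. Applying \Cref{lem:epsredsat} to the EPS schedule from Step 2 shows that $C$ is satisfiable.

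The only nontrivial step is Step 1. The periodicity and averaging argument there must be made rigorous: we need that a strictly shorter gap in a periodic schedule gives a strictly larger long-run frequency for that task. Since the paper has already used the result of \citet{covering} for exactly this purpose, I would simply cite it, so the remaining steps are routine.
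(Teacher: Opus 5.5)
Your proposal is correct and follows the paper's proof exactly: \Cref{lem:density} plus the result of \citet{covering} makes $S$ an EPS schedule, deleting the jobs added on Lines 3--10 of \Cref{algo:reduction} leaves an EPS schedule for $\text{red}_{\text{EPS}}(C)$, and \Cref{lem:epsredsat} finishes. One small remark on your optional direct argument for Step 1: \Cref{lem:periodic} only guarantees that \emph{some} periodic valid schedule exists, and that schedule need not contain the short gap you started from, so the argument as written does not go through. It suffices to show that every periodic valid schedule is EPS, which is enough for soundness. Alternatively, count the occurrences of task $i$ in a window that starts at the short gap and has length a multiple of the LCM of the periods.
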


The proof is similar to that of \Cref{lem:redconschedulable}. 

\begin{restatable}{lemma}{lempsredsat}
\label{lem:psredsat}
If $C$ is satisfiable, then \redps \text{} is schedulable. 
\end{restatable}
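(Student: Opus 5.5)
We plan to build an EPS schedule for \redps{} directly from a satisfying assignment; an EPS schedule is in particular a valid pinwheel schedule. The construction works one variable subschedule at a time. We begin from the EPS schedule of \redeps{} provided by \Cref{lem:epssatred}. We use its structure rather than treating it as a black box. The literal subschedule $S_{\ell}$ of each literal $\ell$ (a congruence class modulo $2n$) holds the $\trep_2(\ell)$ jobs. For each variable, $f_i$ is placed in the subschedule of the false literal. Each clause job $\trep_2(C_j)$ is placed in the subschedule of one chosen true literal of $C_j$, and we fix one such literal per clause.

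The union $S_i = S_{x_i}\cup S_{\bar{x}_i}$ has density $\frac1n$. Let $t_i$ denote the total density of the clause jobs that the assignment places in $S_i$. Then $\sum_i t_i = \tsum$. By construction, $S_i$ should additionally receive the greedy jobs from $\text{greedy}(i,\cdot,C)$, which have density $\frac1n - d_{\text{current}}$ by \Cref{lem:greedyd}. It should also receive warm-greedy jobs of total density exactly $\tsum - t_i$. The key redistribution step is to split the $n$ warm-greedy families, each of density $\tsum$ by \Cref{lem:warmd}, among the variable subschedules so that $S_i$ receives exactly $\tsum - t_i$. This is possible because the target amounts $\tsum - t_i$ are nonnegative and sum to $(n-1)\tsum$, matching the total warm-greedy density.

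Each warm-greedy job produced with index $i$ has a period in $\tpers_i$. If it is sent to $S_i$ it is usable directly. If it is sent to $S_{i+1}$, we use \Cref{lem:allowedcyclic}: a group of $\tpers_i[0]/(2n)$ jobs of period $\tpers_i[j]$, all placed in the same residue class modulo $\tpers_i[0]$, behaves exactly like a single job of period $2n\,\tpers_i[j]/\tpers_i[0] = \tpers_{i+1}[j-1]$. This is the EPS analogue of merging. The warm start in Lines 3--7 of \Cref{algo:warm}, which adds exactly $\tpers[0]\tpers[i-1]/(2n\,\tpers[i-2])$ copies, is designed so that these multiplicities are divisible as needed. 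The density is therefore split like a flow along the path $S_1 \to S_2 \to \cdots$ (cf.\ \Cref{fig:flow}), using only whole groups. Integrality of the split must be checked using $dP\in\mathbb{Z}$ together with \Cref{lem:greedyd}, with jobs of the finest period $P$ absorbing any remainders.

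After this regrouping, each $S_i$ holds the original jobs $\trep_2(x_i)$, $\trep_2(\bar x_i)$, $f_i$ and its clause jobs. This set, rescaled by $2n$, is EPS-schedulable inside $S_i$. Its periods divide $\tpers_i[0]$ because it is a multiple of the LCM in \Cref{eq:lcm}, where the 3,4-SAT bound and the padding exponent $(4-\textsc{clauses})\cdot 6$ ensure this. The remaining jobs in $S_i$ have periods in $\tpers_i$, which form a divisibility chain, and their total density brings $S_i$ exactly to $\frac1n$. We then apply \Cref{lem:epsfill} inside each $S_i$, working in the subschedule time scale obtained by dividing periods by $2n$ so that the congruence class is treated as its own instance. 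This yields an EPS schedule on each $S_i$, and combining all the $S_i$ gives a schedule for \redps.

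The main obstacle is the conversion and splitting step. We must show that the warm-greedy jobs from index $i$ can be divided between $S_i$ and $S_{i+1}$ in exactly the required amounts, using whole groups that collapse correctly into periods of $\tpers_{i+1}$. We will first try a telescoping argument: the family for $S_1$ sends $\tsum - t_1$ to itself and $t_1$ onward, and more generally each family sends the leftover $\sum_{k\le i} t_k$ forward. We will verify at every level $j$ that the counts the greedy procedure produced allow such an exact split.
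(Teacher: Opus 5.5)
\textbf{Overall verdict: genuine gap.} Your architecture is the paper's:
\begin{itemize}
\item the telescoping split, in which family $i$ keeps $\tsum-T_i$ and forwards $T_i=\sum_{k\le i}t_k$ to $S_{i+1}$, is exactly the flow of \Cref{algo:flow};
\item regrouping $\tpers_i[0]/2n$ copies of $\tpers_i[j]$ into one job of period $\tpers_{i+1}[j-1]$ is the conversion step of \Cref{algo:splitter}, via \Cref{lem:allowedcyclic} and \Cref{lem:partitioning};
\item the finish is \Cref{lem:epsfill} on each variable subschedule.
\end{itemize}
The gap is the step you call the main obstacle and leave as ``we will verify.'' Nearly all of the proof's content lies there. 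You must show that $T_i$ can be carved out of family $i$ in whole groups that convert into periods of $\tpers_{i+1}$, with the rest still in $\tpers_i$.

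\textbf{How the paper handles the finer levels.} It fixes a target $G_i=\text{greedy}(i+1,T_i,C)$, whose count at each level $j\ge 1$ is below $\tpers_{i+1}[j]/\tpers_{i+1}[j-1]$. For $1\le j\le n-2$, the warm start supplies $\frac{\tpers_i[0]}{2n}\cdot\frac{\tpers_i[j+1]}{\tpers_i[j]}$ copies of $\tpers_i[j+1]$. After grouping, these are $\tpers_{i+1}[j]/\tpers_{i+1}[j-1]$ jobs of period $\tpers_{i+1}[j]$, one more than $G_i$ can need. So the warm start provides capacity, not just divisibility. The period-$P$ copies are usable one-for-one, since $\tpers_i[n-1]=\tpers_{i+1}[n-1]=P$, and they cover the top two levels together.

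\textbf{What your sketch misses: the coarsest level.} Level $\tpers_{i+1}[0]$ carries the bulk of $T_i$, and the warm start gives it no $\tpers_i[1]$ copies. Your idea of period-$P$ jobs ``absorbing remainders'' points the wrong way. Suppose no clause job lands in $S_{i+1},\dots,S_n$, which is possible, e.g.\ for $i=n-1$. Then $T_i=\tsum$, and you need $\lfloor\tsum\,\tpers_{i+1}[0]\rfloor$ groups of $\tpers_i[0]/2n$ copies of $\tpers_i[1]$. The greedy tail of \Cref{algo:warm} yields only $\lfloor(\tsum-W)\,\tpers_{i+1}[0]\rfloor$ such groups. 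Here $W\ge \tpers_i[0]/(2n\,\tpers_i[1])=1/\tpers_{i+1}[0]$ is the warm-start density, so at least one group is missing. The paper closes this in two steps:
\begin{itemize}
\item it merges the family's unused finer jobs upward to period $\tpers_i[1]$ (\Cref{algo:combine}, i.e.\ reverse partitioning);
\item it then uses \Cref{lem:combinebound} together with $T_i\le\tsum$ to show that enough groups exist (\Cref{lem:empty}).
\end{itemize}

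\textbf{Three smaller slips.}
\begin{itemize}
\item There are $n-1$ warm-greedy families, not $n$. Your own targets sum to $(n-1)\tsum$, and $n$ families would give total density $1+\tsum$, contradicting \Cref{lem:density}.
\item The rescaling must be by $n$, not $2n$. $S_i$ is two residue classes modulo $2n$, with exactly two slots in every $2n$ consecutive days. So a gap $p$ with $2n\mid p$ becomes a gap $p/n$ among its slots, and the instance has density exactly $1$, as \Cref{lem:epsfill} requires. Dividing by $2n$ gives density $2$. Treating the two literal classes separately would need yet another exact split.
\item The merged copies should occupy the $\tpers_i[0]/2n$ residues modulo $\tpers_i[j]$ that lie over one residue modulo $\tpers_{i+1}[j-1]$, not a common residue modulo $\tpers_i[0]$. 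Stating the step as \Cref{lem:partitioning} applied to a schedule containing the merged job makes this automatic.
\end{itemize}
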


We might hope that the proof of this lemma is as simple as that of \Cref{lem:redconsatisfiable}. While the end of the proof is similar, due to the more complicated nature of our reduction, the entirety of the proof is much more involved and requires the analysis of several auxiliary algorithms (\Cref{algo:flow}, \Cref{algo:combine}, and \Cref{algo:splitter}). 
These algorithms and their analysis are presented in \Cref{sec:hardness-proof}.
At a high level, we show that there is a flow saturating the edges leading out of $s$ in \Cref{fig:flow} (\Cref{algo:flow}), and provide a method to decide which jobs to assign to each subschedule out of two options in \Cref{algo:splitter}. A relevant subroutine to \Cref{algo:splitter} is \Cref{algo:combine}, which is used to reverse the warm start to \Cref{algo:warm}, as needed. Finally, we show that the jobs assigned to each variable subschedule can be scheduled along with the existing jobs using \Cref{lem:epsfill}. 

Considering the proof that \Cref{algo:reduction} runs in polynomial time (\Cref{lem:redpol}) and the proof of correctness (\Cref{lem:psredsched} and \Cref{lem:psredsat}), we have justified the following theorem.

\begin{theorem}\label{thm:pinwheel}
Pinwheel Scheduling is NP-hard.
\end{theorem}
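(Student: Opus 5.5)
The proof is a direct assembly of results already established in this section. The plan is to exhibit a polynomial-time many-one reduction from an NP-hard problem to pinwheel scheduling. The source problem is 3,4-SAT, which is NP-hard by the theorem of \citet{tovey} quoted above. The reduction is the map $C \mapsto \text{red}_{\text{PS}}(C)$ computed by \Cref{algo:reduction}. Three things must be checked: the map is computable in polynomial time, it is sound, and it is complete.

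For efficiency I would cite \Cref{lem:redpol}. It guarantees that the inputs handed to \Cref{algo:greedy} and \Cref{algo:warm} meet their preconditions: the integrality $dP \in \mathbb{Z}$ and the lower bound $\tsum \ge \tpers[0]/(n\,\tpers[1])$. \Cref{lem:greedypol} and \Cref{lem:warmpol} then bound the running time of each call. \Cref{lem:primesize} and \Cref{lem:periodsize} show that every period has polynomially many bits. It follows that the output instance has polynomially many jobs, each written in polynomial size.

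For soundness I would use \Cref{lem:psredsched}. Its argument goes as follows. By \Cref{lem:density}, $D(\text{red}_{\text{PS}}(C))=1$, so any valid pinwheel schedule of this instance is an EPS schedule. Deleting the added jobs leaves an EPS schedule for $\text{red}_{\text{EPS}}(C)$. By \Cref{lem:epsredsat}, $C$ is then satisfiable.

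For completeness I would invoke \Cref{lem:psredsat}, which is the substantive direction and the main obstacle. Given a satisfying assignment, \Cref{lem:epssatred} yields an EPS schedule for the base jobs. Each greedy or warm-greedy job must then be assigned to a variable subschedule $S_i$ whose allowed periods, namely the divisibility chain $\tpers_i$, contain the job's period; this uses \Cref{lem:allowedcyclic} to convert jobs that spill over from $S_{i-1}$. The assignment must also fill each $S_i$ to density exactly $\frac1n$, given where the satisfied clauses actually sit. Once that holds, \Cref{lem:epsfill} schedules the added jobs within each subschedule. The delicate part is the flow and splitting argument showing that the warm-greedy density can be routed so that no subschedule overflows. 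Since that is exactly the content of \Cref{lem:psredsat}, I take it as given here. Combining the three parts, 3,4-SAT $\le_p$ pinwheel scheduling, which proves the theorem.
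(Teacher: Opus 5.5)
Your proposal is correct and follows the paper's own argument. The paper likewise proves the theorem by reducing from 3,4-SAT via \Cref{algo:reduction}, combining polynomial running time (\Cref{lem:redpol}) with soundness (\Cref{lem:psredsched}) and completeness (\Cref{lem:psredsat}); your account of what goes into those lemmas (density $1$ forcing an EPS schedule, \Cref{lem:epssatred} for the base jobs, flow-based splitting, then \Cref{lem:epsfill} within each variable subschedule) also matches the paper.
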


%% file: variants.tex
\section{NP-Hardness of Related Problems}\label{sec:related}

In this section, we prove NP-hardness (and in some cases NP-completeness) of several problems treated previously in the literature. The proofs are based on \Cref{lem:density} and \Cref{thm:pinwheel}. For the motivations and applications of each problem, we refer to the original work introducing the problem. 

Consider the problem of dense pinwheel scheduling \citep{holte1989pinwheel}, where one must decide whether a pinwheel instance $(a_1, \dots, a_n)$ is schedulable, provided that $\sum_{i = 1}^n \frac{1}{a_i} = 1$.

\begin{corollary}\label{cor:dps}
Dense Pinwheel Scheduling is NP-complete.
\end{corollary}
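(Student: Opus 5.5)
NP-completeness requires two parts: NP-hardness and membership in NP.

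\emph{Hardness.} I would reuse \Cref{algo:reduction} unchanged. By \Cref{lem:density}, every instance it outputs satisfies $D(\text{red}_{\text{PS}}(C)) = 1$, so each output is a legitimate Dense Pinwheel Scheduling instance. By \Cref{lem:redpol} the map runs in polynomial time. By \Cref{lem:psredsched} and \Cref{lem:psredsat}, $C$ is satisfiable if and only if the output is schedulable. Since 3,4-SAT is NP-hard, Dense Pinwheel Scheduling is NP-hard.

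\emph{Membership in NP.} This is the step needing care, because a periodic schedule given by \Cref{lem:periodic} can have exponential period and is therefore not a polynomial-size certificate. I would use density $1$ as follows.
\begin{itemize}
\item When $D(A)=1$, any valid schedule must be exact: each task $i$ occurs with gaps exactly $a_i$. This is the fact already invoked in the proof of \Cref{lem:redconschedulable} \citep{covering}.
\item Intuitively, a single gap longer than $a_i$ forces the average rate of task $i$ below $1/a_i$ over a long window. Meanwhile the counting argument in \Cref{sec:prelim} leaves no slack when $D(A)=1$.
\item Hence a schedule is described by one offset $r_i \in \{0,\dots,a_i-1\}$ per task. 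This certificate has polynomially many bits, since each $r_i$ is bounded by $a_i$, which is written in binary.
\end{itemize}

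\emph{Verification.} The verifier must check that the arithmetic progressions $r_i + a_i\mathbb{Z}$ are pairwise disjoint. By the Chinese Remainder Theorem, two progressions intersect if and only if $r_i \equiv r_j \pmod{\gcd(a_i,a_j)}$; this is the converse direction of \Cref{lem:gcd}. So the verifier computes $O(m^2)$ gcds and checks that $r_i \not\equiv r_j \pmod{\gcd(a_i,a_j)}$ for every pair $i \neq j$, all in polynomial time. Disjoint exact progressions give a valid pinwheel schedule, since every window of length $a_i$ contains an occurrence of task $i$. Conversely, by the exactness fact, any valid schedule yields such offsets. This places the problem in NP.

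The main obstacle is justifying the exactness claim for density-$1$ instances rigorously. If the cited result does not apply directly, I would argue as follows. Take a $p$-periodic valid schedule from \Cref{lem:periodic}. In one period, task $i$ appears at least $p/a_i$ times, because its consecutive gaps are at most $a_i$ and they sum to $p$. Summing over $i$ gives $\sum_i p/a_i = p$ occurrences, which exactly fills the period. So every task attains equality, and with all gaps at most $a_i$ and summing to exactly $p$, every gap equals $a_i$.
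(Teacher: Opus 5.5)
Your proposal is correct and follows the paper's proof. Hardness comes from the density-$1$ output of the reduction together with its correctness, and membership in NP comes from the polynomial-size offset certificate whose pairwise $\gcd$ congruence conditions are checkable in polynomial time, which the paper simply attributes to Holte et al. Your closing counting argument on a $p$-periodic schedule is a correct, self-contained substitute for the cited fact that dense schedules are exact; this is what carries the step, since the preceding ``intuitive'' bullet has the direction garbled (in a valid schedule no gap ever exceeds $a_i$).
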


\begin{proof}
By \Cref{lem:density},  \Cref{algo:reduction} only reduces to dense instances. Along with the fact that \Cref{algo:reduction} is a valid reduction (\Cref{thm:pinwheel}), this establishes NP-hardness of dense pinwheel scheduling. In addition, as noted in \cite{holte1989pinwheel}, dense pinwheel scheduling is in NP due to the relevant modular congruence relations being checkable in polynomial time. Together, this proves NP-completeness.
\end{proof}

Consider the pinwheel covering problem \citep{kawamura2020, covering}, where one is given an instance $(a_1, \dots, a_m)$ and the problem is to decide whether the integers can be partitioned into $m$ color classes $C_1,\ldots,C_m$ such that every interval of length $a_i$ has intersects with $C_i$ at most once, for $i=1,2,\ldots,m$. 

\begin{corollary}
Pinwheel Covering is NP-hard.
\end{corollary}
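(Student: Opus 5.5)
The plan is to reuse \Cref{algo:reduction} unchanged as a reduction from 3,4-SAT to pinwheel covering. Because \Cref{lem:density} shows that every instance it outputs has density exactly $1$, it suffices to prove one structural fact. \emph{For an instance $A=(a_1,\ldots,a_m)$ with $D(A)=1$, a partition of $\mathbb{Z}$ is a valid pinwheel covering if and only if it is an EPS schedule for $A$.} Given this fact, completeness and soundness follow from the pinwheel results already proved. If $C$ is satisfiable, \Cref{lem:psredsat} gives a pinwheel schedule for \redps. Since \redps{} is dense, that schedule is an EPS schedule (as used in the proof of \Cref{lem:redconschedulable}, citing \citet{covering}), hence a covering. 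Conversely, a covering of \redps{} is an EPS schedule, hence a valid pinwheel schedule, so $C$ is satisfiable by \Cref{lem:psredsched}. Polynomial running time is \Cref{lem:redpol}.

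The direction ``EPS $\Rightarrow$ covering'' is immediate. In an EPS schedule consecutive occurrences of task $i$ are exactly $a_i$ apart, so every window of length $a_i$ contains at most one of them.

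For the direction ``covering $\Rightarrow$ EPS'', I would argue by counting, as follows.
\begin{enumerate}
\item Let $M=\mathrm{LCM}(a_1,\ldots,a_m)$ and fix any window $I$ of length $L=M$. In a covering, consecutive occurrences of task $i$ are at least $a_i$ apart, so $I$ contains at most $\lceil L/a_i\rceil = L/a_i$ occurrences of $i$.
\item Every integer of $I$ is covered by some task, so $L \le \sum_i L/a_i = L\,D(A) = L$. Hence every window of length $L$ contains \emph{exactly} $L/a_i$ occurrences of task $i$, for every $i$.
\item Fix task $i$ and any occurrence $s_0$, and let $s_0<s_1<s_2<\cdots$ be the subsequent occurrences of $i$. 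Write $k=L/a_i$.
\begin{itemize}
\item The window $[s_0,s_0+L)$ contains exactly $s_0,\ldots,s_{k-1}$.
\item The shifted window $[s_0+1,s_0+L]$ also contains exactly $k$ occurrences. It has lost $s_0$, so it must contain $s_k$, giving $s_k\le s_0+L$.
\item The spacing constraint gives $s_k-s_0=\sum_{j=1}^{k}(s_j-s_{j-1})\ge k a_i = L$.
\end{itemize}
Hence equality holds throughout, and every gap $s_j-s_{j-1}$ for $1\le j\le k$ equals $a_i$.
\item Since $s_0$ was an arbitrary occurrence, every gap between consecutive occurrences of $i$ equals $a_i$. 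One must also check that task $i$ occurs at all, and unboundedly in both directions; this holds because every length-$L$ window contains $L/a_i\ge1$ copies of $i$.
\end{enumerate}

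The main obstacle I expect is exactly this step of forcing every gap to equal $a_i$. The naive single-window count only yields the non-strict bound ``at most $L/a_i$'', which is not enough. The fix is the comparison of two windows shifted by one, anchored at an occurrence, which turns the equality of counts into equality of gaps. Everything else is bookkeeping on top of \Cref{lem:density}, \Cref{lem:psredsat}, and \Cref{lem:psredsched}.
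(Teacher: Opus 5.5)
Your proposal is correct and follows the paper's route: reuse \Cref{algo:reduction}, whose outputs have density $1$ by \Cref{lem:density}, and use the fact that on density-$1$ instances pinwheel scheduling and pinwheel covering coincide. The paper simply cites \citet{covering} for this equivalence, whereas you prove the covering $\Rightarrow$ EPS direction yourself with a sound shifted-window counting argument. For the converse direction, add the one-line observation that an EPS schedule of a density-$1$ instance has no holidays: each window of length $L=\mathrm{LCM}(a_1,\ldots,a_m)$ contains exactly $\sum_i L/a_i = L$ occurrences, so the schedule really is a partition of $\mathbb{Z}$.
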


\begin{proof}
As noted in \cite{covering}, for instances with density 1, pinwheel scheduling and pinwheel covering are the same problem, so the NP-hardness of dense pinwheel scheduling (\Cref{cor:dps}) implies NP-hardness of pinwheel covering. 
\end{proof}

Consider the decision version of the bamboo garden trimming problem \citep{gkasieniec2017bamboo} in which one is given an instance $(h_1, \dots, h_m)$ along with a parameter $K$. The problem is to decide whether there is a partition of the integers into $m$ color classes $C_1, \dots, C_m$ such that $\max_{i \in [1, m]} h_i D_i \leq K$ where $D_i$ is the maximum distance between any two consecutive occurrences of task $C_i$. 

\begin{corollary}
The decision version of BGT is NP-hard. 
\end{corollary}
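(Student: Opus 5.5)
We reduce from dense pinwheel scheduling, which is NP-hard by \Cref{cor:dps}; more concretely, we take the instances $A = (a_1,\ldots,a_m) = $ \redps{} produced by \Cref{algo:reduction}, which satisfy $D(A)=1$ by \Cref{lem:density}. Choose a common multiple $K$ of the periods, say $K = \text{LCM}(a_1,\ldots,a_m)$ (or simply $K=\prod_i a_i$, whose binary length is polynomial). Map $A$ to the BGT instance with growth rates $h_i = K/a_i$, which are positive integers, and threshold $K$. The constraint $h_i D_i \le K$ then reads $D_i \le a_i$.

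For the forward direction, take a valid pinwheel schedule for $A$. Since $D(A)=1$, the schedule has no holidays, and the remark already used in \Cref{lem:redconschedulable} (that dense instances are exact) shows every task $i$ recurs with spacing exactly $a_i$. This is already a partition of $\mathbb{Z}$ into the color classes, with $D_i = a_i$, so $h_i D_i = K$ for every $i$. The same conclusion follows with a little care even without exactness: every window of length $a_i$ contains task $i$, so the gap between consecutive occurrences is at most $a_i$.

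For the converse, suppose some partition has $\max_i h_i D_i \le K$, i.e.\ $D_i \le a_i$ for all $i$. Then every interval of $a_i$ consecutive integers meets $C_i$. This is exactly pinwheel validity for integer periods, since an interval of length $L$ then contains at least $\lfloor L/a_i\rfloor$ occurrences (split it into $\lfloor L/a_i \rfloor$ disjoint blocks of length $a_i$). So $A$ is schedulable, and by \Cref{thm:pinwheel} $C$ is satisfiable.

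The step needing the most care is matching definitions. We must fix what $D_i$ means precisely, reading ``distance between consecutive occurrences'' as the difference of positions, so that $D_i \le a_i$ is equivalent to every window of length $a_i$ being hit. The conversion also has to run in polynomial time. The $h_i$ have polynomially many bits, but their magnitudes are exponential, consistent with the weak NP-hardness noted earlier. If the BGT formulation instead allows only idle slots, or requires every class to be nonempty, density $1$ still makes the equivalence exact. The only fact used there is that density-$1$ schedules have no holidays.
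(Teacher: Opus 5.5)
Your proposal is correct and is essentially the paper's own argument: the paper uses the same map $h_i = L/a_i$ with threshold $K = L = \prod_i a_i$, and it checks the same two directions, namely that $D_i \le a_i$ for all $i$ holds exactly when the schedule is pinwheel-valid. The differences are cosmetic: you allow $K$ to be the LCM rather than the product, and you restrict to the density-$1$ instances output by \Cref{algo:reduction}, so your remarks about exactness and the absence of holidays are harmless but unnecessary, since the paper reduces directly from arbitrary pinwheel instances via \Cref{thm:pinwheel}.
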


\begin{proof}

Consider the following reduction.

\begin{algorithm}[H]
\caption{Reduction from Pinwheel Scheduling to BGT}\label{algo:bgt}

\KwIn{Pinwheel instance $A = (a_1, \dots, a_m)$}

$L \gets 1$

\For{$i = 1, \dots, m$}{
$L \gets L \cdot a_i$
}

$G \gets ()$

\For{$i = 1, \dots, m$}{
$\begin{aligned}
G \gets G \sqcup \left( \frac{L}{a_i} \right)
\end{aligned}$
}

\Return{$(G, L)$}

\end{algorithm}

Note that the first parameter of the output of \Cref{algo:bgt} should be interpreted as the list of growth rates and the second parameter should be interpreted as the value of $K$.

We claim that the pinwheel instance $A$ is schedulable if and only if the corresponding BGT instance $G$ can be scheduled with objective at most $L$. 

\begin{lemma}
Let $(G, L) = $ \redbgt. If the BGT instance $G$ can be scheduled with objective at most $L$, then the pinwheel instance $A$ is schedulable.
\end{lemma}

\begin{proof}
Recall the BGT objective value as $\underset{i \in [m]}{\max} h_i D_i$ where $D_i$ is the maximum distance between any two consecutive occurrences of job $i$. Note that for any $j \in [n]$, $\underset{i \in [m]}{\max} h_i D_i \geq h_j D_j$. Since the objective is at most $L$, $\underset{i \in [m]}{\max} h_i D_i \leq L$, so for any $j \in [n]$, $h_j D_j \leq L$. By Line 6 of \Cref{algo:bgt}, $h_j = L/a_j$ for all $j \in [n]$, so 
$$D_j \leq \frac{L}{h_j} = \frac{L}{L/a_j} = a_j.$$
But this says that the maximum gap between instances of job $j$ is exactly the maximum pinwheel period of job $j$, showing that the BGT schedule which achieves the objective of at most $L$ is a valid pinwheel schedule for $A$. 
\end{proof}

\begin{lemma}
Let $(G, L) = $ \redbgt. If the pinwheel instance $A$ is schedulable, then the BGT instance $G$ can be scheduled with objective at most $L$.
\end{lemma}

\begin{proof}
Since $A$ is schedulable, there must be a corresponding schedule $S$. We claim that $S$ achieves objective at most $L$ for the BGT instance $G$. In particular, recall the BGT objective value as $\max_{i \in [1, m]} h_i D_i$ where $D_i$ is the maximum distance between any two consecutive occurrences of job $i$. By the pinwheel constraint, $D_i \leq a_i$ for all $i$. In addition, by Line 6 of \Cref{algo:bgt}, for any $j \in [n]$, we have $h_j = L/a_j$. Then, 
$$\max_{i \in [1, m]} h_i D_i \leq \max_{i \in [1, m]} \left( \frac{L}{a_i} \cdot a_i \right) = \max_{i \in [1, m]} L = L,$$ as desired.  
\end{proof}

With this reduction in place, the NP-hardness of pinwheel scheduling (\Cref{thm:pinwheel}) implies the NP-hardness of the decision version of BGT.
\end{proof}

\cite{gkasieniec2024perpetual} note that the version of BGT with $K = H$, with $H$ being the sum of the growth rates, is in NP, due to dense pinwheel scheduling being in NP \citep{holte1989pinwheel}. We fully characterize the complexity of this problem.

\begin{corollary}\label{cor:bgt}
The decision version of BGT with $K = \sum_{h = 1}^n h_i$ is NP-complete. 
\end{corollary}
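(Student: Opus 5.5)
Membership in NP and NP-hardness are argued separately. NP-hardness comes from feeding dense pinwheel instances through \Cref{algo:bgt}. By \Cref{cor:dps}, dense pinwheel scheduling is NP-hard, and every instance produced by \Cref{algo:reduction} has density exactly $1$ by \Cref{lem:density}. Given such a dense instance $A=(a_1,\dots,a_m)$, \Cref{algo:bgt} outputs growth rates $h_i = L/a_i$, with $L=\prod_i a_i$, together with threshold $K=L$. The key observation is that in this case
\[
\sum_{i=1}^m h_i = \sum_{i=1}^m \frac{L}{a_i} = L\cdot D(A) = L = K,
\]
so the produced instance lies in the restricted problem with $K=\sum_i h_i$. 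The two lemmas already proved for \Cref{algo:bgt} show that $A$ is schedulable if and only if $(G,L)$ admits a schedule of objective at most $L$. Hence restricting \Cref{algo:bgt} to dense inputs is a correct reduction into the $K=H$ version.

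Next, the reduction must run in polynomial time. The product $L=\prod_i a_i$ has bit length equal to the sum of the bit lengths of the $a_i$, which is polynomial. Each $L/a_i$ is computable by exact integer division. The periods produced by \Cref{algo:reduction} are exponential in value but have polynomially many bits, since the number of jobs is polynomial by \Cref{lem:redpol}. The composition is therefore polynomial, and NP-hardness follows.

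For membership in NP, the plan is to reverse the reduction. Given a BGT instance $(h_1,\dots,h_m)$ with $K=H=\sum_i h_i$, an objective of at most $H$ means $D_i \le H/h_i$, i.e.\ a pinwheel schedule with periods $a_i = H/h_i$, which need not be integral. Because the gaps $D_i$ are integers, this is equivalent to a pinwheel instance with integer periods $\lfloor H/h_i\rfloor$, whose density is at least $\sum_i h_i/H = 1$. If some $H/h_i$ is non-integral, the density strictly exceeds $1$ and the instance is unschedulable. In that case the verifier simply rejects.

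Otherwise the instance is a dense pinwheel instance, which lies in NP by \Cref{cor:dps}. Concretely, the certificate is an offset for each job, and the verifier checks the pairwise congruence conditions of \Cref{lem:gcd}. This suffices because dense schedules are exact, so each job occupies a single residue class modulo its period. The step I expect to need the most care is this floor/integrality argument together with the citation to \citet{gkasieniec2024perpetual} and \citet{holte1989pinwheel} for membership. The hardness direction is essentially the observation that $\sum_i h_i = L$ on dense inputs.
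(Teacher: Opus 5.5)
Your proposal is correct and follows the paper's proof. Hardness comes from restricting \Cref{algo:bgt} to the dense instances produced by \Cref{algo:reduction}, where $\sum_i h_i = L\cdot D(A) = L = K$, and membership comes from reducing to dense pinwheel scheduling, which is in NP. The paper simply cites \citet{gkasieniec2024perpetual} and \citet{holte1989pinwheel} for membership, so your explicit steps only fill in details it leaves implicit: rejecting when some $H/h_i$ is non-integral (density strictly above $1$), and the offset certificate checked via the pairwise congruence conditions of \Cref{lem:gcd}.
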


\begin{proof}
As discussed above, decisional BGT with $K = H$ is in NP due to the reduction to pinwheel scheduling.

Now, notice that in \Cref{algo:bgt}, dense pinwheel instances are reduced to bamboo garden trimming instances satisfying 
$$\sum_{i = 1}^n h_i = \sum_{i = 1}^{n} \frac{L}{a_i} = L \sum_{i = 1}^n \frac{1}{a_i} = LD(A) = L \cdot 1 = L = K.$$

We have proven dense pinwheel scheduling to be NP-hard (\Cref{cor:dps}), so this means bamboo garden trimming with $K = H$ is NP-hard as well. Since the problem is in NP and is NP-hard, this proves NP-completeness. 
\end{proof}

Consider the Windows Scheduling Problem \citep{windowsoriginal}, in which one is given an integer $n$ corresponding to the number of jobs, an integer $h$ corresponding to the number of machines, and $(w_1, \dots, w_n)$, the job periods. The problem is to decide whether the jobs can be scheduled with one per timeslot per machine so that each job appears at least once within any interval of its specified period in at least one machine. The problem has been studied with and without machine migration (whether a single job type is allowed to be scheduled across multiple machines) \citep{bar2007windows}.

\begin{corollary}
Windows Scheduling is NP-hard. 
\end{corollary}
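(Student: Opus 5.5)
The plan is to reduce from pinwheel scheduling itself, specialized to the single-machine case, and to invoke \Cref{thm:pinwheel}. Given a pinwheel instance $A = (a_1,\ldots,a_m)$, output the windows scheduling instance with $n = m$ jobs, $h = 1$ machine, and windows $w_i = a_i$. Since there is only one machine, the distinction between the migration and non-migration variants disappears. Hence the same reduction proves hardness for both variants at once. The reduction is clearly polynomial time, as it copies the input.

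Next, check that the definitions coincide. A windows schedule on one machine assigns at most one job to each time slot. This is a map $\sigma:\mathbb{Z}\to [m]\cup\{\bot\}$, exactly a pinwheel schedule. The windows requirement says each job $i$ appears at least once in every interval of $w_i = a_i$ consecutive slots, and this is the original pinwheel condition. That condition is equivalent to the paper's formulation, which asks for at least $\lfloor L/a_i\rfloor$ occurrences in every interval of length $L$, because the periods are integers. Indeed, an interval of length $L$ contains $\lfloor L/a_i\rfloor$ disjoint subintervals of length $a_i$, and conversely the case $L = a_i$ gives the windows condition. So $A$ is schedulable if and only if the windows instance is feasible, in both directions.

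The only potential subtlety is the variant conventions. Some formulations of windows scheduling require each job to be scheduled periodically on a fixed machine, or require exactly periodic placement. To stay safe, I would use the instances produced by \Cref{algo:reduction}, which are dense by \Cref{lem:density}. For dense instances, any valid pinwheel schedule is an exact (EPS) schedule, as used in the proof of \Cref{lem:redconschedulable}. Such a schedule is periodic by \Cref{lem:epsperiodic}, so the schedule produced in the completeness direction meets even the strictest periodicity conventions. Soundness holds trivially under every convention, since a restrictive windows schedule is still a pinwheel schedule. Composing with \Cref{algo:reduction} therefore yields NP-hardness of Windows Scheduling. I expect no real obstacle beyond stating these definitional equivalences carefully.
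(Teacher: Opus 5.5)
Your proposal is correct and follows the paper's proof, which simply observes that windows scheduling with $h=1$ is exactly pinwheel scheduling and therefore inherits NP-hardness from \Cref{thm:pinwheel}. The paper does not spell out your extra step of composing with \Cref{algo:reduction}, where density $1$ forces any valid schedule in the completeness direction to be exact and periodic; that step is a valid safeguard against stricter periodicity conventions.
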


\begin{proof}
Windows scheduling is a generalization of pinwheel scheduling. In particular, with $h = 1$, the problem is exactly the pinwheel problem, so windows scheduling inherits the NP-hardness of pinwheel scheduling (\Cref{thm:pinwheel}).
\end{proof}

Consider the constant gap problem \citep{kubiak2004fair, kubiak2009proportional} in which one is given job demands $(d_1, \dots, d_n)$ and asked whether there are integers $f_1, \dots, f_n$ such that $\Big\{ \left( f_1, \frac{D}{d_1} \right), \dots, \left( f_n, \frac{D}{d_n} \right) \Big\}$ is an exact covering sequence. Note that $D = \sum_{i = 1}^n d_i$ and an exact covering sequence is a sequence such that for any $k \in \mathbb{N}$, there is exactly one integer $i \in [n]$ such that $k \equiv f_i \ \bigl(\bmod\ \tfrac{D}{d_i}\bigr)$. 

\begin{corollary}
The constant gap problem is NP-complete. 
\end{corollary}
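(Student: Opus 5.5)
We must show the constant gap problem is NP-complete. Membership and hardness are handled separately. The hardness reduction is from dense pinwheel scheduling, which is NP-hard by \Cref{cor:dps}.

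\textbf{Membership in NP.} The certificate is the offset vector $(f_1,\dots,f_n)$, with each $f_i$ reduced modulo $D/d_i$, so it has polynomially many bits. An exact covering sequence has density $\sum_i d_i/D = 1$. For density-one systems, being an exact covering is equivalent to the residue classes being pairwise disjoint. Two classes $f_i \bmod D/d_i$ and $f_j \bmod D/d_j$ intersect if and only if $f_i \equiv f_j \pmod{\gcd(D/d_i, D/d_j)}$, by the CRT argument of \Cref{lem:gcd}.

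So verification only needs $O(n^2)$ gcd computations. Disjointness plus density one gives exact covering by counting residues modulo the LCM: the classes occupy exactly $\mathrm{LCM}\cdot\sum_i d_i/D$ residues, which equals the LCM, and they are disjoint, so every residue is hit exactly once. This is the same observation \citet{holte1989pinwheel} use for dense pinwheel scheduling.

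\textbf{NP-hardness.} I reduce from dense pinwheel instances $A=(a_1,\dots,a_m)$ with $D(A)=1$.
\begin{enumerate}
\item Let $L=\mathrm{LCM}(a_1,\dots,a_m)$, computed in polynomial time since its bit length is polynomial.
\item Output demands $d_i = L/a_i$. Then $\sum_i d_i = L\cdot D(A) = L$, so $D/d_i = a_i$.
\item It remains to show that $A$ is pinwheel schedulable if and only if there are offsets $f_i$ making $\{(f_i,a_i)\}$ an exact covering sequence.
\end{enumerate}

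For the forward direction, a density-one instance is schedulable only as an exact (EPS) schedule, as used in the proof of \Cref{lem:redconschedulable} via \citet{covering}. Each task $i$ then occupies exactly one residue class $f_i \bmod a_i$. Density one means there are no holidays in the periodic schedule, and the classes are disjoint, so they form an exact covering. Conversely, an exact covering sequence directly defines a schedule with $\sigma(k)=i$ for the unique $i$ covering $k$. Consecutive occurrences of task $i$ are then exactly $a_i$ apart, so the schedule is valid.

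\textbf{Main obstacle.} The only delicate point is the polynomial size of the output. The numbers $d_i = L/a_i$ may be exponentially large in value but have polynomially many bits. Since the NP-hardness here is weak, as the paper already notes, binary encoding of the demands is acceptable.

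For the forward direction, I would rely on the paper's already-invoked fact that density-one pinwheel schedules are exact. If I wanted it self-contained, I would argue it directly: in a valid schedule of a density-one instance, any gap longer than $a_i$ would force some window to contain more than its capacity, by the counting argument in \Cref{sec:prelim}.
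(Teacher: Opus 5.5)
Your proof is correct. It reaches the same result as the paper by a more direct route, so here is a brief comparison.

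\textbf{Comparison with the paper.} The paper never treats the constant gap problem directly. It observes that $(d_1,\dots,d_n)$ is a yes-instance if and only if the BGT instance with growth rates $(d_1,\dots,d_n)$ and $K=\sum_i d_i$ is. It then inherits both NP membership and NP-hardness from \Cref{cor:bgt}. That corollary gets hardness from the reduction $h_i = L/a_i$ with $L=\prod_i a_i$, and defers membership to the dense-pinwheel-in-NP observation of \citet{holte1989pinwheel}.

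Your hardness reduction is exactly the composition of those two steps. You use $\mathrm{LCM}$ instead of the product, which only makes the numbers smaller; both have polynomial bit length. Your membership argument writes out explicitly the pairwise-gcd check and residue count that the paper only cites.

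The paper's route is more economical: one equivalence plus reuse of the BGT corollary. Yours is self-contained and makes the certificate and verifier explicit. Both rely on the same fact, that a valid schedule for a density-one instance is exact.

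\textbf{Two small points to fix.}
\begin{enumerate}
\item Your optional self-contained justification has the direction reversed. A gap longer than $a_i$ is already forbidden by validity; what you must show is that no gap is \emph{shorter} than $a_i$. The counting in \Cref{sec:prelim}, with its $-1$ slack, does not give this at density exactly $1$. Instead, use windows whose length $N$ is a common multiple of all the $a_j$.
\begin{itemize}
\item In such a window each task needs at least $N/a_j$ slots, and these requirements sum to exactly $N$.
\item So each task occupies exactly $N/a_j$ slots in every such window.
\item Comparing the windows starting at $t$ and $t+1$ shows that $\sigma(t)=j$ if and only if $\sigma(t+N)=j$, so the schedule is $N$-periodic.
\item Within one period, task $j$ has $N/a_j$ cyclic gaps, each at most $a_j$, summing to $N$. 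Hence every gap equals $a_j$.
\end{itemize}
\item Your verifier should first reject whenever some $D/d_i$ is not an integer. In that case no exact covering exists, and your residue-counting argument assumes integer moduli.
\end{enumerate}
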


\begin{proof}
We claim that the constant gap problem $(d_1, \dots, d_n)$ is satisfiable if and only if the BGT instance $(d_1, \dots, d_n)$ with $K = \sum_{i = 1}^n d_i = D$ is satisfiable. 

In particular, if there is a schedule for the BGT instance, since it is dense (i.e. $K$ is the sum of the growth rates), the schedule must have job $i$ appear exactly $\frac{D}{d_i}$ steps apart, meaning that each $f_i$ can be set to the position of the first occurrence of job $i$ and the result is an exact covering sequence. 

In the other direction, with an exact covering sequence, job $i$ always appears $\frac{D}{d_i}$ steps apart so the BGT objective is $\max_{i \in [n]} d_i \cdot \frac{D}{d_i} = D$, so the exact covering sequence is a satisfying BGT schedule.

The NP-completeness of the constant gap problem follows from that of BGT with $K = H$ (\Cref{cor:bgt}).
\end{proof}

Consider the Recurrent Scheduling problem \citep{recharging}, in which one is given weakly concave and weakly increasing functions $H_i(t)$ representing the payout of arm $i$ if the last pull of arm $i$ was $t$ days ago, and we seek to maximize reward. In the decision version of the problem, one is given $L \in \mathbb{R}^+$ and asked whether there is a schedule with value at least $L$. In particular, the value of a schedule $\sigma$ is 
$$\limsup_{T \to \infty} \frac{1}{T} \sum_{i=1}^{n} v_{i,T}(S_i(\sigma|_T)).$$

\begin{corollary}
The decision version of Recurrent Scheduling is NP-hard.
\end{corollary}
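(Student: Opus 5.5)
The plan is to reduce from Dense Pinwheel Scheduling, which is NP-complete by \Cref{cor:dps}. Given a dense instance $A = (a_1,\dots,a_m)$ with $D(A)=1$, I would build a Recurrent Scheduling instance with one arm per task. The payoff of arm $i$ will reward pulls at rate $1/a_i$ and saturate once the gap reaches $a_i$. Concretely, take
\[
H_i(t) = \min(t, a_i),
\]
which is weakly increasing and weakly concave on $t \ge 0$. The threshold is $L = m$, i.e.\ value at least $m$. The key quantity is the long-run average reward contributed by arm $i$. If arm $i$ is pulled with consecutive gaps $g_1, g_2, \dots$, the reward collected per unit time is the average of $\min(g_k, a_i)$ over time. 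This equals (number of pulls per unit time) times the average of $\min(g_k,a_i)$, and it is at most $1$. Equality holds exactly when essentially all gaps are at most $a_i$, since time spent in gaps longer than $a_i$ earns nothing for the excess.

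For completeness, suppose $A$ is schedulable. Because $D(A)=1$, a valid schedule has every gap of task $i$ equal to exactly $a_i$, by the dense/EPS equivalence used in \Cref{lem:redconschedulable}. By \Cref{lem:periodic} we may also take it periodic. Then arm $i$ earns exactly $a_i$ per pull and is pulled at rate $1/a_i$, so its average value is $1$. The total value is $m = L$.

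For soundness, suppose some schedule $\sigma$ has value at least $m$. Each arm contributes at most $1$ on average, since on any prefix of length $T$ the sum of $\min(g_k,a_i)$ over completed gaps is at most $T$. Hence the $\limsup$ of the total forces every arm's contribution to be asymptotically $1$. The step I expect to be the main obstacle is turning this asymptotic, $\limsup$-based statement into an honest valid pinwheel schedule. The plan has three parts:
\begin{enumerate}[label=(\roman*)]
\item Argue that the excess time $\sum_k (g_k - a_i)^+$ over $[0,T]$ is $o(T)$ for each $i$, along a subsequence $T_j$ on which all $m$ averages are simultaneously near $1$. The $\limsup$ of a sum of terms that are each at most $1$ equals $m$ only if, along a common subsequence, each term tends to $1$.
\item Use pigeonhole on the finitely many states of the state graph $G(A)$, with gaps truncated at $a_i$, to find long windows in which no gap exceeds $a_i$.
\item Apply a compactness argument (König's lemma) to extract a doubly-infinite walk in $G(A)$, i.e.\ a valid schedule.
\end{enumerate}
To make step (ii) work, I would first refine the payoff so that violations cost a definite amount, for instance $H_i(t) = \min(t, a_i)$ combined with density exactly $1$. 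With total density $1$, any gap longer than $a_i$ forces lost value of at least a constant fraction per occurrence on average. So a positive density of violations is impossible, and a finite-window argument yields arbitrarily long violation-free windows. Compactness then gives a valid schedule of $A$.

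If the decision version requires the integer-horizon valuation $v_{i,T}$ from \citet{recharging} rather than my gap-based description, I would verify that with this choice of $H_i$ it coincides with the sum of $\min(g_k,a_i)$ up to boundary terms that are $O(1)$ per arm and vanish after dividing by $T$. The construction is clearly polynomial, so NP-hardness follows from \Cref{cor:dps}.
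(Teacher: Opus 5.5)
Your proposal is correct. It is the paper's reduction up to a per-arm rescaling. The paper takes $H_i(t)=\min(1,t/a_i)=\min(t,a_i)/a_i$ with $L=1$, while you take $\min(t,a_i)$ with $L=m$. Both soundness arguments rest on the same two facts. First, a gap of arm $i$ longer than $a_i$ costs a definite amount of reward. Second, a long enough violation-free stretch contains a repeated state of $G(A)$, and hence a valid periodic schedule.

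The rescaling does buy something. Under the paper's scaling the best attainable rate is at most $\sum_i 1/a_i=D(A)$, so the threshold $L=1$ is meaningful only because the instance is dense. Under yours each arm is individually capped at rate $1$, and the lost reward of arm $i$ is exactly the excess $\sum_k (g_k-a_i)^+$. So every valid schedule attains value $m$ whether or not $D(A)=1$. Your reduction therefore works directly from general pinwheel scheduling, and your appeals to density $1$ and the EPS equivalence can simply be dropped. That applies both to the completeness step and to the ``refinement'' remark, which is vacuous since $\min(t,a_i)$ already charges one unit per excess day.

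Conversely, the paper argues the contrapositive quantitatively. Non-schedulability forces a violation in every window of $\prod_i a_i$ days, each costing at least $1/\max_i a_i$. This bounds the value by $1-1/(2\max_i a_i\prod_i a_i)$ and sidesteps your $\limsup$/subsequence bookkeeping.

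Two small points of presentation:
\begin{itemize}
\item The violation-free windows in your step (ii) come from averaging, not from pigeonhole: along $T_j$ there are only $o(T_j)$ violating arm--day pairs in $[0,T_j]$.
\item Step (iii) is unnecessary. One violation-free window longer than $\prod_i a_i$ days already yields a cycle in $G(A)$, exactly as in \Cref{lem:periodic}, so K\"onig's lemma is not needed.
\end{itemize}
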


\begin{proof}
Consider the following reduction. 

\begin{algorithm}[H]
\caption{Reduction from Dense PS to Recurrent Scheduling, \redrs}\label{algo:recurrent}

\KwIn{Pinwheel instance $A = (a_1, \dots, a_m)$}

\For{$i = 1, \dots, m$}{
$\begin{aligned}
H_i(t) \gets \min\left( 1, \frac{t}{a_i} \right)
\end{aligned}$
}

$L \gets 1$

\Return{$(H, L)$}

\end{algorithm}

Note that for any $c \in \mathbb{N}$, $f(x) = \min(x, c)$ is concave and increasing, demonstrating that the $H_i$ functions of \Cref{algo:recurrent} satisfy the precondition. We claim that $A$ is schedulable if and only if there is a schedule for the corresponding Recurrent Scheduling instance $H$ with value at least $L$.

\begin{lemma}
Let $(H, L) = $ \redrs. If $A$ is not schedulable, then the Recurrent Scheduling instance $H$ does not admit a schedule with value at least $L$.
\end{lemma}

\begin{proof}
Let us define $P = \prod_{i = 1}^m a_i$, and let us define $b = \max(a_1, \dots, a_m)$. 

Consider any arbitrary schedule $S$ for the Recurrent Scheduling problem.

If $A$ is not schedulable, then $S$ must exhibit a pinwheel constraint violation at least once every $P$ days. If not, then there are $P + 1$ states in the state graph induced by these $P$ days ($P$ states before each day and 1 state after the last day). There are $\prod_{i = 1}^m a_i$ valid states so by the Pigeonhole Principle, two of the $P + 1$ states must be the same. This induces a valid pinwheel schedule, as in \Cref{lem:periodic} \citep{holte1989pinwheel}, contradicting the assumption that $A$ is not schedulable. Now, for every violation, the reward is at most 
$$\frac{a_i - 1}{a_i} = 1 - \frac{1}{a_i} \leq 1 - \frac{1}{b}.$$

Compared to a base rate of 1 reward per day, suffering a loss of at least $\frac{1}{b}$ at least once every $P$ days implies that the total reward of any prefix of length $T$ is at most $T - \lfloor T/P \rfloor/b$, for a per-day reward of at most 
$(T - \lfloor T/P \rfloor/b)/T$ which for $T > P$ satisfies 
$$\frac{T - \lfloor T/P \rfloor/b}{T} \leq \frac{T - (T/2P)/b}{T} = 1 - \frac{1}{2Pb},$$ 
meaning that the relevant limit is at most $1 - \frac{1}{2Pb}$. Since $L = 1 > 1 - \frac{1}{2Pb}$, this completes the proof. 
\end{proof}

\begin{lemma}
Let $(H, L) = $ \redrs. If $A$ is schedulable, then the Recurrent Scheduling instance $H$ admits a schedule with value at least $L$.
\end{lemma}

\begin{proof}
Since $A$ is schedulable, there must be some valid schedule $S$. Note that we have restricted the reduction in \Cref{algo:recurrent} to only reduce from dense pinwheel instances. This means that $S$ is an EPS schedule for $A$ \citep{covering}. In other words, on every day, when job $i$ is scheduled, it has been exactly $a_i$ days since the last job, implying a reward of 1 upon scheduling the job (this does not necessarily hold for the first time a job is scheduled, but that is a boundary effect which does not affect the overall limit). This shows that schedule $S$ achieves value $L = 1$ for the Recurrent Scheduling problem, completing the proof. 
\end{proof}

We can see that \Cref{algo:recurrent} runs in polynomial time, so the NP-hardness of Recurrent Scheduling now follows from the NP-hardness of dense pinwheel scheduling (\Cref{cor:dps}). 
\end{proof}

%% file: ptas-proof.tex
\section{Omitted Proofs from \texorpdfstring{\Cref{sec:ptas}}{Section~\ref{sec:ptas}}}
\label{sec:ptas-proof}

We restate and prove several lemmas whose proofs were deferred to this section. 

\lemmaruntime*

\begin{proof}
We first prove that the while loop on Line 6 of \Cref{algo:pptas} terminates. 

Consider the series $\mathcal{S}$, with $\mathcal{S}_1 = n$ (as defined in Line 1 of \Cref{algo:pptas}) and $\mathcal{S}_{i + 1} = 3n^2 \cdot \mathcal{S}_i^{\mathcal{S}_i}$. Then, on iteration $i$ of the while loop on Line 6 of \Cref{algo:pptas}, the value of $\ell$ is $\mathcal{S}_i$, and the value of $u$ is $\mathcal{S}_{i + 1}$.

Let us define $D(i)$ as the total density of jobs strictly between $\mathcal{S}_i$ and $\mathcal{S}_{i + 1}$. More formally, let $j(i)$ be the list of job periods $p$ in $A$ satisfying $\mathcal{S}_i \leq j < \mathcal{S}_{i + 1}$. Then, $D(i) = \sum_{j \in j(i)} \frac{1}{j}$.

Now, since the $j(i)$ lists are non-overlapping, note that 
$$\sum_{i = 1}^{2n + 2} D(i) = \sum_{i = 1}^{2n + 2} \sum_{j \in j(i)} \frac{1}{j} \leq \sum_{j \in A \text{ s.t. } a_1 \leq p < a_{2n + 2}} \frac{1}{j} \leq \sum_{j \in A} \frac{1}{j} \leq 1.$$

The last inequality follows from the fact that if the total density was greater than 1, we already declared $A$ as unschedulable on Line 3 of \Cref{algo:pptas}. 

Since $\sum_{i = 1}^{2n + 2} D(i) \leq 1$ and $D(i) \geq 0$ for all $i \in [2n + 2]$, there must exist an $i \in [2n + 2]$ such that $D(i) \leq \frac{1}{2n + 2}$. 

Therefore, there are at most $2 \lceil 1/\epsilon \rceil + 2$ iterations of the while loop, which is constant with respect to $m$. The density calculation on Line 8 of \Cref{algo:pptas} takes $O(m)$ time. 

For the while loop on Line 15 of \Cref{algo:pptas}, note that $\ell \leq \mathcal{S}_{2n + 2}$, and since $2n + 2$ is a constant with respect to $m$, $c_1 = \mathcal{S}_{2n + 2}$ and $c_2 = (\mathcal{S}_{2n + 2})^{\mathcal{S}_{2n + 2}}$ are constants with respect to $m$ as well. Since $D(A_{\tbig}) \leq 1$ (otherwise we would have $D(A) > 1$ and termination on Line 3 of \Cref{algo:pptas}) and each element has period at most $c_1$, there are at most $c_1$ elements in $A_{\tbig}$, and at most $c_1 + 1$ possibilities for each day in $s$ (due to allowing for holidays). 

It is possible to check whether a sequence of length $p$ could form an infinite $p$-periodic scheduling (by infinitely copying the sequence) in $O(p)$ time by checking the constraints for each job internal to the schedule (i.e., ensuring that the gap between consecutive occurrences is at most the job period) and then checking the corresponding gap between the first and last scheduling of each job (with $p$ being added to the index of the first scheduling) within the sequence to account for boundary effects. 

Similarly, Lines 16 to 18 of \Cref{algo:pptas} can be implemented in a finite amount of time (i.e., without having to consider the whole infinite schedule $s$) by noting that only a single period is relevant, so considering a valid (as defined above) length $p$ sequence is enough. 

Now, the time required to run the while loop can be upper bounded by $\sum_{p = 1}^{c_2} (c_1 + 1)^p \cdot O(p)$, which is the time needed to brute-force enumerate and evaluate all possible sequences of length $p$ for all relevant $p$. Now, 
$$\sum_{p = 1}^{c_2} (c_1 + 1)^p \cdot O(p) \leq c_2 \cdot (c_1 + 1)^{c_2} \cdot O(c_2) \leq O\left( c_2^{c_2 + 2} \right).$$
Recalling that $c_2$ is a constant with respect to $m$, this shows that the entire while loop can be executed in constant time with respect to $m$. The other components of the algorithm clearly require at most $O(m)$ time, so the overall complexity is $O(m)$. 
\end{proof}

\lemmaunschedulable*

\begin{proof}
First of all, if \Cref{algo:pptas} returns on Line 3, we can clearly see that $A$ is unschedulable due to having density greater than 1. Now, we have to prove that if \Cref{algo:pptas} returns on Line 22, $A$ is unschedulable. 

Suppose, for the sake of contradiction, that $A$ is schedulable. Then, it must have a cyclic schedule of period length at most $\prod_{i = 1}^m a_i$ by \Cref{lem:periodic} \citep{holte1989pinwheel}. Let us denote one period of this schedule as $S_1$.

Let us denote $B_1$ and $B_2$ as the values of $\ell$ and $u$ on Line 12 of \Cref{algo:pptas}, respectively.

Consider the schedule $S_2$ which replaces all jobs scheduled in $S_1$ that are ``not big'' (i.e., with period greater than $B_1$) with holidays. Clearly, the fraction of holidays in $S_2$, which we denote as $h(S_2)$, is at least $D(A_{\text{not big}})$. Note that $D(A_{\text{not big}}$ is greater than $h_{\max}$, by assumption, from Line 21. 

In other words, we have demonstrated that one period of a periodic schedule of $A$ has a fraction of holidays strictly greater than $h_{\max}$. Note that this does not immediately contradict the process on Line 18 of \Cref{algo:pptas} because the period of $S_2$ could be greater than $B_1^{B_1}$ (thereby bypassing Line 15 of \Cref{algo:pptas}). 

Let us define the state for the jobs still in $S_2$ (i.e., the big jobs) before day $i$ of $S_2$ as $s_i$. To create the contradiction, we prove the following statement by strong induction:

For any $p \in \mathbb{N}$, the fraction of holidays in a period of any $p$-periodic schedule of $A_{\tbig}$ is at most $h_{\max}$. 

For the base cases, note that for any $p \leq B_1^{B_1}$, the statement follows directly from the process in Lines 15 to 19 of \Cref{algo:pptas}.

For the inductive case, suppose that $p > B_1^{B_1}$. Consider a period of any $p$-periodic schedule $S_p$. By the Pigeonhole Principle, there must be some state vector that is repeated more than once amongst the states which occur before each of the first $p$ days. Suppose one such repetition occurs between days $i$ and $j$ (i.e., $s_i = s_j$). Then, $j > i > 0$ and the schedule which repeats those elements between $i$ and $j - 1$ (inclusive on both ends) is a period of a $(j - i)$-periodic schedule of $A_{\tbig}$, which we call $S_p'$. 

Since $i \geq 1$, by the inductive hypothesis, the fraction of holidays in $S_p'$ is at most $h_{\max}$. 

In addition, consider $S_p''$, the period of a periodic schedule obtained by deleting the days between $i$ and $j - 1$ inclusive. This is still a valid period of a periodic schedule for $A_{\tbig}$ because deleting a proper subcycle in the state graph cannot break the larger cycle. In other words, since the state before $i$ and after $j - 1$ are the same, deleting everything in between cannot be the cause of a constraint violation. 

As with $S_p'$, the fraction of holidays in $S_p''$ is at most $h_{\max}$. In addition, because each day in $S_p$ is found in exactly one of $S_p'$ or $S_p''$, the fraction of holidays in $S_p$ is a convex combination of $h(S_p')$ and $h(S_p'')$. Since we have that $h(S_p') \leq h_{\max}$ and $h(S_p'') \leq h_{\max}$, this establishes that $h(S_p) \leq h_{\max}$.

This completes the inductive step, so by the principle of mathematical induction, the statement is true for any $p \in \mathbb{N}$.

Recall that $S_2$ is a period of a periodic schedule of $A_{\tbig}$ and that $h(S_2) > h_{\max}$. Instantiating the statement with $p$ being the period of $S_2$ immediately yields the desired contradiction. This completes the proof of unschedulability in the case of a return on Line 22 of \Cref{algo:pptas}.
\end{proof}

\lemmaschedulable*

\begin{proof}
Note that in this proof, in addition to proving correctness of the algorithm, we will aim to provide an efficient construction of an (efficient representation of a) schedule for $A(1 + \epsilon)$. 

We will assume that the input $\epsilon$ is the inverse of an integer $n$. If not, Line 1 of \Cref{algo:pptas} has the effect of replacing $\epsilon$ with $\epsilon' = 1/\lceil 1/\epsilon \rceil$ and $\epsilon' \leq \epsilon$ so by the monotonicity property, the schedulability of $A(1 + \epsilon')$ implies schedulability of $A(1 + \epsilon)$.

Let us denote $B_1$ and $B_2$ as the values of $\ell$ and $u$ on Line 1 of \Cref{algo:pptas}, respectively.

Now, we define the terms big jobs, medium jobs, and small jobs in the following way. 

\begin{enumerate}
    \item A job with period $j$ satisfying $j \leq B_1$ is considered a big job. 
    \item A job with period $j$ satisfying $B_1 < p < B_2$ is considered a medium job. 
    \item A job with period $j$ satisfying $j \geq B_2$ is considered a small job.
\end{enumerate}

In addition, we define $D_{\tbig}$, $D_{\tmed}$, and $D_{\tsma}$ as the total density of the big jobs, medium jobs, and small jobs, respectively. 

Note that, by assumption, $h_{\max} \geq D(A_{\text{not big}})$. Let $S_1$ be one period of the $p$-periodic schedule $s$ corresponding to this $h_{\max}$ (which we have shown in \Cref{lem:runtime} to be possible to produce in constant time with respect to $m$).

Let $A_{\tbig}$ be the subset of jobs that are big. We know that $S_1$, when repeated infinitely, is a valid scheduling of $A_{\tbig}$. Let us define $S_2$ as the sequence obtained from repeating $S$ for $n$ times. Then, $S_2$, when repeated infinitely, is also a valid scheduling of $A_{\tbig}$ because the corresponding infinite schedule is the same for both $S_1$ and $S_2$. 

Let us define the operation of ``inserting a holiday'' at position $p$ in schedule $S$ as copying the first $p$ positions of $S$ into $S'$, appending a holiday, then appending a copy of the last $\text{len}(S) - p$ positions of $S$. Note that this operation adds one day to the length of $S$. 

Let us define $S_3$ as the sequence obtained from applying the holiday insertion operation every $n$ days (i.e., $S_3$ contains days $1$ to $n$ of $S_2$, followed by a holiday, followed by days $n + 1$ to $2n$ of $S_2$, followed by a holiday, and so on, ending with a holiday). 

\begin{lemma}
$S_3$, when repeated infinitely, is a valid scheduling of $A_{\tbig} (1 + \epsilon)$. 
\end{lemma}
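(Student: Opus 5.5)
The argument is a direct gap count. Fix a big job $i$ with integer period $a_i \le B_1$. Since $S_1$ repeated infinitely is a valid $p$-periodic schedule for $A_{\tbig}$, so is $S_2$, and every window of $a_i$ consecutive days of the infinite schedule induced by $S_2$ contains an occurrence of job $i$. Equivalently, any two consecutive occurrences of $i$ in $S_2^\infty$ are at positions $t<t'$ with $t'-t\le a_i$, including the pair that wraps around the boundary between copies of $S_2$.

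Next I track how holiday insertion changes these distances. In $S_3$ a holiday sits after every block of $n$ original days. The length of $S_2$ is $np$, a multiple of $n$, so the pattern is consistent across the periodic boundary and $S_3^\infty$ is just $S_2^\infty$ with a holiday after each aligned block of $n$ days. Consecutive occurrences of $i$ at original positions $t<t'$ are separated in $S_3^\infty$ by $t'-t$ plus the number of inserted holidays strictly between them. At most $\lceil (t'-t)/n\rceil$ block boundaries can lie in an interval of $t'-t$ original days. The new gap is therefore at most $a_i+\lceil a_i/n\rceil$. Since no deletions occur, no new pair of consecutive occurrences is created, and the order of occurrences is preserved.

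It remains to show that this gap is at most the relaxed constraint $(1+\epsilon)a_i = a_i + a_i/n$. For a non-integer period, the \citet{kawamura} definition requires at least $\lfloor L/((1+\epsilon)a_i)\rfloor$ occurrences in every window of length $L$. This is implied by a maximum gap of at most $\lfloor (1+\epsilon)a_i\rfloor$ between consecutive occurrences. So I need
\[
a_i+\left\lceil \frac{a_i}{n}\right\rceil \le \left\lfloor a_i+\frac{a_i}{n}\right\rfloor = a_i+\left\lfloor \frac{a_i}{n}\right\rfloor,
\]
which fails when $n\nmid a_i$. This is the main obstacle, and it calls for a sharper count. Between consecutive occurrences there are $t'-t-1\le a_i-1$ intermediate original days. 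A holiday inserted strictly between the two occurrences must lie within a run of $t'-t$ consecutive original positions, counting positions $t+1,\dots,t'$ with a holiday possibly preceding each. Such a run of length $g\le a_i$ crosses at most $\lfloor g/n\rfloor$ block boundaries only if the alignment is favorable, so a crude count gives $\lceil\cdot\rceil$. I would first attempt the sharper count. If it does not close the gap, I would fall back on the floors version: each big period satisfies $a_i\ge 1$, and jobs with $a_i<n$ receive at most one extra holiday while their allowance $(1+\epsilon)a_i<a_i+1$ permits none. In that case I would argue via periodicity with $\lfloor L/((1+\epsilon)a_i)\rfloor = \lfloor L/a_i\rfloor$ for short windows. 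The alternative is to show that the window-count condition, rather than the maximum gap, holds: in any window of length $L$ of $S_3^\infty$, at most $\lceil L/(n+1)\rceil$ days are inserted holidays, so the window contains a window of original length at least $L - \lceil L/(n+1)\rceil \ge Ln/(n+1)$ (up to rounding). That original window contains at least $\lfloor Ln/((n+1)a_i)\rfloor \ge \lfloor L/((1+\epsilon)a_i)\rfloor$ occurrences, since $n/(n+1)\ge 1/(1+1/n)$. This counting version avoids the floor/ceiling mismatch, and I expect it to be the cleanest route. The only care needed is in handling the rounding of the contained original window.
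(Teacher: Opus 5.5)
\textbf{Gap at the rounding step.} The route you finally settle on is the paper's route: count inserted holidays in an arbitrary window of $S_3^\infty$, pass to the contiguous window of $S_2^\infty$ that remains, and use validity of $S_2$ there. However, the step you defer is the whole content of the lemma, and as written it is false. In fact $L-\lceil L/(n+1)\rceil\le Ln/(n+1)$, with equality only when $(n+1)\mid L$. So your inequality points the wrong way, and ``up to rounding'' hides exactly the deficit that must be absorbed.

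What closes the gap is integrality, used twice:
\begin{itemize}
\item Since $L$ is an integer, $L-\lceil L/(n+1)\rceil=\lfloor Ln/(n+1)\rfloor$ exactly.
\item Since $a_i$ is a positive integer, $\bigl\lfloor \lfloor y\rfloor/a_i\bigr\rfloor=\lfloor y/a_i\rfloor$ for every real $y\ge 0$.
\end{itemize}
Hence the remaining original window contains at least $\bigl\lfloor \lfloor Ln/(n+1)\rfloor/a_i\bigr\rfloor=\lfloor Ln/((n+1)a_i)\rfloor=\lfloor L/((1+\epsilon)a_i)\rfloor$ occurrences of job $i$. The second identity genuinely needs $a_i\in\mathbb{N}$; for a non-integer period the rounding loss would not be absorbed.

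The paper proves the same comparison, $\lfloor (R_0-1-x)/j_i\rfloor\ge\lfloor R_0 n/(j_i(n+1))\rfloor$ with $x=\lfloor R_0/(n+1)\rfloor$. It does so by a longer fractional-part contradiction, treating $(n+1)\mid R_0$ separately. The two identities above give a shorter way to finish.

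\textbf{Drop the maximum-gap detour.} You should abandon it rather than ``first attempt the sharper count.'' It cannot work, because the relaxed real-period constraint is not a maximum-gap constraint. A job of period $1$ scheduled every day has consecutive occurrences $2$ apart across each inserted holiday in $S_3^\infty$, while $\lfloor (1+\epsilon)\cdot 1\rfloor=1$. Yet the window condition still holds, since any window of length $L$ contains at most $\lceil L/(n+1)\rceil$ inserted holidays. The fallback you sketch for $a_i<n$ is not an argument either. Only the window-counting version, completed with the identities above, proves the lemma.
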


\begin{proof}
Suppose, for the sake of contradiction, that there is some job $i$ in $A_{\tbig}$ of period $j_i$ which violates its corresponding constraints in $S_3$.

Note that within $A_{\tbig} (1 + \epsilon)$, the period becomes $j(1 + \epsilon)$ and the relevant constraint is that for any $R$, for any range of length $R$ there are at least $\left\lfloor \frac{R}{j(1 + \epsilon)} \right\rfloor$ occurrences of the corresponding job. 

Since the constraint is violated, there must be some $R_0$ for which there are strictly fewer than $\left\lfloor \frac{R_0}{j(1 + \epsilon)} \right\rfloor$ occurrences of job $i$ within a range of length $R_0$. 

Now, let $x = \left\lfloor \frac{R_0}{n + 1} \right\rfloor$. We can view any range of length $R_0$ having $x$ repetitions of contiguous $n + 1$ length blocks, as well as $R_0 - (n + 1) x$ additional days. Each of these $n + 1$ length blocks contains $n$ schedulings pulled from the original $S_2$, as well as a holiday. There must be at most one holiday among the $R - (n + 1) x$ additional days, so all but at most one of them were in the original. 

Therefore, at least $n \cdot x + (R_0 - (n + 1)x - 1)$ days in $S_3$ were in the original schedule $S_2$ contiguously, and due to $S_2$ being a valid schedule, job $i$ must have been present at least 
$$\left\lfloor \frac{nx + R_0 - 1 - (n + 1)x}{j_i} \right\rfloor = \left\lfloor \frac{R_0 - 1 - x}{j_i} \right\rfloor \text{ times.}$$

All of these occurrences are still there in our modified schedule, as we discussed. 

Since the job of period $j_i$ violated its constraint over a range of length $R_0$, it must be that 
$$\left\lfloor \frac{R_0 - 1 - x}{j_i} \right\rfloor < \left\lfloor \frac{R_0 n}{j_i (n + 1)} \right\rfloor.$$

In the special case of $R_0$ being a multiple of $n + 1$, there are no additional days, so there are exactly $nx$ days in $S_3$ that are from $S_2$, which contain at least $\left\lfloor \frac{nx}{j_i} \right\rfloor$ occurrences of job $i$. The constraint violation of job $i$ implies that $\left\lfloor \frac{nx}{j_i} \right\rfloor < \left\lfloor \frac{R_0 n}{j_i (n + 1)} \right\rfloor$. However, this is not possible since $\frac{nx}{j_i} = \frac{nR_0}{j_i (n + 1)}$. So, it cannot be that $R$ is a multiple of $n + 1$.

By similar logic, we have that 
$$\frac{R_0 - 1 - x}{j_i} = \frac{R_0 - 1 - \left\lfloor \frac{R_0}{n + 1} \right\rfloor}{j_i} \geq \frac{R_0 - 1 - \frac{R_0}{n + 1}}{j_i} = \frac{R_0 \cdot \left( 1 - \frac{1}{n + 1} \right) - 1}{j_i} = \frac{R_0 n}{j_i (n + 1)} - \frac{1}{j_i}.$$

Recalling that
$\left\lfloor \frac{R_0 - 1 - x}{j_i} \right\rfloor < \left\lfloor \frac{R_0 n}{j_i (n + 1)} \right\rfloor$, putting the two statements together we have that
$$\frac{R_0 n}{j_i (n + 1)} - \left\lfloor \frac{R_0 n}{j_i (n + 1)} \right\rfloor < \frac{1}{j_i}.$$

In particular, if the decimal part of $\frac{R_0 n}{n + 1}$ were greater than $\frac{1}{j_i}$, then $\frac{R_0 - 1 - x}{j_i}$ would have a decimal component greater than 0 with the same integer component. Note that
$$\frac{R_0 n}{j_i (n + 1)} = \frac{R_0}{j_i} - \frac{R_0}{j_i (n + 1)},$$

so plugging that into $\frac{R_0 n}{j_i (n + 1)} - \left\lfloor \frac{R_0 n}{j_i (n + 1)} \right\rfloor < \frac{1}{j_i}$, we have
$$\frac{R_0}{j_i} - \frac{R_0}{j_i (n + 1)} - \left\lfloor \frac{R_0 n}{j_i (n + 1)} \right\rfloor < \frac{1}{j_i}.$$

Let us define $r = R_0 - (n + 1)x$ so that $R_0 = r + (n + 1)x$. Plugging this in, we have
$$\frac{r + (n + 1)x}{j_i} - \frac{r + (n + 1)x}{j_i (n + 1)} - \left\lfloor \frac{R_0 n}{j_i (n + 1)} \right\rfloor < \frac{1}{j_i}.$$

Simplifying, 
$$0 \leq \frac{r + nx}{j_i} - \frac{r}{j_i (n + 1)} - \left\lfloor \frac{R_0 n}{j(n + 1)} \right\rfloor < \frac{1}{j_i}$$

Now, note that $r \in [0, n]$ so $0 \leq \frac{r}{j_i (n + 1)} < \frac{1}{j_i}$. Therefore, 
$$0 \leq \frac{r + nx}{j_i} - \left\lfloor \frac{R_0 n}{j_i (n + 1)} \right\rfloor < \frac{2}{j_i}.$$

Now, we can see that $r + nx = j_i \cdot z + 1$ for some non-negative integer $z$. Plugging the definition of $r$ back in, we have 
$$r + nx = nx + R_0 - (n + 1)x = R - x,$$
so $R - x = j_i \cdot z + 1$. Now, we see that $R_0 - x - 1 = j_i \cdot z$, so 
$$\left\lfloor \frac{R_0 - x - 1}{j_i} \right\rfloor = z,$$
but this means that 
$$\frac{R_0 n}{j_i(n + 1)} \leq z + \frac{1}{j_i},$$
and since $z$ is an integer either $j_i > 1$ in which case clearly $\left\lfloor \frac{R_0 n}{j_i (n + 1)} \right\rfloor = z$ or $j_i = 1$ in which case $R_0$ is not a multiple of $n + 1$ so $\frac{R_0 n}{n + 1}$ cannot be an integer and we still have $\left\lfloor \frac{Rn}{j_i (n + 1)} \right\rfloor = z$. 

Since $\frac{R_0 - 1 - x}{j_i}$ and $\frac{R_0 n}{j_i (n + 1)}$ both have integer component $z$, this contradicts the original inequality $\left\lfloor \frac{R_0 - 1 - x}{j_i} \right\rfloor < \left\lfloor \frac{R_0 n}{j_i (n + 1)} \right\rfloor$, completing the proof. 
\end{proof}

With the big jobs now scheduled, we now move on to scheduling the medium jobs within the gaps of $S_3$, beginning with a lemma with a relevant proof strategy.

\begin{lemma}\label{label:realps}
For any pinwheel instance $A$ with real periods and density at most $\frac{1}{2}$, $A$ is schedulable.
\end{lemma}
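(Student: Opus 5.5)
The plan is to reduce to the classical fact that any instance whose periods are powers of two and whose density is at most $1$ is schedulable. The main tool for getting there is the fold operation of \Cref{lem:fold}, combined with monotonicity: replacing periods by smaller ones can only make scheduling harder. Given real periods $a_1,\ldots,a_m$ with $D(A)\le \frac12$, I round each period down to a power of two, $a_i' = 2^{\lfloor \log_2 a_i\rfloor}$. Then $a_i' \le a_i$ and $a_i' > a_i/2$, so $D(A') < 2D(A) \le 1$. Monotonicity gives that schedulability of $A'$ implies schedulability of $A$. This holds under the real-period definition: a valid schedule for the smaller periods contains at least $\lfloor L/a_i'\rfloor \ge \lfloor L/a_i\rfloor$ occurrences of task $i$ in every window of length $L$.

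It remains to schedule $A'$, a multiset of powers of two with density at most $1$ (in fact $<1$). I do this by induction on the number of tasks, using a pairing argument very similar to the fold operation with powers of two. Sort the tasks by period. If the largest period $2^k$ occurs at least twice, replace two copies by a single task of period $2^{k-1}$. A schedule for the merged task at positions $t\equiv c \pmod{2^{k-1}}$ splits into the residues $c$ and $c+2^{k-1}$ modulo $2^k$, one for each original task. The density is unchanged. If $2^k$ occurs only once and $k\ge 1$, replace it by a task of period $2^{k-1}$. The density can then increase by $2^{-k}$, so I need to check that it stays at most $1$. Since all other periods are powers of two that are at most $2^{k-1}$, the remaining density is a multiple of $2^{-(k-1)}$ and is strictly less than $1$ (because the total was $<1$). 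Hence it is at most $1-2^{-(k-1)}$, and adding $2^{-(k-1)}$ keeps the total at most $1$. Repeating this process ends at a single task of period $1$ (density exactly $1$) or an empty instance, both trivially schedulable. Unwinding the merges gives a schedule in which each task $i$ occupies one residue class modulo $a_i'$. This is an exact schedule, hence valid.

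The main obstacle is the bookkeeping in the single-copy case when the total density of $A'$ is exactly $1$. Strict inequality $D(A')<2D(A)\le1$ rules this out initially, but the argument has to work once densities become $1$ during the process. So I will state the induction hypothesis as: every power-of-two instance with density at most $1$ has an exact schedule. Under that hypothesis the divisibility argument shows the remaining density is at most $1-2^{-(k-1)}$ whenever the sole top task has $k\ge1$. If instead $k=0$, that sole task of period $1$ is the whole instance. Alternatively, to match the paper's framing, the rounding-and-merging step can be phrased as applying $\pfold_\theta$ and using parts 1 and 3 of \Cref{lem:fold}. The direct power-of-two argument seems cleaner, so I would present that and mention the connection.
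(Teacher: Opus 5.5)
Your proof is correct, but it takes a different route from the paper's. The paper applies $\pfold_2$ directly to $A$. By parts 1 and 3 of \Cref{lem:fold}, the folded instance has density $< D(A)+\frac12 \le 1$ and all periods at most $2$. It therefore contains at most one task, which can be scheduled every day. You avoid the fold operation entirely. Rounding each period down to a power of two costs a density factor strictly less than $2$, so $D(A')<1$, and monotonicity transfers schedulability back to $A$. You then schedule the power-of-two instance by an elementary merge/halve argument. That step is a special case of the divisibility-chain theorem of Holte et al., which the paper itself quotes later as \Cref{lem:divschedulable}, so you could simply cite it.

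The paper's argument is shorter once the real-period fold lemma is taken as a black box, and that lemma is where the work hides. Yours is self-contained and uses only monotonicity. It also produces an explicit exact schedule in which each task occupies a single residue class modulo its rounded period. In effect it is the standard power-of-two rounding argument, and it applies verbatim to real periods. So the paper's ``subsidiary result'' does not actually need the fold machinery.

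One small slip: you announce an induction on the number of tasks, but the single-copy halving step does not reduce the number of tasks. Use a measure such as $\sum_i \log_2 a_i'$ instead. It is a nonnegative integer, since all periods stay at least $1$. It strictly decreases in both steps: by $k+1$ in the merge step, where the two affected exponents go from $2k$ to $k-1$, and by $1$ in the halving step. With that measure, your termination and base-case analysis go through as written.
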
 

\begin{proof}
Let us define $A' = \text{fold}_2(A)$. Then, 
$$D(A') < D(A) + \frac{1}{\theta} = D(A) + \frac{1}{2} \leq \frac{1}{2} + \frac{1}{2} = 1.$$
where we have used the $1/\theta$ upper bound on density gain from \Cref{lem:fold} \citep{kawamura}.

Since the total density of $A'$ is less than 1 and each job has period at most 1, there must be exactly one job in $A'$ (unless $A$ is the empty instance which is clearly schedulable) between 1 and 2. Then, $A'$ is schedulable by simply scheduling job 1 every day. By \Cref{lem:fold} \citep{kawamura}, the schedulability of $A'$ implies the schedulability of $A$.
\end{proof}

We use parallel logic to the lemma above, along with casework, to schedule the medium jobs. 

\textbf{Case 0}: $D_{\tmed} = 0$

In this case, there are no medium jobs to schedule. 

\textbf{Case 1}: $D_{\tmed} \leq \frac{1}{4(n + 1)}$

In this case, we can take the medium jobs (those jobs with periods strictly between $B_1$ and $B_2$) and fold them with $\theta = 4(n + 1)$. Since $D_{\tmed} \leq \frac{1}{4(n + 1)}$, after folding, the density is less than $\frac{1}{4(n + 1)} + \frac{1}{4(n + 1)} = \frac{1}{2(n + 1)}$. So, there is exactly one remaining job, and it has period greater than $2(n + 1)$. It can be scheduled using every second holiday created by the holiday insertion operations (since one gap was created every $n + 1$ days). 

\textbf{Case 2}: $D_{\tmed} > \frac{1}{4(n + 1)}$

In this case, we can take the medium jobs (those jobs with periods strictly between $B_1$ and $B_2$) and fold them with $\theta = 2(n + 1)$. We have $D_{\tmed} \leq \frac{1}{2(n + 1)}$ due to the condition on Line 8 of \Cref{algo:pptas}. Therefore, after folding, the density is less than $\frac{1}{2(n + 1)} + \frac{1}{2(n + 1)} = \frac{1}{(n + 1)}$. So, there is exactly one remaining job, and it has period greater than $n + 1$. It can be scheduled using the holidays created by the holiday insertion operations (since one gap was created every $n + 1$ days). 

Let $S_4$ be the schedule induced by the scheduling of medium jobs as described above (note that there is no need to compute the explicit schedule, an efficient representation as associated with the fold operation \citep{kawamura} is enough). In either case, $S_4$ schedules the small jobs and medium jobs from $A(1 + \epsilon)$ (and in fact the $\epsilon$ slack is not needed for the medium jobs). What remains is to prove that in either case, the small jobs can fit into the holidays left in $S_4$. 

Since the fraction of holidays $h_{\max}$ in our initial schedule $S_1$ is at least $D(A_{\text{not big}})$, $D(A_{\text{not big}})/h_{\max} \leq 1$. We will use this to upper bound $D_{\tsma}/h(S_4)$. 

In case 0, we have 
$$h(S_4) = \frac{h_{\max}}{1 + \epsilon} + \frac{\epsilon}{1 + \epsilon},$$
and $D_{\tsma} \leq D(A_{\text{not big}})$ so 
$$\frac{D_{\tsma}}{h(S_4)} \leq \frac{D(A_{\text{not big}})}{\frac{h_{\max}}{1 + \epsilon} + \frac{\epsilon}{1 + \epsilon}} \leq \frac{D(A_{\text{not big}})}{\frac{h_{\max}}{1 + \epsilon} + \frac{h_{\max} \epsilon}{1 + \epsilon}} = \frac{D(A_{\text{not big}})}{h_{\max}} \cdot \frac{1 + \epsilon}{1 + \epsilon} \leq 1 \cdot 1 = 1.$$

In case 1, we have 
$$h(S_4) = \frac{h_{\max}}{1 + \epsilon} + \frac{\epsilon/2}{1 + \epsilon},$$
and $D_{\tsma} \leq D(A_{\text{not big}})$, so 
$$\frac{D_{\tsma}}{h(S_4)} \leq \frac{D(A_{\text{not big}})}{\frac{h_{\max}}{1 + \epsilon} + \frac{\epsilon/2}{1 + \epsilon}} \leq \frac{D(A_{\text{not big}})}{\frac{h_{\max}}{1 + \epsilon} + \frac{h_{\max} \epsilon/2}{1 + \epsilon}} = \frac{D(A_{\text{not big}})}{h_{\max}} \cdot \frac{1 + \epsilon}{1 + \frac{\epsilon}{2}} \leq 1 + \frac{\epsilon}{2}.$$

In case 2, we have $h(S_4) = \frac{h_{\max}}{1 + \epsilon}$, and 
$$D_{\tsma} = D(A_{\text{not big}}) - D_{\tmed} \leq D(A_{\text{not big}}) - \frac{1}{4(n + 1)} = D(A_{\text{not big}}) - \frac{\epsilon}{4(1 + \epsilon)}.$$
Then, 
$$\frac{D_{\tsma}}{h(S_4)} \leq \frac{ D(A_{\text{not big}}) - \frac{\epsilon}{4(1 + \epsilon)}}{\frac{h_{\max}}{1 + \epsilon}} \leq \frac{D(A_{\text{not big}}) - D(A_{\text{not big}}) \cdot \frac{\epsilon}{4(1 + \epsilon)}}{\frac{h_{\max}}{1 + \epsilon}} = \frac{D(A_{\text{not big}})}{h_{\max}} \cdot \left( 1 + \epsilon - \frac{\epsilon}{4} \right) \leq 1 + \frac{3 \epsilon}{4}.$$

In any case, 
$$\frac{D_{\tsma}}{h(S_4)} \leq 1 + \frac{3 \epsilon}{4}.$$

Let us define $A_{\tsma, 1}$ as the set of small jobs in $A$ (i.e., jobs with period at least $B_2$).

Define $A_{\tsma, 2}$ as $A_{\tsma, 1}(1 + \epsilon)$ (i.e., multiply every job period in $A_{\tsma, 1}$ by $1 + \epsilon$). Then, 
$$D(A_{\tsma, 2}) = \frac{D(A_{\tsma, 1})}{1 + \epsilon} = \frac{D_{\tsma}}{1 + \epsilon}.$$

Now, suppose we let $A_{\tsma, 3}$ be the result of rounding down all the job periods in $A_{\tsma, 2}$ down to the nearest multiple of $2 \, \text{len}(S_3)$. Note that $\text{len}(S_1) \leq B_1^{B_1}$ and $\text{len}(S_3) = (1 + \epsilon) \cdot \text{len}(S_1)/\epsilon$. In addition, all job periods in $A_{\tsma, 2}$ are at least $\frac{16B_1^{B_1}}{\epsilon^2} \cdot \left( 1 + \epsilon \right)$. Therefore, the flooring operation increases the density by at most a multiplicative factor $1 + \frac{2B_1^{B_1}/\epsilon}{16B_1^{B_1}/\epsilon^2} = 1 + \frac{\epsilon}{8}$. In other words, $D(A_{\tsma, 3}) \leq D(A_{\tsma, 2}) \cdot \left( 1 + \frac{\epsilon}{8} \right)$. 

Let us define $A_{\tsma, 4}$ as the result of multiplying every job period in $A_{\tsma, 3}$ by $h(S_4)$. Clearly, $D(A_{\tsma, 4}) = D(A_{\tsma, 3})/h(S_4)$. We claim that every job period in $A_{\tsma, 4}$ is an integer. In particular, the denominator of $h(S_3)$ is $\text{len}(S_3)$ (ignoring simplification) and $h(S_4)$ is either $h(S_3)$ (in case 0), $h(S_3) - \frac{1}{2(n + 1)}$ (in case 1), or $h(S_3) - \frac{1}{n + 1}$ (in case 2). So, in any case, $h(S_4)$ has a denominator that is a factor of $2 \, \text{len}(S_3)$ (since $\text{len}(S_3)$ already contains a factor of $n + 1$), and every element of $A_{\tsma, 3}$ is a multiple of $2 \, \text{len}(S_3)$.  

We claim that $A_{\tsma, 4}$ is schedulable. In particular, recall that $\frac{D_{\tsma}}{h(S_4)} \leq 1 + \frac{3 \epsilon}{4}$ so $D_{\tsma} \leq h(S_4) \cdot \left( 1 + \frac{3 \epsilon}{4} \right)$. Now, following through the inequalities in the paragraphs above, we have
$$D(A_{\tsma, 4}) = \frac{D(A_{\tsma, 3})}{h(S_4)} \leq \frac{D(A_{\tsma, 2}) \cdot \left( 1 + \frac{\epsilon}{8} \right)}{h(S_4)} = \frac{D_{\tsma}}{1 + \epsilon} \cdot \frac{1 + \frac{\epsilon}{8}}{h(S_4)} \leq \frac{\left( 1 + \frac{\epsilon}{8} \right) \left( 1 + \frac{3\epsilon}{4} \right)}{1 + \epsilon} \leq \frac{1 + \frac{7 \epsilon}{8} + \frac{3 \epsilon^2}{32}}{1 + \epsilon}.$$

Recall that $\epsilon < \frac{2}{7}$ and we are assuming that $\epsilon$ is the inverse of an integer so $\epsilon \in (0, \frac{1}{4}]$. Then,
$$\frac{1 + \frac{7 \epsilon}{8} + \frac{3 \epsilon^2}{32}}{1 + \epsilon} \leq \frac{1 + \frac{7 \epsilon}{8} + \frac{3 \epsilon}{128}}{1 + \epsilon} = \frac{1 + \frac{115 \epsilon}{128}}{1 + \epsilon} = 1 - \frac{\frac{13 \epsilon}{128}}{1 + \epsilon} \leq 1 - \frac{\frac{13 \epsilon}{128}}{1 + \frac{1}{4}} = 1 - \frac{13\epsilon}{160}.$$

We now appeal to the following theorem of \cite{covering} to prove that $A_{\tsma, 4}$ can be scheduled.

\begin{lemma}[\citealp{covering}, Theorem 4]
\label{lem:coveringdensity}
If $A$ is a pinwheel instance with smallest period $f_1$ and 
$$D(A) \leq 1 - \frac{1 + \ln(2)}{1 + \sqrt{f_1}} - \frac{3}{2f_1},$$
then $A$ is schedulable. 
\end{lemma}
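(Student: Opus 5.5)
This lemma is quoted from \cite{covering}, so in the paper it is simply invoked. If I had to reprove it, I would use a two-level rounding argument: round the periods to a structured family whose instances are schedulable exactly up to density $1$, and then charge the density lost in rounding to the two error terms. Write $f = f_1$ and fix a parameter $x \approx \sqrt{f}$, which I will pick as an integer $x$ with $x^2 \le f$.

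\emph{Step 1 (dyadic rounding in blocks).} For each integer $c \in [x, 2x)$, consider the ``class'' of periods of the form $c \cdot x \cdot 2^k$ with $k \ge 0$. Every period $a \ge f \ge x^2$ lies in some dyadic range $[x^2 2^k, x^2 2^{k+1})$. I round $a$ down to the largest number $c\,x\,2^k \le a$ with $c \in [x,2x)$ an integer. Consecutive candidates differ by $x 2^k$, while $a \ge x^2 2^k$, so each period shrinks by a factor at least $1 - 1/x$. Hence the density grows by a factor at most $1/(1-1/x)$, roughly an additive $D(A)/\sqrt{f}$.

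\emph{Step 2 (each class is a dyadic instance).} Within a class $c$, all rounded periods are $c\,x$ times a power of two. If the class has density $d_c$, then dividing every period by a common factor turns it into a harmonic (dyadic) instance. Such an instance is schedulable whenever its density is at most $1$ (the standard result for periods that divide one another, as used in \Cref{lem:epsfill}). Concretely, class $c$ can be run inside any set of time slots that itself forms a perfectly periodic subsequence of period $q_c$, provided $d_c \, q_c \le 1$ and $q_c$ divides $c\,x$.

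\emph{Step 3 (allocating slots to classes).} It remains to split the integers into disjoint perfectly periodic subsequences, one per class, with class $c$ getting rate at least $d_c$. I would do this with a second pinwheel instance of ``virtual tasks'', one per class $c$, with period $1/d_c$ rounded down to a dyadic multiple of something dividing $cx$. Because every class period is at least $x$, the virtual instance has all periods large, so this second rounding is cheap. I would then use the fold operation (\Cref{lem:fold}) to make the virtual instance exactly dyadic. Folding with $\theta$ of order $f$ costs an additive $O(1/f)$ in density, which is where I expect the $\frac{3}{2f}$ term (floors and integrality) to come from.

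\emph{Main obstacle.} The hardest part is getting the exact constant $1+\ln 2$ rather than an unspecified $O(1/\sqrt{f})$. For that I would choose the rounding base log-uniformly rather than fixing $x$: pick $r$ with $\log_2 r$ uniform on an interval of length $1$, and average the density loss over $r$. Averaging $\int_0^1 2^{u}\,du = 1/\ln 2$ type terms, combined with the $1/x$ loss of Step 1, should yield the $(1+\ln 2)/(1+\sqrt{f})$ bound after optimizing $x \approx \sqrt{f}$. Making the block rounding of Step 1 and the virtual-task rounding of Step 3 share this randomness, while keeping all periods integral, is the delicate bookkeeping step.
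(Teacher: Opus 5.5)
Invoking \cite{covering} is exactly what the paper does. Its only description of the cited proof is that it folds the instance and rounds fold outputs to a nearby multiple of $\sqrt{f_1}$. Your reproof sketch, however, has a genuine gap at Step 3, and it fails before the constant $1+\ln 2$ even comes into play.

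Steps 1 and 2 are fine. Rounding to $c\,x\,2^k$ costs a factor $x/(x-1)$ in density, and a class is a divisibility chain that runs inside a residue class of spacing $q_c$ when $q_c\mid cx$ and $d_cq_c\le 1$. The problem is that the virtual period of class $c$ is about $1/d_c$. The fact that the \emph{jobs} in a class have periods at least $x^2$ says nothing about this, because a class can carry constant density. Take $f=x^2$ and consider $\lfloor x^2/2\rfloor$ jobs of period $x^2$ (class $c=x$), $\lfloor x(x+1)/3\rfloor$ jobs of period $x(x+1)$ (class $c=x+1$), and one job of period $x(x+2)$. The density is about $5/6$, so this satisfies the hypothesis for large $x$. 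Your method needs integer stream spacings at most $2$ and $3$ for the first two classes, and no third stream fits beside them; indeed $(2,3,M)$ is unschedulable for every $M$.

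So the outer instance is a general pinwheel instance of density near $1$ with small periods. The sentence ``every class period is at least $x$'' conflates job periods with a class's aggregate rate, and nothing in your plan can schedule that outer instance. Folding with $\theta$ of order $f$ leaves periods below $\theta$ untouched, and it never produces a dyadic instance. Moreover, Step 2 needs an exactly periodic slot set whose spacing divides $cx$ (up to a power of two). For odd primes $c,x$ the admissible spacings are essentially only $2^j,2^jc,2^jx,2^jcx$, so rounding $1/d_c$ down to one of them can waste a constant fraction of $d_c$, not $O(1/f)$. Your own accounting signals this. Step 1 already spends about $1/\sqrt f$, so if Step 3 really cost $O(1/f)$ you would get constant $1$ rather than $1+\ln 2$, and the randomized base in your last paragraph would be pointless.

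The missing idea is to make all streams identical and long, so the outer schedule is trivial and every loss is paid when assigning jobs to streams; this is what rounding to multiples of $\sqrt{f_1}$ buys. One way to do it:
\begin{itemize}
\item Fold with $\theta=2f_1$, which costs less than $\frac{1}{2f_1}$ in density by \Cref{lem:fold}; now all periods lie in $[f_1,2f_1]$.
\item Split time into the $s\approx\sqrt{f_1}$ residue classes mod $s$. A job of period $p$ placed in a class must recur every $g=\lfloor p/s\rfloor$ class slots, so round robin works in any class holding at most $\min g$ of its jobs.
\item Filling classes greedily with jobs sorted by $g$ uses at most $\sum_i 1/g_i+\ln(g_{\max}/g_{\min})+1$ classes. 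After folding, $g_{\max}/g_{\min}\approx 2$, and $1/g_i$ exceeds $s/p_i$ only by a factor $1+O(1/\sqrt{f_1})$.
\end{itemize}
This is where $\ln 2$ naturally comes from: the ratio-$2$ window created by folding, not a randomized rounding base. This crude version yields only $1-O(1/\sqrt{f_1})$; the exact constant $(1+\ln 2)/(1+\sqrt{f_1})$ requires the finer bookkeeping of \cite{covering}.
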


Relevant ideas in the proof of \Cref{lem:coveringdensity} are a use of the fold operation (\Cref{algo:fold}) and rounding fold outputs to a close multiple of $\sqrt{f_1}$. We refer to \cite{covering} for the details. 

Note that on Line 5 of \Cref{algo:pptas}, $u$ is initially set to $16n^2 \ell^{\ell}$ and since $n \geq 4$, $16n^2 \ell^{\ell} \geq 16n^2 \cdot 4^4 = 4096n^2$. This means that $f_1 \geq 4096n^2$. Now, 
$$1 - \frac{1 + \ln(2)}{1 + \sqrt{4096n^2}} - \frac{3}{2 \cdot 4096n^2} \geq 1 - \frac{2}{64n} - \frac{1}{64n} = 1 - \frac{3}{64n} = 1 - \frac{3 \epsilon}{64}.$$ 
In addition, $13/160 > 3/64$, so 
$$1 - \frac{13 \epsilon}{160} < 1 - \frac{3 \epsilon}{64}$$
Since $D(A_{\tsma, 4}) \leq 1 - \frac{13 \epsilon}{160}$, this means that by \Cref{lem:coveringdensity} \citep{covering}, $A_{\tsma, 4}$ is schedulable. Then, by \Cref{lem:periodic} \citep{holte1989pinwheel}, there is a corresponding cyclic schedule which we call $S_5$. In addition, it is possible to get an efficient representation of this schedule from the technique in \cite{covering}, which relies on the fold operation to construct a satisfying schedule.

We claim that a scheduling for the small jobs can be obtained by placing the jobs from $S_5$ consecutively within the holidays of $S_4$. Call this schedule $S_6$. 

To see why the small jobs all satisfy their constraints under such a scheme, consider any arbitrary job $i$ with period within $A$ of $j_i$. Then, the corresponding job $i$ has period $j_i$ within $A_{\tsma, 1}$, period $j_i \cdot (1 + \epsilon)$ within $A_{\tsma, 2}$, period $2 \, \text{len}(S_3) \cdot \left\lfloor \frac{j_i \cdot (1 + \epsilon)}{2 \, \text{len}(S_3)} \right\rfloor$ within $A_{\tsma, 3}$, and finally period $2h(S_4) \, \text{len}(S_3) \cdot \left\lfloor \frac{j_i \cdot (1 + \epsilon)}{2 \, \text{len}(S_3)} \right\rfloor$ within $A_{\tsma, 4}$. 

Now, note that the pattern of holidays in $S_4$ repeats every $2 \, \text{len}(S_3)$ days due to cases 0, 1, and 2 of medium job scheduling, either taking no holidays, every other inserted holiday, or every inserted holiday. In particular, for any $d > 0$, day $d$ in $S_4$ is a holiday if and only if day $d + 2 \, \text{len}(S_3)$ is a holiday. 

Therefore, holidays that are $h(S_4) \cdot 2 \, \text{len}(S_3)$ holidays apart must be $2 \, \text{len}(S_3)$ days (including holidays) apart. Iterating this, we see that for any $k > 0$, holidays that are $k \cdot h(S_4) \cdot 2 \, \text{len}(S_3)$ holidays apart must be $k \cdot 2 \, \text{len}(S_3)$ days apart. 

Instantiating this with $k = \left\lfloor \frac{j_i \cdot (1 + \epsilon)}{2 \, \text{len}(S_3)} \right\rfloor$, and noting that the constraint for job $i$ within $A_{\tsma, 4}$ corresponds to the number of holidays apart in $S_6$, we see that consecutive occurrences of job $i$ within $S_6$ must be at most $2 \, \text{len}(S_3) \cdot \left\lfloor \frac{j_i \cdot (1 + \epsilon)}{2 \, \text{len}(S_3)} \right\rfloor$ days apart. In addition, 
$$2 \, \text{len}(S_3) \cdot \left\lfloor \frac{j_i \cdot (1 + \epsilon)}{2 \, \text{len}(S_3)} \right\rfloor \leq 2 \, \text{len}(S_3) \cdot \frac{j_i \cdot (1 + \epsilon)}{2 \, \text{len}(S_3)} = j_i \cdot (1 + \epsilon),$$
which is exactly the job period constraint of job $i$ within $A(1 + \epsilon)$, showing that the small jobs all satisfy their period constraints.

Since all jobs are small, medium, or big, and we have scheduled all these corresponding jobs within $A(1 + \epsilon)$ according to their required periods, this demonstrates the correctness of the algorithm in the schedulable case. 

Note that the placements of the big jobs are explicit and that of the medium and small jobs can be considered in an efficient representation (i.e., the proof is constructive) due to relying on the fold operation. 
\end{proof}

%% file: hardness-proof.tex
\section{Omitted Proofs from \texorpdfstring{\Cref{sec:hardness}}{Section~\ref{sec:hardness}}}
\label{sec:hardness-proof}

We restate and prove several lemmas whose proofs were deferred to this section. 

\lemmaepsfill*

\begin{proof}
By assumption, there is an EPS schedule for $A$, so let $S_1$ be such a schedule. Applying \Cref{lem:periodic} \citep{exact}, $S_1$ must be periodic with period $L = \text{LCM}(a_1, \dots, a_m)$, so let $S_2$ be one period of $S$. 

Let $h$ be the fraction of days in $S_2$ where no job is scheduled. Let us define $B' = (hb_1, \dots, hb_n)$. 

We claim that every job period within $B'$ is an integer. In particular, the denominator of $h$ is a factor of $\text{len}(S_2)$, and by assumption every job period in $B$ is a multiple of $L$, which is exactly $\text{len}(S_2)$. 

Let us define $B''$ as the instance derived from adding $\text{LCM}(B') \cdot (1 - D(B'))$ instances of the period $\text{LCM}(B')$ to $B'$. Note that $\text{LCM}(B')$ is the least common multiple of all job periods within $B'$. In addition, the added job periods are integers because $1 - D(B')$ is a fraction composed as a sum and difference of fractions, which are integer multiples of $1/\text{LCM}(B')$, so the overall fraction is an integer multiple of $1/\text{LCM}(B')$. In addition, 
$$D(B'') = D(B') + \text{LCM}(B') \cdot \frac{1 - D(B')}{\text{LCM}(B')} = D(B') + 1 - D(B') = 1.$$
Note that $D(C) \leq 1$, which means that $D(B) \leq h$, and $D(B') = \frac{D(B)}{h}$, so $D(B') \leq 1$. This shows that a non-negative number of jobs are added to go from $B'$ to $B''$.

We claim that $B''$ is schedulable. In particular, with $b'_i = hb_i$, since each job period is multiplied by the same factor, it still holds that $b'_i | b'_j$ for all $1 \leq i < j \leq n$. The new jobs added also satisfy this divisibility relation since the periods are all the LCM of the previous set of jobs. Recall that $D(B'') \leq 1$ (and in fact this holds with equality). The following theorem of \cite{holte1989pinwheel} implies that $B''$ is schedulable. 

\begin{lemma}[\citealp{holte1989pinwheel}, Theorem 3.1]
\label{lem:divschedulable}
If $A = (a_1, \dots, a_m)$ satisfies $a_i | a_j$ for all $1 \leq i < j \leq n$, and $D(A) \leq 1$, then $A$ is schedulable. 
\end{lemma}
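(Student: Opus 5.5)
We schedule task by task in increasing order of period, keeping the residue classes still free as an arithmetic progression, and we exhibit an explicit exact schedule in which each task $i$ occupies a single residue class modulo $a_i$.

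We sort so that $a_1 \mid a_2 \mid \cdots \mid a_m$ and process tasks in this order. The invariant is the following: after tasks $1,\dots,i$ have been placed, the set of unused time steps is a union of $r_i$ residue classes modulo $a_i$, where
\[
r_i = a_i\left(1 - \sum_{k\le i} \tfrac{1}{a_k}\right).
\]
This is a nonnegative integer, since each $a_i/a_k$ is an integer. Initially $r_0 = 1$, with $a_0 = 1$, and the unused set is all of $\mathbb{Z}$.

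For the inductive step, take a residue class modulo $a_{i-1}$ that is free. Because $a_{i-1} \mid a_i$, it splits into $a_i/a_{i-1}$ residue classes modulo $a_i$. So the free set is a union of
\[
r_{i-1}\cdot \frac{a_i}{a_{i-1}} = a_i\left(1-\sum_{k<i}\tfrac{1}{a_k}\right)
\]
classes modulo $a_i$. Provided this count is at least $1$, we assign task $i$ to one of these classes. It then appears exactly every $a_i$ steps, and the number of free classes drops by one to $r_i$.

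The only thing to check is that the count stays positive, which is where the hypothesis $D(A)\le 1$ enters. We have
\[
a_i\left(1-\sum_{k<i}\tfrac1{a_k}\right) \;\ge\; a_i\left(\sum_{k\ge i}\tfrac1{a_k}\right) \;\ge\; 1 .
\]
Since the left side is an integer, at least one free class is always available.

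The resulting schedule places every task at exact spacing $a_i$, so it is an EPS schedule and in particular a valid pinwheel schedule. This establishes the lemma. It also gives the stronger EPS conclusion that the surrounding proof of \Cref{lem:epsfill} needs, since an EPS schedule for $B''$ is then interleaved into the holidays of $S_2$.

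There is no real obstacle. The one point needing care is the integrality of the class counts, and this follows directly from the divisibility chain.
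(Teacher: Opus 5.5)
Your argument is correct and complete. The free slots after placing tasks $1,\dots,i$ form exactly $r_i=a_i\bigl(1-\sum_{k\le i}1/a_k\bigr)$ residue classes modulo $a_i$, because each free class modulo $a_{i-1}$ refines into $a_i/a_{i-1}$ classes modulo $a_i$. The hypothesis $D(A)\le 1$ then gives $r_{i-1}a_i/a_{i-1}\ge a_i\sum_{k\ge i}1/a_k\ge 1$ at every step. (Integrality is needed only for these counts to make sense; positivity already follows from the inequality.)

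The paper gives no proof of this lemma. It cites the greedy scheduler of \citet{holte1989pinwheel}, of which your residue-class allocation is essentially the offline form. It also remarks that one could instead use the job-combination idea of \Cref{algo:combine}, which runs in the opposite direction. One natural version of that route goes as follows. Pad with copies of the largest period until the density is exactly $1$. Then the number of jobs of the current largest period $p$ is forced to be a multiple of $p/p'$, where $p'$ is the next smaller period, so these jobs can be merged into jobs of period $p'$. Repeat until only $a_1$ copies of period $a_1$ remain, which are trivially schedulable, and unwind with \Cref{lem:partitioning}.

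That route reuses machinery the paper needs later for the reduction. Yours is shorter, gives each task's offset explicitly, and produces an exact (EPS) schedule without requiring density exactly $1$. As you point out, exactness is what \Cref{lem:epsfill} actually uses. With your construction, the padding of $B'$ to $B''$ in that proof, and the appeal to \citet{covering} for ``dense implies exact,'' are no longer needed: $B'$ is itself a divisibility chain of density at most $1$.
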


\citealp{holte1989pinwheel} proves \Cref{lem:divschedulable} using an explicit fast online scheduler based on a greedy idea. It is also possible to prove this based on a job combination idea related to an algorithm we present in a later proof (\Cref{algo:combine}). 

Let $S_3$ be a schedule of $B''$. Note that because $B''$ is dense, $S_3$ is an EPS schedule \citep{covering}. 

Let $S_4$ be the schedule obtained from deleting all instances of the ``filler'' jobs added to $B''$ (i.e., the ones which were added with period $\text{LCM}(B')$). Then, $S_4$ is a valid EPS schedule for $B'$, since those job placements were not affected. 

We claim that the schedule $S_5$ obtained from placing the jobs from $S_4$ consecutively within the gaps of $S_1$ is an EPS schedule for $C$. In particular, the fact that the jobs in $A$ satisfy their constraints in $S_4$ is clear from the fact that $S_1$ is a valid EPS schedule for $A$. 

Now consider any job $i$ with period $b_i$ from $B$. Note that due to the cyclic nature of $S_1$, for any $j, k \mathbb{N}$, the number of holidays between $j$ and $j + kL$ is exactly $hkL$. Recall that $b_i$ is a multiple of $L$. Letting $j$ be the location of one scheduling of job $i$ and setting $k = b_i/L$, we see that there are $hkL = hb_i$ holidays in between the scheduling at position $j$ of job $i$ and its next deadline. In addition, since $S_4$ is a valid schedule for $B'$, we know that consecutive instances of job $i$ are scheduled exactly $hb_i$ steps apart within $S_3$, aligning with the need to be exactly $hb_i$ holidays apart within $S_1$. 

Since all jobs in $A$ and $B$ satisfy their period constraints without our constructed schedule, this completes the proof. 
\end{proof}

\lemepsredsat*

\begin{proof}
Note that the main ideas of this proof are from \cite{exact}, though that paper does not formally prove the desired statement with all the details. 

Since \redeps \text{} is schedulable, there must be some satisfying schedule, call it $S$. 

We consider subschedules of $S$, in particular, the $2n$ subschedules whose elements share the same congruence class modulo $2n$. For example, one such subschedule consists of all positions in $S$ which are multiples of $2n$. 

Note that every job in \redeps \text{} has a period which is a multiple of $2n$. By the additivity of modular congruence, this means that for any job $j$, every occurrence of job $j$ must occur within a single subschedule. 

Let us consider $x_i = \bar{x}_{i - n}$ for $i \geq n + 1$, for notational convenience.

Now let $j_i$ be one job of period $\trep_1(x_i)$ and similarly let $j_k$ be one job of period $\trep_1(x_k)$. Then, we claim that $j_i$ and $j_k$ occupy the same subschedule if and only if $i = k$. We prove each direction separately. 

If $i \neq k$, then we need to prove that $j_i$ and $j_k$ cannot occupy the same subschedule. Since $\trep_1(x_i)$ is a different prime from $\trep_1(x_k)$, and $\trep_1(x_i)^2$ and $\trep_1(x_k)^2$ are relatively prime, then \Cref{lem:gcd} applies (with $\gcd(a_i, a_j) = 2n$), and the desired claims follows. 

If $i = k$, we need to prove that $j_i$ and $j_k$ appear within the same subschedule. Suppose, for the sake of contradiction, that they appear in separate subschedules, $s_1$ and $s_2$. Then, from the $i \neq k$ case, we already know that any literal other than $x_i$, its corresponding job must reside in a subschedule other than $s_1$ or $s_2$ (and this relationship holds pairwise), meaning there must be at least $2n - 1$ other subschedules. However, this produces at least $2n + 1$ subschedules, greater than our total of $2n$ subschedules. This is a contradiction, completing the proof. 

Based on our work so far, we can associate each of the $2n$ subschedules with a particular literal. 

Now, consider the job $f_i$ for any $i$. If it were placed in any subschedule other than that of $x_i$ or $\bar{x}_i$, \Cref{lem:gcd} would apply (with $\gcd(a_i, a_j) = 2n$) due to $f_i/2n$ not sharing prime factors with any of the other $\trep_2(x_k)$ values. Therefore, $f_i$ must occupy either the same subschedule as $x_i$ or the same subschedule as $\bar{x}_i$. Let us associate $f_i$ occupying the $x_i$ subschedule with setting the variable $x_i$ to false in the 3-SAT formula, and let us associate $f_i$ occupying the $\bar{x}_i$ subschedule with setting the variable $x_i$ to true. 

Then, we claim that this truth assignment, iterated over all the $f_i$ jobs, produces a satisfying assignment for the 3-SAT formula. We now focus on proving the validity of this truth assignment.

Consider any arbitrary clause $C_j$. It has a corresponding clause job of period $\trep_2(C_j)$ which is scheduled in some subschedule. Since the only prime factors of $\trep_2(C_j)/2n$ are $\trep_1(c_{jk})$ for $k \in [3]$, by \Cref{lem:gcd} (with $\gcd(a_i, a_j) = 2n$), the clause job must be scheduled within a subschedule corresponding to one of its constituent literals. Suppose it is literal $x_k$. We claim that $f_k$ is not scheduled within the subschedule of literal $x_k$. Suppose, for the sake of contradiction, that $f_i$ were scheduled within the schedule of literal $x_k$. Suppose $f_k$ occupies the congruence class $c$ modulo $2n \, \trep_1(x_k)$. 

Applying \Cref{lem:gcd} (with $\gcd(a_i, a_j) = 2n$), the only jobs which could reside in the same congruence class modulo $2n \, \trep_1(x_k)$ are the $\trep_2(x_k)$ jobs and the clause jobs which contain $\trep_1(x_k)$ as a factor in the period. However, all of these jobs have a period, which, when divided by $2n \, \trep_1(x_k)$, are relatively prime to $\trep_1(\bar{x}_k)$ (note that we have assumed a clause cannot contain both a literal and its negation). Then, applying \Cref{lem:gcd} (with $\gcd(a_i, a_j) = 2n \, \trep_1(x_k)$), we see that no job can occupy the same congruence class as $f_k$ modulo $2n \, \trep_1(x_k)$. 

However, this means that the remaining density is $\frac{1}{2n} \cdot \left( 1 - \frac{1}{\trep_1(x_k)} \right)$ which is exactly enough density for the jobs of the form $\trep_2(x_k)$ since 
$$\frac{b_{x_k}}{\trep_2(x_k)} = \frac{[\trep_1(x_k)]^2 - \trep_1(x_k)}{2n[\trep_1(x_k)]^2} = \frac{1}{2n} \cdot \left( 1 - \frac{1}{\trep_1(x_k)} \right).$$
This contradicts the assumption that $C_j$ was scheduled within the subschedule of literal $x_k$, proving that $f_k$ must not have been scheduled within that subschedule. 

However, $f_k$ must be scheduled in either the subschedule of $x_k$ or of $x_{(k + n - 1 \text{ mod } 2n) + 1}$, so it is scheduled in that of $x_{(k + n - 1 \text{ mod } 2n) + 1}$. By our description of the truth assignment, this means that literal $x_k$ is assigned to true (more specifically, if $k \leq n$ and $x_k$ is a positive literal, the variable would be assigned to true and if $k > n$ and $x_k$ is a negation, the variable would be assigned to false but the negation would still satisfy the clause). This proves that clause $C_j$ is satisfied, and we could do this for any $j \in [m]$, completing the proof of satisfiability. 
\end{proof}

\lemepssatred*

\begin{proof}
If $C$ is satisfiable, then there is some satisfying assignment; let $T_i$ be the corresponding truth value of variable $x_i$ in the assignment.

Note that the monotonicity and partitioning properties \citep{kawamura} from pinwheel scheduling have analogs in exact pinwheel scheduling. In particular, if $A \sqcup (a)$ has an EPS schedule, then $A \sqcup (b)$ has an EPS schedule for any $b$ such that $a | b$. In addition, for any $q \in \mathbb{N}$, if $A \sqcup (a)$ has an EPS schedule, then $A \sqcup (\underbrace{aq, aq, \dots, aq}_{q \text{ times}})$ has an EPS schedule. For the proof of the monotonicity property, note that job $b$ can be scheduled within the spaces left by the deletion of job $a$ (with the rest left as gaps). For the partitioning property, one can consecutively schedule jobs $1, 2, \dots, q$ within the gaps left by the deletion of job $a$, and since the original schedule had gaps of exactly $a$, the new one will have gaps between consecutive instances of the same job color of $aq$, as desired. 

We will apply a sequence of unschedulability-preserving transformations, prove that the final list of jobs is schedulable, and thereby prove that the original instance is also schedulable. 

In particular, for all $i$, by the monotonicity property for EPS, the job of period $2n \, \trep_1(x_i) \, \trep_1(\bar{x}_i)$ (corresponding to $f_i$) can be replaced with $2n \, \trep_1(x_i)$. In addition, for all $j$, let $x_{C, j}$ be a literal which satisfies clause $C_j$ according to truth assignment $T$. Such a literal must exist since $T$ is a satisfying truth assignment. Then, for all $j$, the job of period $2n [\trep_1(c_{j_1}) \, \trep_1(c_{j_2}) \, \trep_1(c_{j_3})]^2$ (corresponding to $\trep_2(C_j)$ can be replaced with a job of period $2n[\trep_1(x_{C, j})]^2$. 

Now, we assign subschedule $i$ to literal $x_i$ for all $i$, and describe how they can be scheduled separately within their subschedules.

For any literal $x_i$ which has a truth value of false, the jobs which need to be scheduled are the $[\trep_1(\bar{x}_i)]^2 - \trep_1(\bar{x}_i)$ repetitions of period $[\trep_1(x_i)]^2$ and the monotonically reduced period from $f_i$ of $\trep_1(x_i)$. Note that we have divided by $2n$ because we are considering the placement only within the subschedule, and we know they do not conflict with each other. Let $B$ be the corresponding pinwheel instance. Then, 
$$D(B) = \frac{[\trep_1(\bar{x}_i)]^2 - \trep_1(\bar{x}_i)}{[\trep_1(x_i)]^2} + \frac{1}{\trep_1(x_i)} = \frac{\trep_1(\bar{x}_i) - 1}{\trep_1(x_i)} + \frac{1}{\trep_1(x_i)} = 1.$$
In addition, if the job of period $\trep_1(x_i)$ is placed first within $B$, then all job periods are divisible by the previous period. Then, applying \Cref{lem:epsfill} with $A = ()$, there is an EPS schedule for $B$, as desired. 

For any literal $x_i$ which has a truth value of true, the jobs which need to be scheduled are the $[\trep_1(\bar{x}_i)]^2 - \trep_1(\bar{x}_i)$ repetitions of period $[\trep_1(x_i)]^2$ and monotonically reduced periods from the clauses of the form $\trep_1(x_i)^2$. Let $B$ be the corresponding pinwheel instance. Then, 
$$D(B) = \frac{\trep_1(\bar{x}_i) - 1}{\trep_1(x_i)} + \frac{r}{\trep_1(x_i)^2},$$ 
where $r$ is the number of clause jobs that $x_i$ is assigned to satisfy. Clearly, $r \leq m$, and by the definition of $\trep_1$, $\trep_1(x_i) \geq m$ so 
$$\frac{\trep_1(\bar{x}_i) - 1}{\trep_1(x_i)} + \frac{r}{\trep_1(x_i)^2} \leq \frac{\trep_1(\bar{x}_i) - 1}{\trep_1(x_i)} + \frac{m}{m \cdot \trep_1(x_i)} = \frac{\trep_1(\bar{x}_i) - 1}{\trep_1(x_i)} + \frac{1}{\trep_1(x_i)} = 1.$$
Since all the periods are equal, each period in $B$ is divisible by the last. Now, applying \Cref{lem:epsfill} with $A = ()$, there is an EPS schedule for $B$, as desired. Each literal is true or false, so this completes the proof. 
\end{proof}

\lemprimesize*

\begin{proof}
The proof follows easily using the following lemma.

\begin{lemma}\label{lem:count}
For any $v \geq 3$, the number of primes $p$ satisfying $v < p \leq v^3$ is at least $2v$. 
\end{lemma}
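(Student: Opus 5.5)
The statement to prove is \Cref{lem:count}: for every $v \geq 3$, the interval $(v, v^3]$ contains at least $2v$ primes. The plan is to compare $\pi(v^3)$ with $\pi(v)$, using an explicit lower bound for the first and a crude upper bound for the second. The paper says it relies on a result of \cite{counting}, presumably explicit Rosser--Schoenfeld type bounds. I would invoke
\[
\frac{x}{\ln x} < \pi(x) \quad \text{for } x \geq 17,
\qquad\text{and}\qquad
\pi(x) < 1.26\,\frac{x}{\ln x} \quad \text{for } x > 1 .
\]
For the upper bound on $\pi(v)$ even the trivial estimate $\pi(v) \leq v$ should suffice once $v$ is moderately large.

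First take $v \geq 3$ with $v^3 \geq 27 \geq 17$. Then
\[
\pi(v^3) - \pi(v) > \frac{v^3}{3\ln v} - \pi(v).
\]
It is enough to show $\frac{v^3}{3\ln v} \geq 3v$, using $\pi(v) \leq v$. This is equivalent to $v^2 \geq 9\ln v$. That inequality holds at $v=3$, since $9 \geq 9.89$ fails, so I need to be careful there. At $v=4$ it holds ($16 \geq 12.5$), and it holds for all larger $v$ because $v^2$ outgrows $\ln v$.

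The small cases are where the slack is thin, so I would handle them separately. For $v=3$, $(3, 27]$ contains $5,7,11,13,17,19,23$, which is $7 \geq 6$ primes. For the non-integer $v \in [3,4)$, $(v, v^3]$ contains the primes in $(4, 27]$, which is again 7 primes, against a requirement of $2v < 8$. I also need to check whether $v$ is meant to be an integer. In the paper's use $v = \max(m,n)$, so it is, but a uniform argument also handles real $v$: for $v \geq 4$ the bound $\frac{v^3}{3\ln v} - v \geq 2v$ follows from $v^2 \geq 9\ln v$, whose derivative check is routine.

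The only real obstacle is pinning down the exact form and range of validity of the cited explicit bound and verifying the boundary case numerically. The asymptotic content is trivial.

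Finally, I would deduce \Cref{lem:primesize} from this. Algorithm~\ref{algo:epsreduction} picks the first $2n \leq 2v$ primes exceeding $v$. By \Cref{lem:count} all of them lie in $(v, v^3]$, which gives both inequalities. The case $v < 3$ cannot occur for 3,4-SAT instances, or it can be checked directly.
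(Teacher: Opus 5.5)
Your argument is correct and is essentially the paper's proof: both combine the Rosser--Schoenfeld bound $\pi(x) > x/\ln x$ at $x = v^3$ with the trivial $\pi(v) \le v$. Your sharper reduction to $v^2 \ge 9\ln v$, which holds for all $v \ge 4$, leaves only $v = 3$ to count by hand, whereas the paper's cruder split ($v/3 \ge 3$ and $v/\ln v \ge 1$) forces it to verify $v \in \{3,\dots,8\}$ by computer.

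There is one slip in your optional aside about real $v$. For $v \in (3.5, 4)$, the seven primes in $(4, 27]$ fall short of $2v > 7$. There you should instead use $v^3 > 42$, which gives eleven primes, or note that $v^2 \ge 9\ln v$ already holds for $v \ge 3.3$. The integer case that the paper needs is unaffected.
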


\begin{proof}
We have programmatically verified that the statement is true for $v \in \{ 3, 4, 5, 6, 7, 8 \}$. 

We now handle $v \geq 9$, where the following corollary of \cite{counting} is useful.

\begin{lemma}[\citealp{counting}, Corollary 1]
\label{lem:rosser}
For any $x \geq 17$, 
$$\pi(x) > \frac{x}{\ln(x)},$$
where $\pi(x)$ is the number of primes less than or equal to $x$. 
\end{lemma}

\Cref{lem:rosser} can be considered a concrete version of the asymptotic prime number theorem \citep{hadamard1896, valleepoussin1896}. 

Since $9^3 \geq 17$, we can instantiate \Cref{lem:rosser} with $x = v^3$ to get 
$$\pi(v^3) > \frac{v^3}{\ln(v^3)} \geq \frac{v}{3} \cdot v \cdot \frac{v}{\ln(v)}.$$
Now, note that $\frac{v}{3} \geq 3$ for all $v \geq 9$ and $\frac{v}{\ln(v)} \geq 1$ for all $v > 1$, so 
$$\frac{v}{3} \cdot v \cdot \frac{v}{\ln(v)} \geq 3 \cdot v \cdot 1 = 3v.$$
This means that the number of primes $p$ satisfying $p \leq v^3$ is at least $3v$. Clearly, the number of primes satisfying $p \leq v$ is at most $v$, so the number of primes satisfying $v < p \leq v^3$ is at least $3v - v = 2v$, as desired. 
\end{proof}

Since we are considering 3-SAT formulas, $\max(m, n) \geq 3$, so we can apply \Cref{lem:count} with $v = \max(m, n)$. Then, there are at least $2 \max(m, n)$ primes strictly between $\max(m, n)$ and $\max(m, n)^3$, and Line 2 of \Cref{algo:epsreduction} constructs $2n \leq 2 \max(m, n)$ primes, so this proves the desired lemma.  
\end{proof}

\lemperiodsize*

\begin{proof}
The desired results follow if we prove that the amount by which $\tper$ is multiplied on any single iteration of the for loop on Line 3 of \Cref{algo:allowed} is between $v^{28}$ and $v^{84}$. 

We claim that the number of primes multiplied during any iteration is 28. First of all, $2 \cdot 2 = 4$ primes come from $\trep_1(x_{\tnex})^2 \cdot \trep_1(\bar{x}_{\tnex})^2$ in Line 10 of \Cref{algo:allowed}. If variable $x_i$ is in $k$ clauses, then $2 \cdot 3 \cdot k$ primes come from the fact that each clause has 3 literals in 3,4-SAT, and we multiply by the square of a $\trep_1$ on Line 10 of \Cref{algo:allowed}. Then, $2 \cdot 3 \cdot (4 - k)$ primes are added on Line 12 of \Cref{algo:allowed} (note that $k \leq 4$ since we are in 3,4-SAT) so the total is 
$$2 \cdot 2 + 2 \cdot 3 \cdot k + 2 \cdot 3 \cdot (4 - k) = 4 + 2 \cdot 3 \cdot 4 = 4 + 24 = 28,$$
as desired. By \Cref{lem:primesize}, each relevant prime (i.e. $\trep_1(x_i)$ for some $i$) is between $v$ and $v^3$. So, 
$$v^{28} \leq \frac{\tpers_i[j]}{\tpers_i[j - 1]} \leq v^{28 \cdot 3} = v^{84},$$
and a similar result holds on $\tpers_i[0]$, as desired. 
\end{proof}

\lemgreedypol*

\begin{proof}
To prove that \Cref{algo:greedy} runs in polynomial time, it is sufficient to prove that the value of $\tjob$ is bounded by a polynomial in the input size in each iteration of the for loop on Line 3 of \Cref{algo:greedy}. 

In addition, because we set $d' = d - \lfloor \tper \cdot d \rfloor/\tper$, on each iteration of the for loop on Line 3 of \Cref{algo:greedy}, $d' < 1/\tper$. In particular, for the iteration corresponding to $\tpers[i]$, $d < 1/\tpers[i - 1]$, meaning that 
$$\tjob < \frac{\tpers[i]}{\tpers[i - 1]} \leq \max(m, n)^{84}.$$
This bound holds for all but the first iteration. For the first iteration, $\tpers[0] \leq 2n \max(m, n)^{84}$, and $d \leq \frac{1}{n}$, so the relevant bound is $\tjob \leq 2n \max(m, n)^{84}/n \leq 2 \max(m, n)^{84}$. 
Since the input size is clearly at least $\max(m, n)$, this completes the proof. 
\end{proof}

\lemgreedyd*

\begin{proof}
By the process on Lines 6 to 8 of \Cref{algo:greedy}, the value of $d$ after each iteration is $d_{\text{init}} - D(\tall)$ (specifically for the value of $\tall$ after that same iteration). We need to show that $d$ equals 0 after the final iteration. This will follow if we show that $\tper \cdot d$ is an integer, so that $\lfloor \tper \cdot d \rfloor = \tper \cdot d$. By the structure of \Cref{algo:allowed}, $P = \tpers[n - 1]$ is a multiple of $\tpers[i - 1]$ for all $i \in [n]$. Since the density of $\tall$ immediately prior to the last iteration is a sum of fractions, each of which is an integer multiple of $1/P$ so, the total density is also an integer multiple of $1/P$. In addition, by assumption on the input, $d_{\text{init}}$ is an integer multiple of $1/P$. Thus, immediately before the last iteration, $d = d_{\text{init}} - \tall$ is an integer multiple of $1/P$. Then, we can see that $dP$ is an integer, as desired. 
\end{proof}

\lemwarmpol*

\begin{proof}
We have already proven that $\tpers[0]$ is polynomial in the input size and $\tpers[i - 1]/\tpers[i - 2]$ is polynomial in the input size for all $i \in \{ 3, \dots, n \}$ (in \Cref{lem:periodsize}), so the for loop on Line 3 of \Cref{algo:warm} runs in polynomial time. 

For the while loop on Line 8 of \Cref{algo:warm}, it is similar to \Cref{algo:greedy} and in fact, the same proof as \Cref{lem:greedypol} still holds, with the only change being that for the first iteration due to skipping $i = 1$, the bound on $\tjob$ is $\max(m, n)^{84} \cdot 2 \max(m, n)^{84} = 2 \max(m, n)^{168}$, and this is still polynomial in the input size. 
\end{proof}

\lemwarmd*

\begin{proof}
As noted in the proof of \Cref{lem:warmpol}, the while loop on Line 8 of \Cref{algo:warm} is similar to \Cref{algo:greedy}. In fact, the same proof holds for \Cref{algo:warm} as long as we verify the relevant preconditions. 

In particular, note that the process from Lines 5 to 7 of \Cref{algo:warm} preserves the property that after each iteration $d + D(\tall) = d_{\text{init}}$. Additionally, the allowed periods are the same between both for loops so as long as $d_{\text{init}}$ is an integer multiple of $1/P = 1/\tpers[n - 1]$ (which is true according to the assumption on input), $d$ will still be an integer multiple of $1/P$ after the last iteration of the while loop on Line 3 of \Cref{algo:warm}. Finally, we need to prove that the value of $d$ is not negative before the start of the while loop on Line 8 of \Cref{algo:warm}. Since $d_{\text{init}} \geq \frac{\tpers[0]}{n \, \tpers[1]}$, we need to prove that $D(\tall) \leq \frac{\tpers[0]}{n \, \tpers[1]}$ before the start of the while loop on Line 8 of \Cref{algo:warm}. At this point, we have 
$$D(\tall) = \sum_{i = 3}^n \frac{\tpers[0] \cdot \tpers[i - 1]}{2n \tpers[i - 2] \cdot \tpers[i - 1]} = \sum_{i = 1}^{n - 2} \frac{\tpers[0]}{\tpers[i]} = \frac{\tpers[0]}{2n} \cdot \sum_{i = 1}^{n - 2} \frac{1}{\tpers[i]}.$$
Now, note that $\tpers[i] \geq 2 \, \tpers[i - 1]$ for any $i \in [n - 1]$, due to Line 10 of \Cref{algo:allowed}. So, 
$$\sum_{i = 1}^{n - 2} \frac{1}{\tpers[i]} \leq \sum_{i = 0}^{\infty} \frac{1}{\tpers[1] \cdot 2^i} = \frac{1}{\tpers[1]} \cdot \sum_{i = 0}^{\infty} \frac{1}{2^i} = \frac{2}{\tpers[1]}.$$ 
Plugging this back in, we see that $D(\tall) \leq \frac{\tpers[0]}{n \, \tpers[1]}$, as desired. 
\end{proof}

\lempsredsched* 

\begin{proof}
Since \redps \text{} is schedulable, there is some corresponding schedule, $S$. 

By \Cref{lem:density}, \redps \text{} is a dense instance. This means that $S$ is an EPS schedule for \redps \text{} \citep{covering}. Now, if we delete the jobs that were added between Lines 3 and 10 of \Cref{algo:reduction} and their associated schedulings, we are left with an EPS schedule for \redeps, since those jobs are unaffected. Then, by \Cref{lem:epsredsat}, $C$ is satisfiable. 
\end{proof}

\lempsredsat*

\begin{proof}
By \Cref{lem:redconsatisfiable}, \redeps \text{} is schedulable, so let $S$ be a corresponding schedule. 

Applying \Cref{lem:epsperiodic}, $S$ is cyclic with a period that is a multiple of $2n$, and we can associate each congruence class modulo $2n$ with a particular literal. 

For all $i \in [n]$, let $d_{i, 1}$ denote the density of the jobs scheduled in $S_i$, the subschedule of variable $x_i$ in $S$. Recall that the subschedule of a variable is the union of the two subschedules of the corresponding (positive and negative) literals. 

Now, for all $i \in [n]$, let $d_{i, 2}$ be the density of $\text{greedy}(i, d_{\text{remaining}})$ taken from Line 6 of \Cref{algo:reduction}. Note that we will assign such jobs to the corresponding variable subschedule $S_i$. 

For all $i \in [n]$, let $d_{i, 3} = \frac{1}{n} - d_{i, 1} - d_{i, 2}$.

Now, note that if every clause job were scheduled within $S_i$, then we would have $d_{i, 3} = 0$ due to the input to the greedy algorithm on Line 6 of \Cref{algo:reduction}. A visual representation of this is also shown in \Cref{fig:greedy}. 

In addition, each clause job which occupies a subschedule must correspond to a literal which appears in that clause in the 3,4-SAT instance (see the proof of correctness for \Cref{algo:epsreduction} for details). 

Since we are solving 3,4-SAT, this means that in general, all clause jobs will not be scheduled within any single subschedule. The amount of remaining space within subschedule $S_i$ is exactly $d_{i, 3}$, and we want to fill up all the subschedules now with the warm greedy jobs. This induces the density flow graph in \Cref{fig:flow} with $c = \tsum$ and $rs_i = d_{i, 3}$ for $i \in [n]$. 

In particular, the flow capacity from $s$ to $wg_i$ represents the total density of jobs from the warm greedy step on the $i$th iteration of the for loop of \Cref{algo:reduction} being \tsum. A flow of value $f$ from $wg_i$ to $S_i$ represents assigning a set of jobs with density $f$ to be scheduled within subscheduled $S_i$ (we will explain how to do this shortly), and a flow to $S_{i + 1}$ correspondingly means assigning jobs to $S_{i + 1}$. A valid flow that saturates all the edges going out of $s$ would imply being able to assign every warm greedy job to a subschedule. We claim that such a flow exists, and prove it by explicitly constructing such a flow.  

\begin{algorithm}[H]
\caption{Flow Constructor}\label{algo:flow}
\KwIn{$d_{i, 3}$ values for $i \in [n]$}

$d_{1, 4} \gets d_{1, 3}$

$d_{1, 5} \gets \tsum - d_{1, 4}$

\For{$i = 2, 3, \dots, n - 1$}{
$d_{i, 4} \gets d_{i, 3} - d_{i - 1, 5}$

$d_{i, 5} \gets \tsum - d_{i, 4}$
}

\Return{$d_{i, 4}$ \textnormal{and} $d_{i, 5}$ \textnormal{values for} $i \in [n - 1]$}
\end{algorithm}

In particular, the corresponding flow $F$ is the one that saturates all the edges going out of $s$ and all the edges going into $t$. In addition, $F(wg_i, S_i) = d_{i, 4}$ and $F(wg_i, S_{i + 1}) = d_{i, 5}$ for all $i \in [n - 1]$. The $wg_i$ nodes satisfy flow conservation because $d_{i, 4} + d_{i, 5} = \tsum$ for all $i \in [n - 1]$. The $S_i$ nodes satisfy flow conservation because $d_{i, 4} + d_{i - 1, 5} = d_{i, 3}$ for all $i \in [n - 1]$.  Note that this holds on the boundary as well because $\sum_{i = 1}^n d_{i, 3} = \tsum \cdot (n - 1)$. We will prove that capacity constraints are not violated by induction.

\begin{lemma}
\label{lem:flowcapacity}
For any $i \in [n - 1]$, $0 \leq d_{i, 4} \leq \tsum$ and $0 \leq d_{i, 5} \leq \tsum$. 
\end{lemma}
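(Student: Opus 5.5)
Since Algorithm~\ref{algo:flow} sets $d_{i,5} = \tsum - d_{i,4}$, the two claimed bounds are equivalent: $0 \le d_{i,5} \le \tsum$ holds exactly when $0 \le d_{i,4} \le \tsum$. So I only need to bound $d_{i,4}$. My plan is to solve the recurrence of Algorithm~\ref{algo:flow} in closed form and then read both inequalities off that formula.

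\textbf{Step 1: express each $d_{k,3}$ through the clause jobs in $S_k$.} For $k \in [n]$, let $c_k \ge 0$ be the total density of clause jobs that the fixed EPS schedule $S$ of \redeps{} places in the variable subschedule $S_k$. The density $d_{k,1}$ consists of three parts: the forced jobs $b_{x_k}/\trep_2(x_k) + b_{\bar{x}_k}/\trep_2(\bar{x}_k) + 1/f_k$, which must lie in $S_k$ by the subschedule analysis in the proof of \Cref{lem:epsredsat}, and the clause jobs, of density $c_k$. By \Cref{lem:greedyd} and Line~6 of Algorithm~\ref{algo:reduction}, $d_{k,2} = \frac1n - (\text{forced density}) - \tsum$. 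Substituting both into $d_{k,3} = \frac1n - d_{k,1} - d_{k,2}$ gives
\[
  d_{k,3} = \tsum - c_k .
\]
Every clause job is scheduled in exactly one subschedule, so $\sum_{k=1}^n c_k = \tsum$.

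\textbf{Step 2: unroll the recurrence.} Combining $d_{i,4} = d_{i,3} - d_{i-1,5}$ with $d_{i-1,5} = \tsum - d_{i-1,4}$ gives $d_{i,4} = d_{i-1,4} + d_{i,3} - \tsum$. The base case is $d_{1,4} = d_{1,3}$. By induction on $i$, for every $i \in [n-1]$,
\[
  d_{i,4} = \sum_{k=1}^{i} d_{k,3} - (i-1)\,\tsum = \tsum - \sum_{k=1}^{i} c_k .
\]

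\textbf{Step 3: read off the bounds.} Since each $c_k \ge 0$, the formula gives $d_{i,4} \le \tsum$. Since the partial sum satisfies $\sum_{k\le i} c_k \le \sum_{k\le n} c_k = \tsum$, it also gives $d_{i,4} \ge 0$. Hence $0 \le d_{i,4} \le \tsum$, and therefore $0 \le d_{i,5} = \tsum - d_{i,4} \le \tsum$.

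The only step that is not routine algebra is Step~1. The main obstacle is justifying that the forced jobs and the greedy jobs account for all of $S_k$ except $\tsum - c_k$. This relies on the facts already established in the proof of \Cref{lem:epsredsat}: the rep and $f$ jobs of variable $x_k$ occupy $S_k$, and each clause job occupies a single literal subschedule. As a consistency check, the same identity yields $\sum_{k=1}^n d_{k,3} = (n-1)\tsum$, which is the boundary condition used in the flow-conservation argument.
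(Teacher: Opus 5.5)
Your proof is correct, and it takes a different route from the paper's, one that actually repairs the paper's argument.

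The paper uses a purely local induction on $i$. The base case is $0 \le d_{1,3} \le \tsum$. The inductive step combines $d_{i,4} = d_{i,3} - d_{i-1,5}$ with the hypothesis $0 \le d_{i-1,5} \le \tsum$ (and $0 \le d_{i,3} \le \tsum$) and concludes $0 \le d_{i,4} \le \tsum$. This local information does give the upper bound $d_{i,4} \le d_{i,3} \le \tsum$. For the lower bound, however, it only yields $d_{i,4} \ge -\tsum$.

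Nonnegativity genuinely needs the global fact that the clause densities across all variable subschedules sum to exactly $\tsum$. Without it, a small $d_{i,3}$ combined with a large $d_{i-1,5}$ is not excluded. Your Step~1 makes this fact explicit as $d_{k,3} = \tsum - c_k$ with $\sum_{k=1}^n c_k = \tsum$. Unrolling the recurrence to $d_{i,4} = \tsum - \sum_{k \le i} c_k$ then reduces the lower bound to the observation that a partial sum of nonnegative terms is at most the full sum.

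Your route is a little longer, but it closes the step the paper's induction leaves unjustified. It also identifies the flow explicitly: $d_{i,5} = \sum_{k \le i} c_k$ is the cumulative clause density in $S_1, \dots, S_i$. Finally, it gives the conservation condition at the last node, $d_{n-1,5} = \tsum - c_n = d_{n,3}$, directly rather than as a separate boundary check.
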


\begin{proof}
For the base case, by the property of the greedy algorithm of leaving density exactly $\tsum$, $0 \leq d_{i, 3} \leq \tsum$ for $i = 1$. 

For the inductive case, $d_{i, 4} + d_{i - 1, 5} = d_{i, 3}$ and by the inductive hypothesis, $0 \leq d_{i - 1, 5} \leq \tsum$, so $0 \leq d_{i, 4} \leq \tsum$. Then, since $d_{i, 5} = \tsum - d_{i, 4}$, the same bounds hold for $d_{i, 5}$ as well. 

By the principle of mathematical induction, this proves that capacity constraints are not violated along any edge by $F$. 
\end{proof}

\Cref{lem:flowcapacity} completes the proof that $F$ is a valid flow. This means the $d_{i, 4}$ and $d_{i, 5}$ values correspond to a valid flow, but this does not specify exactly which of the warm greedy jobs should be assigned along each edge. We start the description of that by introducing a relevant algorithm. 

\begin{algorithm}[H]
\caption{Job Combiner, $\text{combine}(n, \tall, \tpers, \tlev)$}\label{algo:combine}

\KwIn{$\tall$, $\tpers$, $\tlev \in \{ 0, 1 \}$, such that for any $p \in \tall$, we have $p \in \tpers$}

$i \gets n - 1$

\While{$i > \textnormal{\tlev}$}{
$\begin{aligned}
\tfac \gets \frac{\tpers[i]}{\tpers[i - 1]}
\end{aligned}$

$\textsc{repetitions} \gets $ the number of repetitions of the job period $\tpers[i]$ in $\tall$

$\begin{aligned}
\trem \gets \left\lfloor \frac{\textsc{repetitions}}{\tfac} \right\rfloor \cdot \tfac
\end{aligned}$

\For{$j = 1, \dots, \textnormal{\trem}$}{
Remove $\tpers[i]$ from $\tall$
}

\For{$\begin{aligned} j = 1, \dots, \frac{\textnormal{\trem}}{\textnormal{\tfac}} \end{aligned}$}{
\vspace{3pt}
Add $\tpers[i - 1]$ from $\tall$
}

$i \gets i - 1$
}

\Return{$\textnormal{\tall}$}

\end{algorithm}

We now state relevant lemmas about \Cref{algo:combine}.

\begin{lemma}
$D(\text{combine}(n, \tall, \textsc{allowed\_periods}, \tlev)) = D(\tall)$
\end{lemma}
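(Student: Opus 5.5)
The plan is to show that each individual pass of the while loop in \Cref{algo:combine} leaves the density of $\tall$ unchanged, and then induct over the passes. Fix an iteration with current index $i$, where $\tlev < i \leq n-1$. Write $\tfac = \tpers[i]/\tpers[i-1]$. First I check that $\tfac$ is a positive integer. By the structure of \Cref{algo:allowed}, each entry of $\tpers$ is obtained from the previous one by multiplying by a product of primes (exactly $28$ of them, as in \Cref{lem:periodsize}). Hence $\tpers[i-1] \mid \tpers[i]$, and \Cref{lem:periodsize} even gives $\tfac \ge v^{28} > 1$.

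Next I do the density bookkeeping for one iteration. Let $\trem = \lfloor \textsc{repetitions}/\tfac \rfloor \cdot \tfac$. This is a nonnegative multiple of $\tfac$ and is at most $\textsc{repetitions}$, so removing $\trem$ copies of $\tpers[i]$ is actually possible. The removal lowers the density by
\[
\frac{\trem}{\tpers[i]}.
\]
We then add $\trem/\tfac$ copies of $\tpers[i-1]$; this count is an integer because $\tfac \mid \trem$. The addition raises the density by
\[
\frac{\trem/\tfac}{\tpers[i-1]} = \frac{\trem}{\tfac \cdot \tpers[i-1]} = \frac{\trem}{\tpers[i]}.
\]
So the two changes cancel exactly. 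I read the line ``Add $\tpers[i-1]$ from $\tall$'' as adding the job to $\tall$, since that is clearly the intent.

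To conclude, I induct on the number of completed iterations. The loop starts at $i = n-1$ and decreases $i$ by one each time, so it stops after finitely many steps. The invariant ``$D(\tall)$ equals its initial value'' holds before the first iteration and is preserved by each one, so it holds at termination, which gives the claimed equality. I would also note that every period added is still an element of $\tpers$, so the precondition ``$p \in \tpers$ for all $p \in \tall$'' is maintained throughout. This is not needed for the density claim itself, but it keeps the loop well defined.

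The only real obstacle is making the integrality and divisibility facts explicit: that $\tpers[i-1] \mid \tpers[i]$, and that $\trem/\tfac$ is an integer. Once these are in place, the density computation is a one-line cancellation.
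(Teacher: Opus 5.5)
Your proposal is correct and matches the paper's proof. Both show that one pass of the while loop removes density $\trem/\tpers[i]$ and adds $(\trem/\tfac)/\tpers[i-1] = \trem/\tpers[i]$, then induct over the passes; your explicit checks that $\tpers[i-1] \mid \tpers[i]$ and that $\trem/\tfac$ is an integer are accurate details the paper leaves implicit.
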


\begin{proof}\label{lem:combinedensity}
This can be proven by induction, specifically, for each iteration of the while loop on Line 2 of \Cref{algo:combine}, the density of the jobs removed is $\trem/\tpers[i]$ and that of the jobs added is $(\trem/\tfac)/\tpers[i - 1]$. Since $\tfac = \tpers[i]/\tpers[i - 1]$,
$$\frac{\frac{\trem}{\tfac}}{\tpers[i - 1]} = \frac{\frac{\trem}{\tpers[i]/\tpers[i - 1]}}{\tpers[i - 1]} = \frac{\trem}{\tpers[i]},$$
as desired. 
\end{proof}

\begin{lemma}\label{lem:combinebound}
Let $A := \text{combine}(n, \tall, \textsc{allowed\_periods}, \tlev)$ and let $r$ be the number of repetitions of the job period $p = \tpers[\tlev]$ in $A$. Suppose that there are no periods smaller than $p$ in $A$. Then, $\frac{r}{p} \leq D(A) < \frac{r + 1}{p}$.
\end{lemma}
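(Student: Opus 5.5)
The plan is to understand the structure of $A$ after \Cref{algo:combine} terminates, and then bound the density contributed by everything other than the $r$ copies of $p$.

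The lower bound is immediate. All periods are positive, so the $r$ copies of period $p=\tpers[\tlev]$ alone contribute $r/p$ to $D(A)$.

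For the upper bound, I first record what the while loop on Line 2 guarantees. At iteration $i$ (running from $n-1$ down to $\tlev+1$), the loop removes $\lfloor \textsc{repetitions}/\tfac\rfloor\cdot\tfac$ copies of $\tpers[i]$ and adds only copies of $\tpers[i-1]$. Later iterations touch only smaller indices, so they never change the count of $\tpers[i]$ again. Hence in $A$ the number $c_i$ of copies of $\tpers[i]$ satisfies $c_i \le \tfac_i - 1$ for every $i>\tlev$, where $\tfac_i=\tpers[i]/\tpers[i-1]$.

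Next I use the hypothesis that $A$ has no periods smaller than $p$, together with the precondition that every period lies in $\tpers$. Since the entries of $\tpers$ are increasing, every job in $A$ has period $\tpers[i]$ for some $i\ge\tlev$. Therefore
\[
D(A)=\frac{r}{p}+\sum_{i=\tlev+1}^{n-1}\frac{c_i}{\tpers[i]}\le\frac{r}{p}+\sum_{i=\tlev+1}^{n-1}\left(\frac{1}{\tpers[i-1]}-\frac{1}{\tpers[i]}\right).
\]
The sum telescopes to $\frac{1}{\tpers[\tlev]}-\frac{1}{\tpers[n-1]}<\frac1p$. This gives $D(A)<\frac{r+1}{p}$.

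The main obstacle is the invariant that $c_i\le\tfac_i-1$ survives to the end. This requires that iteration $i$ is the last one to modify the multiplicity of $\tpers[i]$, and that the count it sees already includes every copy added at iteration $i+1$. Both facts follow from the descending order of the loop, so I will state them explicitly.

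I also need $\tfac_i$ to be an integer, so that the subtraction step $c_i/\tpers[i]\le(\tfac_i-1)/\tpers[i]=1/\tpers[i-1]-1/\tpers[i]$ is valid. This holds because $\tpers[i-1]$ divides $\tpers[i]$ by the construction in \Cref{algo:allowed}.
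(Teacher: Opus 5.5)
Your proof is correct and follows essentially the same argument as the paper. You set aside the $r$ copies of $p$, observe that each iteration $i>\tlev$ leaves fewer than $\tpers[i]/\tpers[i-1]$ copies of $\tpers[i]$, and telescope the remaining density to $1/\tpers[\tlev]-1/\tpers[n-1]<1/p$. Your explicit remarks on why the descending loop order preserves these counts, and on why $\tpers[i]/\tpers[i-1]$ is an integer, fill in details the paper leaves implicit.
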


\begin{proof}
Let $B$ be the instance that contains all the elements of $A$ except for the $r$ repetitions of the job period $p$. Since there are no periods smaller than $p$ (by assumption), all periods in $B$ must be larger than $p$. In addition, \Cref{algo:combine} preserves the invariant that at all times, the only periods within $\tall$ are also elements of $\tpers$. Then, the job periods in $B$ must be of the form $\tpers[i - 1]$ for $i \in [n]$. Since the number of removed jobs is $\lfloor \textsc{repetitions} / \tfac \rfloor \cdot \tfac$, for each $i$ the maximum number of remaining jobs is less than $\tfac = \tpers[i]/\tpers[i - 1]$. Therefore, the density of $B$ is upper-bounded by 
$$\sum_{i = \tlev + 1}^{n - 1} \frac{\frac{\tpers[i]}{\tpers[i - 1]} - 1}{\tpers[i]} = \sum_{i = \tlev + 1}^{n - 1} \frac{1}{\tpers[i - 1]} - \frac{1}{\tpers[i]}$$
$$= \frac{1}{\tpers[\tlev]} - \frac{1}{\tpers[n - 1]} < \frac{1}{\tpers[\tlev]} = \frac{1}{p}.$$
Clearly, $D(A) = D(B) + \frac{r}{p}$, so the proof that $D(B) < \frac{1}{p}$ shows that $\frac{r}{p} \leq D(A) < \frac{r + 1}{p}$, as desired. 
\end{proof}

We will use an auxiliary job list to facilitate the assignment of jobs into subschedules. In particular, define 
$$G_i = \text{greedy}(i + 1, d_{i, 5})$$
for each $i \in [n - 1]$. Note that this is different from the greedy jobs constructed during \Cref{algo:reduction} because the density parameter is different. Recall that $\tpers_i$ is the output of \Cref{algo:allowed} with $\tind = i$ and the same 3,4-SAT instance $C$. Let us define $wg_i$ as the set of jobs from the warm greedy step on the $i$th iteration of the for loop of \Cref{algo:reduction}. In reality, there will be some sublist of jobs from $wg_i$, call it $wg_{i, 1}$, which end up assigned to subschedule $S_i$, and the remaining sublist of jobs, call it $wg_{i ,2}$, which end up assigned to subschedule $S_{i + 1}$. We won't specify the lists $wg_{i, 1}$ and $wg_{i, 2}$ directly, but rather we will specify $wg'_{i, 1}$ and $wg'_{i, 2}$ which imply schedulability of $wg_{i, 1}$ and $wg_{i, 2}$. 

The following algorithm produces our split. 

\begin{algorithm}[H]
\caption{Job Splitter}\label{algo:splitter}
\KwIn{$\tpers_i$, $wg_i$, $G_i$}

\SetKwProg{Fn}{Function}{}{end}

$wg'_{i, 1} \gets ()$

$wg'_{i, 2} \gets ()$

\Fn{$\textnormal{convert}(m)$}{
    \For{\textnormal{each job period $p$ in $G_i$ for which there are at least $m$ copies of the period $pm$ in $wg_i$}}{
    
    Add $p$ to $wg'_{i, 2}$
    
    Remove $p$ from $G_i$
    
    Remove $m$ copies of $pm$ from $wg_i$
    }
}

$\textnormal{convert}(1)$

$\textnormal{convert}(\tpers_i[0]/2n)$

$\tlev \gets 1$

$wg_i \gets \text{combiner}(n, wg_i, \tpers_{i}, \tlev)$

$\textnormal{convert}(\tpers_i[0]/2n)$

$wg'_{i, 1} \gets wg_i$

\Return{$(wg'_{i, 1}, wg'_{i, 2})$}
\end{algorithm}

We will create a schedulability implication using the following partitioning property of \cite{kawamura}, which shows that for $k \in [2]$, for any pinwheel instance $A$, if $wg'_{i, k} \sqcup A$ is schedulable, then $wg_{i, k} \sqcup A$ is also schedulable. 

\begin{lemma}[\citealp{kawamura}, Lemma 3]
\label{lem:partitioning}
For any pinwheel instance $A$ and any $a, q \in \mathbb{N}$, if $A \sqcup (a)$ is schedulable, then $A \sqcup (\underbrace{aq, aq, \dots, aq}_{q \text{ times}})$ is schedulable. 
\end{lemma}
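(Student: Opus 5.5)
The plan is to take a valid schedule $\sigma$ for $A \sqcup (a)$ and modify it only at the time steps assigned to the distinguished task of period $a$. By \Cref{lem:periodic} we may assume $\sigma$ is periodic, though this is not essential. Let $T = \sigma^{-1}(\{a\text{-task}\})$ and enumerate it in increasing order as $\dots < t_{-1} < t_0 < t_1 < \dots$. If $T$ is finite, then $a=1$ would force $T=\mathbb{Z}$; otherwise, validity forces every window of length $a$ to meet $T$, so $T$ is bi-infinite with consecutive gaps at most $a$. Define a new schedule $\sigma'$ by setting $\sigma'(t_k) = $ task $(k \bmod q)+1$ among the $q$ new tasks of period $aq$, and leaving $\sigma'(t)=\sigma(t)$ for $t \notin T$. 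The tasks of $A$ are untouched, so their constraints remain satisfied.

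The remaining step is to verify the constraint for each new task $r$. I would use the gap formulation: for integer periods, the condition ``every interval of length $L$ contains at least $\lfloor L/p\rfloor$ occurrences'' is equivalent to ``every interval of $p$ consecutive integers contains an occurrence.'' One direction takes $L=p$. For the other, split an interval of length $L$ into $\lfloor L/p\rfloor$ disjoint blocks of length $p$.

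Now the occurrences of task $r$ are the $t_k$ with $k \equiv r-1 \pmod q$. Consecutive occurrences are $t_k$ and $t_{k+q}$, and
\[
t_{k+q}-t_k=\sum_{s=0}^{q-1}(t_{k+s+1}-t_{k+s})\le qa,
\]
since every gap in $T$ is at most $a$. Hence every window of $aq$ consecutive integers contains an occurrence of task $r$, and $\sigma'$ is valid for $A \sqcup (aq,\dots,aq)$.

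The main subtlety is the equivalence between the floor-count definition of validity used in the paper and the gap definition. It holds for integer periods, and here $a$ and $q$ are natural numbers, so it applies. If one instead wanted real periods, the argument would need to count occurrences in intervals directly. In that setting, an interval of length $L$ contains at least $\lfloor L/a\rfloor$ elements of $T$, and hence at least $\lfloor\lfloor L/a\rfloor/q\rfloor = \lfloor L/(aq)\rfloor$ occurrences of each task, provided the cyclic labeling guarantees that every $q$ consecutive elements of $T$ include each label once. That version also works, so I would present it as the main argument, since it avoids the equivalence step entirely.
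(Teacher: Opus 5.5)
Your proposal is correct and follows the paper's argument: you reassign the time slots of the period-$a$ task round-robin to the $q$ new tasks, so consecutive occurrences of each new task are $q$ elements of $T$ apart and hence at most $qa$ days apart. Your extra care fills in details the paper leaves implicit; this covers both the floor-count/gap equivalence for integer periods and the direct count $\lfloor \lfloor L/a\rfloor / q\rfloor = \lfloor L/(aq)\rfloor$.
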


\begin{proof}
One can delete the schedulings of the job of period $a$ from a schedule of $A \sqcup (a)$ and within the created gaps consecutively schedule the jobs of period $aq$, so that for a single job of that period, there are $q$ gaps between consecutive schedulings, corresponding to at most $aq$ days. 
\end{proof}

The proof of the following lemma will shed light on the actual $wg_{i, 1}$ and $wg_{i, 2}$ lists. 

\begin{lemma}
\label{lem:empty}
$G_i$ is an empty list immediately before \Cref{algo:splitter} terminates.
\end{lemma}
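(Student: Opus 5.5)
The plan is to track $G_i$ level by level through the three calls to $\textnormal{convert}$ in \Cref{algo:splitter}. The tool is \Cref{lem:allowedcyclic}. Write $m_i := \tpers_i[0]/2n$. Then \Cref{lem:allowedcyclic} says
\[
\tpers_i[j] = m_i \cdot \tpers_{i+1}[j-1] \qquad (1 \le j \le n-1),
\]
so a job of $G_i$ with period $p = \tpers_{i+1}[j-1]$ corresponds exactly to $m_i$ copies of the period $p\, m_i \in \tpers_i$. This is precisely what $\textnormal{convert}(m_i)$ removes. The one exception is the top level: $\tpers_{i+1}[n-1] = P = \tpers_i[n-1]$ by \Cref{lem:shared}, which is what $\textnormal{convert}(1)$ is for. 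Since $G_i = \text{greedy}(i+1, d_{i,5})$, the count of $G_i$ at level $j \ge 1$ of $\tpers_{i+1}$ is strictly less than $\tpers_{i+1}[j]/\tpers_{i+1}[j-1]$, by the invariant used in the proof of \Cref{lem:greedypol}. Only level $0$ can have a large count.

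\textbf{Step 1 (middle levels, via the warm start).} For $k = 3, \dots, n$, Lines 3--7 of \Cref{algo:warm} place
\[
\frac{\tpers_i[0]\,\tpers_i[k-1]}{2n\,\tpers_i[k-2]} = m_i \cdot \frac{\tpers_{i+1}[k-2]}{\tpers_{i+1}[k-3]}
\]
copies of period $\tpers_i[k-1] = m_i\,\tpers_{i+1}[k-2]$ into $wg_i$, using \Cref{lem:allowedcyclic} twice. With $j = k-2$, this is at least $m_i$ times the number of $G_i$ jobs at level $j$, for every $j = 1, \dots, n-2$. Hence the call $\textnormal{convert}(m_i)$ (Line 10) eliminates all $G_i$ jobs at levels $1,\dots,n-2$. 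This is exactly what the warm start was designed for.

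\textbf{Step 2 (top level $P$).} The jobs of $G_i$ with period $P$ must be removed by $\textnormal{convert}(1)$, so I must show $wg_i$ contains at least as many copies of $P$. Here I would use integrality together with \Cref{lem:greedyd} and \Cref{lem:warmd}. Both the greedy tail of $wg_i$ and $G_i$ end with a level-$P$ count equal to $P$ times a residual density below $1/\tpers[n-2]$, and the residual is determined by the input densities modulo $1/\tpers[n-2]$. I would compare $\tsum$ against $d_{i,5} = \tsum - d_{i,4}$, where $d_{i,4}$ is the density assigned to $S_i$ from the flow. If the residues do not match directly, the fallback is the following: after the \texttt{combiner} step, any leftover $P$-jobs of $G_i$ are absorbed in the final convert by a density argument, as in Step 3. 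I expect this bookkeeping to be the main obstacle. It may require showing that $d_{i,4}$ is a multiple of $1/\tpers_{i+1}[n-2]$, using the structure of the $d_{i,3}$ values, which are built from clause-job densities.

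\textbf{Step 3 (level $0$, via density).} After Steps 1--2, every remaining job of $G_i$ has period $\tpers_{i+1}[0]$; say there are $r'$ of them. Each convert removes equal densities from $G_i$ and $wg_i$. Initially $D(G_i) = d_{i,5} \le \tsum = D(wg_i)$ by \Cref{lem:greedyd}, \Cref{lem:warmd} and \Cref{lem:flowcapacity}. So after the conversions
\[
\frac{r'}{\tpers_{i+1}[0]} \le D(wg_i).
\]
Next I would apply $\text{combine}$ with $\tlev = 1$. This preserves density by \Cref{lem:combinedensity}, and the call leaves no period smaller than $\tpers_i[1]$, since level $0$ was never present in the warm-greedy output. \Cref{lem:combinebound} then gives that the number $r$ of copies of $\tpers_i[1] = m_i\,\tpers_{i+1}[0]$ satisfies $D(wg_i) < (r+1)/\tpers_i[1]$. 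Combining the two bounds yields
\[
\frac{r'}{\tpers_{i+1}[0]} < \frac{r+1}{m_i\,\tpers_{i+1}[0]},
\]
so $r' m_i < r+1$, i.e.\ $r \ge r' m_i$. Thus the final $\textnormal{convert}(m_i)$ removes all remaining jobs of $G_i$, and $G_i$ is empty when \Cref{algo:splitter} terminates.
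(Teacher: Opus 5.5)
\textbf{Gap in Step 2.} Your Steps 1 and 3 are sound and follow the paper's argument. The middle levels are handled by the warm-start batches. The level-$0$ jobs are handled by the density bound from the combiner, which the paper phrases as a contradiction; your form $D(G_i)\le D(wg_i)$, maintained because each convert removes equal density from both lists, is the precise version of that. The gap is Step 2, which you leave open, and neither route you propose closes it.

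\emph{The fallback fails.} The final $\textnormal{convert}(m_i)$ can remove a job of period $p$ from $G_i$ only by spending copies of $p\,m_i$. For $p=P$ that period exceeds $\max \tpers_i = P$, so it never occurs in $wg_i$, since the combiner only produces periods in $\tpers_i$. Any $P$-job of $G_i$ not removed by $\textnormal{convert}(1)$ therefore survives to the end. The density bound of Step 3 controls only level $0$ and says nothing about such a job.

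\emph{The residue route is unsupported.} It compares greedy tails built on two different ladders, $\tpers_i$ and $\tpers_{i+1}$. The tail of $wg_i$ need not contain any copy of $P$ at all. Nothing in the construction gives you that $d_{i,4}$ is a multiple of $1/\tpers_{i+1}[n-2]$, and you do not need it.

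\textbf{The missing idea.} The paper handles its $G_{i,1}$ by charging the top two levels of $G_i$ together against the single warm-start batch of $P$'s. Let $n_1$ and $n_2$ be the numbers of jobs of $G_i$ at levels $n-2$ and $n-1$. The greedy invariant you already stated, applied at level $j=n-1$, gives
\[
n_2<\frac{\tpers_{i+1}[n-1]}{\tpers_{i+1}[n-2]}=\frac{P}{P/m_i}=m_i .
\]
At level $n-2$ it gives $n_1\le \tpers_{i+1}[n-2]/\tpers_{i+1}[n-3]-1$. Hence
\[
n_2+m_i n_1\le (m_i-1)+m_i\Bigl(\frac{\tpers_{i+1}[n-2]}{\tpers_{i+1}[n-3]}-1\Bigr)<m_i\,\frac{\tpers_{i+1}[n-2]}{\tpers_{i+1}[n-3]} .
\]
The right-hand side is exactly the number of $P$'s placed by the $k=n$ warm-start iteration, as computed in your Step 1.

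This pooled count is what must be checked in any case. $\textnormal{convert}(1)$ runs first, and it draws on the same multiset of $P$'s that the level-$(n-2)$ part of $\textnormal{convert}(m_i)$ later needs. As written, your Step 1 silently assumes the whole warm-start batch is still available at the second call; that is justified only once this inequality is in place. With it, all $P$-jobs and all level-$(n-2)$ jobs are removed without touching the greedy tail, and your Steps 1 and 3 then complete the proof.
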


\begin{proof}
Let us define $G_{i, 1}$ as the elements of $G_i$ that originated from the last two iterations of Line 3 of \Cref{algo:greedy}. Let us define $G_{i, 2}$ analogously as the elements that originated from all but the first or last two iterations. And let us define $G_{i, 3}$ as the elements which originated from the first iteration. Note that these lists are a partition of $G_i$, and no jobs are ever added to $G_i$, so we can handle them separately. 

We will show that the $G_{i, 1}$ elements are removed on Lines 8 and 9 of \Cref{algo:splitter}, $G_{i, 2}$ elements are removed on Line 9, and $G_{i, 3}$ elements are removed on Lines 9 and 12. 

Suppose there are $n_1$ jobs of period $P$ in $\tpers_{i + 1}[n - 2]$ and $n_2$ jobs of period $\tpers_{i + 1}[n - 1]$. We know that 
$$n_1 < \frac{\tpers_{i + 1}[n - 2]}{\tpers_{i + 1}[n - 3]} \text{ and } n_2 < \frac{\tpers_{i + 1}[n - 1]}{\tpers_{i + 1}[n - 2]}$$
(see the proof of \Cref{lem:greedypol}). 

Then, we have 
$$n_1 \cdot \frac{\tpers_{i + 1}[n - 1]}{\tpers_{i + 1}[n - 2]} + n_2 \leq \left( \frac{\tpers_{i + 1}[n - 2]}{\tpers_{i + 1}[n - 3]} - 1 \right) \cdot \frac{\tpers_{i + 1}[n - 1]}{\tpers_{i + 1}[n - 2]} + \left( \frac{\tpers_{i + 1}[n - 1]}{\tpers_{i + 1}[n - 2]} - 1 \right)$$
$$\leq \left( \frac{\tpers_{i + 1}[n - 2]}{\tpers_{i + 1}[n - 3]} \right) \cdot \frac{\tpers_{i + 1}[n - 1]}{\tpers_{i + 1}[n - 2]} = \frac{\tpers_{i + 1}[n - 1]}{\tpers_{i + 1}[n - 3]}$$

Applying \Cref{lem:shared} and \Cref{lem:allowedcyclic}, we have that
$$\frac{\tpers_{i + 1}[n - 1]}{\tpers_{i + 1}[n - 3]} = \frac{\tpers_i[n - 1]}{\frac{2n \, \tpers_i[n - 2]}{\tpers_i[0]}} = \frac{\tpers_i[0] \, \tpers_i[n - 1]}{2n \, \tpers_i[n - 2]}$$

Furthermore, on the last iteration of the for loop on Line 3 of \Cref{algo:warm} (which was used to produce $wg_i$), the number of jobs produced with period $\tpers_i[n - 1]$ is 
$$\frac{\tpers_i[0] \, \tpers_i[n - 1]}{2n \, \tpers_i[n - 2]}$$

This was exactly the upper bound on $n_1 \cdot \tpers_{i + 1}[n - 1]/\tpers_{i + 1}[n - 2] + n_2$. Since $\tpers_i[n - 1]$ can be converted to $\tpers_{i + 1}[n - 2]$ at a $(\tpers_{i + 1}[n - 1]/\tpers_{i + 1}[n - 2])$ to 1 rate (on Line 9 of \Cref{algo:splitter}), and $\tpers_i[n - 1]$ can be converted to $\tpers_{i + 1}[n]$ at a 1 to 1 rate (on Line 8 of \Cref{algo:splitter}), this proves that the jobs generated by the for loop on Line 3 of \Cref{algo:warm} within $wg_i$ can be used to delete every element of $G_{i, 1}$. 

The proof for the elements of $G_{i, 2}$ is essentially the same, but without the complication of the boundary condition with $\tpers_{i + 1}[n - 1]$. 

In particular, for any $j \in [n - 3]$, the number of jobs of period $\tpers_i[j + 1]$ that would be needed to eliminate all the occurrences of $\tpers_{i + 1}[j]$ is at most 
$$\frac{\tpers_i[0] \, \tpers_i[j + 1]}{2n \, \tpers_i[j]}$$
This is due to \Cref{lem:greedypol}, \Cref{lem:shared}, and \Cref{lem:allowedcyclic}, as before (for $G_{i, 1}$ we have $j = n - 2$). 

The for loop on Line 3 of \Cref{algo:warm} on iteration $j$ produces 
$$\frac{\tpers_i[0] \cdot \tpers_i[j + 1]}{2n \, \tpers_i[j]}$$
copies of $\tpers_i[j + 1]$. 
So, there are enough occurrences of $\tpers_i[j + 1]$ in $wg_i$ for Line 9 of \Cref{algo:splitter} to eliminate all jobs of the form $\tpers_{i + 1}[j]$ from $G_i$. 

We are left with $G_{i, 3}$, which is composed entirely of jobs that have period $\tpers_{i + 1}[0]$. 

Suppose, for the sake of contradiction, that at least one job of period $\tpers_{i + 1}[0]$ remains in $G_i$ immediately before \Cref{algo:splitter} terminates. 

Note that \Cref{algo:combine} preserves density by \Cref{lem:combinedensity} and \Cref{algo:splitter} also preserves density in the sense that $D(G_i) = D(wg_i)$ after each loop iteration of our algorithm. This is because for a single iteration of the while loop on Line 4, the density removed from $G_i$ is $\frac{1}{p}$ and that from $wg_i$ is $\frac{m}{pm} = \frac{1}{p}$ as well. 

Then, by assumption, immediately before \Cref{algo:splitter} terminates, $D(wg_i) \geq \frac{1}{\tpers_{i + 1}[0]}$. Since at this point $wg_i$ contains no instances of $\tpers_i[0]$ (by construction of \Cref{algo:warm}) and less than $\tpers_i[0]/2n$ repetitions of $\tpers_i[1]$ (otherwise Line 12 of \Cref{algo:splitter} would have continued running), and no jobs are added to $wg_i$ , by \Cref{lem:combinebound},
$$D(wg_i) < \frac{(\tpers_i[0])/2n - 1}{\tpers_i[1]} + \frac{1}{\tpers_i[1]} = \frac{\tpers_i[0])}{2n \, \tpers_i[1]} = \frac{1}{\tpers_{i + 1}[0]}$$
where the last equation comes from applying \Cref{lem:allowedcyclic}. Now, $D(wg_i) < \frac{1}{\tpers_{i + 1}[0]}$ contradicts $D(wg_i) \geq \frac{1}{\tpers_{i + 1}[0]}$, completing the case of $G_{i, 3}$, and with it, the proof. 
\end{proof}

Given \Cref{lem:empty}, we can now see the relationship between $wg_{i, k}$ and $wg'_{i, k}$ for $k \in [2]$. In particular, the $wg_{i, k}$ periods are the corresponding original periods in $wg_i$ before the transformations (i.e., dividing the period by $\tpers_i[0]/2n$, applying the combine algorithm) which resulting in the jobs becoming those of $wg'_{i, k}$. By the contrapositive of \Cref{lem:partitioning} \citep{kawamura} (applied inductively to \Cref{algo:combine}), each of these transformations is unschedulability preserving, so we only need to show schedulability with the job lists $wg'_{i, 1}$ and $wg'_{i, 2}$ to prove schedulability of $wg_{i, 1}$ and $wg_{i, 2}$. 

\begin{lemma}
\label{lem:warmperiods}
Every job in $wg'_{i, 1}$ has a period contained within $\tpers_i$, and every job in $wg'_{i, 2}$ has a period contained within $\tpers_{i + 1}$ 
\end{lemma}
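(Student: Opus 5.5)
The plan is to track, through each step of \Cref{algo:splitter}, which periods can appear in the three lists $wg_i$, $G_i$ and $wg'_{i,2}$. We maintain the invariants as we go.

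\textbf{Starting point for $G_i$.} Since $G_i = \text{greedy}(i+1, d_{i,5})$, every element of $G_i$ lies in $\tpers_{i+1}$ by construction of \Cref{algo:greedy}. This holds at initialization, and jobs are only ever removed from $G_i$. Every job added to $wg'_{i,2}$ is taken directly from $G_i$ (Line 5 of the convert function). So every element of $wg'_{i,2}$ is a period $p \in \tpers_{i+1}$, which gives the second claim immediately. The only thing to check is that the convert function never alters $p$ before it is added, which it does not.

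\textbf{Starting point for $wg_i$.} By \Cref{algo:allowed} and \Cref{algo:warm}, every element of $wg_i$ initially has the form $\tpers_i[j]$ for some $j \in \{1,\dots,n-1\}$. Both for loops of \Cref{algo:warm} only add $\tpers[i-1]$ with $i \geq 2$.

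\textbf{Operations on $wg_i$.} The calls to convert only remove elements from $wg_i$, so they cannot introduce a period outside $\tpers_i$. We then apply \Cref{algo:combine} with $\tlev = 1$. Its precondition is that every period in $\tall$ lies in $\tpers$, which holds by the previous step. Each iteration removes copies of $\tpers_i[k]$ and adds copies of $\tpers_i[k-1]$ with $k-1 \geq \tlev = 1$. Hence, by induction on the iterations of its while loop, the combiner preserves the invariant that all periods lie in $\tpers_i$; indeed they lie in $\tpers_i[1..n-1]$. The final convert call again only removes elements. Therefore $wg'_{i,1}$, which is the final $wg_i$, has all its periods in $\tpers_i$.

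\textbf{Main obstacle.} The only subtle point is making sure that no step rescales periods in $wg_i$ itself. The textual description mentions ``dividing the period by $\tpers_i[0]/2n$,'' but in the algorithm this division only appears as a matching rule: convert with $m = \tpers_i[0]/2n$ removes copies of $pm$ from $wg_i$ and records $p$ in $wg'_{i,2}$, with $p$ taken from $G_i$. We would state this explicitly, together with the implicit well-definedness fact that the ``Add $\tpers[i-1]$'' line of \Cref{algo:combine} adds to, rather than removes from, $\tall$. With that read correctly, the lemma is a straightforward invariant argument, and the combiner step with $\tlev = 1$ is the one place where an index bound needs checking.
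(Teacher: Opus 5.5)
Your proposal is correct and uses the same invariant argument as the paper. Every period moved into $wg'_{i,2}$ is taken unchanged from $G_i \subseteq \tpers_{i+1}$, and $wg_i$ starts inside $\tpers_i$, is only shrunk by the convert calls, and is kept inside $\tpers_i$ by \Cref{algo:combine}. Your version is also slightly more accurate than the paper's proof, which attributes $G_i$ to \Cref{algo:warm} instead of \Cref{algo:greedy} and cites the convert lines as belonging to \Cref{algo:combine} instead of \Cref{algo:splitter}.
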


\begin{proof}
For the $wg'_{i, 1}$ part, note that $wg_i$ starts off with job periods that are all contained within $\tpers_i$ and  \Cref{algo:combine} preserves the invariant that the jobs are contained within $\tpers_i$. In addition, no jobs are ever added to $wg_i$ besides Line 11 where \Cref{algo:combine} only produces jobs of period in $\tpers_i$. Therefore, when the remaining jobs are assigned to $wg'_{i, 1}$, the periods are all elements of $\tpers_i$. 

For the $wg'_{i, 2}$ part, note that for any addition to $wg'_{i, 2}$ (on Lines 8, 9, and 12 of \Cref{algo:combine}), the period added is a period contained within $G_i$. Recalling the definition of $G_i$ (specifically that it calls \Cref{algo:warm} with $\tind = i + 1$), by the structure of \Cref{algo:warm}, every job period within $G_i$ is an element of $\tpers_{i + 1}$, as desired. 
\end{proof}

Now, consider all the jobs assigned in the way described to variable subschedule $S_i$. In particular, we have $g_i$, from Line 6 of \Cref{algo:reduction} (different from $G_i$), $wg'_{i - 1, 2}$, and $wg'_{i, 1}$ (we don't have both the $wg'$ values on the boundaries, but the following argument still holds). Let us define $J'_i := g_i \sqcup wg'_{i, 1} \sqcup wg'_{i - 1, 2}$. We claim that every job within $J'_i$ has a period that is contained in $\tpers_i$. This holds for the jobs from $g_i$ since \Cref{algo:warm} only produces jobs with periods in $\tpers_{\tind}$. For the jobs from $wg'_{i, 1}$ and  $wg'_{i - 1, 2}$, the claim holds due to \Cref{lem:warmperiods}. 

For any $i \in [n]$, let $J_i$ be the list of jobs that occupy the subschedule of variable $x_i$ in $S$, the guaranteed schedule from the beginning of this proof (these jobs in $J_i$ are from \redeps). In addition, let us define $J''_i = (J_i/n) \sqcup (J'_i/n)$. Dividing an instance by $n$ denotes dividing every period within that instance by $n$. Note that every job period within $J_i$ and $J'_i$ is a multiple of $2n$ due to the structure of \Cref{algo:epsreduction}, \Cref{algo:greedy}, and \Cref{algo:warm}.

We claim that $J''_i$ is schedulable. In particular, note that $J_i/n$ admits an EPS schedule, because we can simply take the EPS schedule $S$ and remove any day that is not contained within subschedule $S_i$. Recall that every job period in $J'_i$ is an element of $\tpers_i$. If the elements of $J'_i/n$ are sorted in increasing order, then it is true that $(J'_i/n)_j$ divides $(J'_i/n)_k$ for any $1 \leq j \leq k \leq |J'_i|$. This is because the corresponding statement holds for $\tpers_i$, by the multiplicative structure of \Cref{algo:allowed}. In addition, every job period within $J'_i/n$ is a multiple of the LCM of the job periods within $J_i/n$ by the construction of $\tpers_i[0]$ (recall \Cref{eq:lcm}). Because the flow graph of \Cref{fig:flow} was saturated, and job lists of the corresponding density were placed within $wg'_{i, 1}$ and $wg'_{i, 2}$ (as discussed previously), $D(J_i) + (J'_i) = \frac{1}{n}$, so $D(J''_i) = 1$. Finally, this means that \Cref{lem:epsfill} applies with $A = J_i/n$ and $B = J'_i/n$, and we have that $J''_i$ is schedulable. 

Let $S'_i$ be a schedule of $J''_i$, which is guaranteed to exist by the discussion above. Then, we can replace subschedule $S_i$ in $S$ with $S'_i$ by consecutively placing jobs of $S'_i$ within the locations corresponding to $S_i$. In particular, $S_i$ contains 2 days out of every $2n$ days in $S$ (by the definition of variable subschedules), so for a period $p \in$ \redps\, that ends up assigned to subschedule $S_i$, it corresponds to $p/n$ in $J''_i$, and since all such $p$ are multiples of $2n$ (as discussed previously), consecutive instances being apart by exactly $p/n$ in $J''_i$ corresponds to consecutive instances being apart by exactly $(p/n) \cdot (2n/2) = p$ in $S$. This transformation can be applied for all $i \in [n]$ to produce a schedule $S'$ that correctly schedules all jobs from \redeps\,and all jobs assigned to each subschedule. Since every job added to \redps\,between Lines 3 and 10 of \Cref{algo:reduction} is assigned to some subschedule, this completes the proof. 
\end{proof}